\definecolor{navyblue}{rgb}{0.0, 0.0, 0.5}
\definecolor{LightPink}{rgb}{0.858, 0.188, 0.478}
\newtheorem{theorem}{Theorem}
\newtheorem{definition}[theorem]{Definition}
\newtheorem{corollary}[theorem]{Corollary}
\newtheorem{lemma}[theorem]{Lemma}
\newtheorem{algorithm}[theorem]{Algorithm}
\newcommand{\pfail}{p_{\text{fail}}}
\DeclareMathOperator{\poly}{poly}
\title{Polynomial-Time Classical Simulation of Noisy IQP Circuits with Constant Depth}
\date{}
\author[1,2]{\href{https://orcid.org/0000-0001-6365-8238}{Joel~Rajakumar}}
\author[1,2]{\href{https://orcid.org/0000-0002-6077-4898}{James~D.~Watson}}
\author[1,3]{\href{https://orcid.org/0000-0001-7458-4721}{Yi-Kai Liu}}
\affil[1]{\normalsize  Joint Center for Quantum Information \& Computer Science, 
	National Institute of Standards \& Technology 
	\authorcr and University of Maryland, 
	College Park }
\affil[2]{\normalsize Department of Computer Science and Institute for Advanced Computer Studies,
	\authorcr 
	University of Maryland, College Park}
\affil[3]{\normalsize Applied and Computational Mathematics Division, National Institute of Standards and Technology (NIST)}
\begin{document}
	
	{\begingroup
		\hypersetup{urlcolor=navyblue}
		\maketitle
		\endgroup}
	
	\begin{abstract}
		Sampling from the output distributions of quantum computations comprising only commuting gates, known as instantaneous quantum polynomial (IQP) computations, is believed to be intractable for classical computers, and hence this task has become a leading candidate for testing the capabilities of quantum devices.
		Here we demonstrate that for an arbitrary IQP circuit undergoing dephasing or depolarizing noise, whose depth is greater than a critical $O(1)$ threshold, the output distribution can be efficiently sampled by a classical computer. 
		Unlike other simulation algorithms for quantum supremacy tasks, we do not require assumptions on the circuit's architecture, on anti-concentration properties, nor do we require $\Omega(\log(n))$ circuit depth.
		We take advantage of the fact that IQP circuits have deep sections of diagonal gates, which allows the noise to build up predictably and induce a large-scale breakdown of entanglement within the circuit.
		Our results suggest that quantum supremacy experiments based on IQP circuits may be more susceptible to classical simulation than previously thought. Our results also imply that fault-tolerance within the IQP framework cannot be achieved past a critical circuit depth.
  Furthermore, we show that the critical depth threshold of our algorithm is tight, and below this threshold there are noisy IQP circuits which are hard to sample from.
  Thus we demonstrate that noisy IQP circuits exhibit a phase transition in the computational complexity of sampling, as circuit depth is increased.

	\end{abstract}
	
	\section{Introduction}
	
	The field of quantum computation is based on the practical expectation that quantum computers will deliver large speed-ups compared to classical computers on important tasks.
	However, in most cases, implementing these algorithms is only expected to be feasible on a fault-tolerant quantum computer which is beyond the capabilities of current quantum devices.
	With the aim of experimentally demonstrating a quantum speed-up without error correction, a range of computational tasks --- so called ``quantum supremacy'' experiments --- have been devised which are expected to be difficult for a classical computer to solve, but which are tractable for a quantum computer. 
	Notably these include random circuit sampling, IQP circuit sampling, boson sampling, and many others\footnote{\cite{bremner2011classical, aaronson2011computational, lund2014boson, bouland2017complexity, chabaud2017continuous, boixo2018characterizing, martinis2018quantum}. See \cite{hangleiter2020sampling, hangleiter2206computational} for a more comprehensive reviews.}.
	Many of these protocols have been implemented on a variety of hardware on a scale approaching or achieving quantum supremacy\footnote{\cite{arute2019quantum, zhong2020quantum, wu2021strong, zhu2022quantum, madsen2022quantum, bluvstein2023logical}}.
	
	On the flip side, it has been argued that many of these implementations fall short of demonstrating ``quantum supremacy'' by showing that the quantum computation performed is actually simulatable (or spoof-able) using currently available classical computers. 
	In the process, many improvements have been made to simulation techniques, both of quantum circuits generally and of the particular supremacy experiments \cite{barak2020spoofing, villalonga2020establishing, chabaud2021classical, napp2022efficient, Aharonov_2023, maslov2024fast}.
	These simulation techniques are not only of interest for their practical results or relation to quantum supremacy experiments, but also because many of them elucidate the mechanism through which quantum circuits may become easy or hard to simulate, thus giving us insight into the question of where the power of quantum computers comes from.

	One of the most popular suggestions for demonstrating quantum supremacy, which we will focus on in this work, is the task of sampling from  Instantaneous Quantum Polynomial (IQP) circuits.
	IQP circuits, first introduced in \cite{jordan2009permutational, shepherd2010quantum} are
	circuits in which the gates commute, and so there is no notion of a temporal order to which gates are applied first.
	Such circuits are known to be classically hard to exactly sample from assuming the polynomial hierarchy does not collapse \cite{bremner2011classical}, and hard to approximately sample from assuming complexity-theoretic conjectures \cite{bremner2016average}.
	Moreover, sampling IQP circuits can be shown to be equivalent to sampling from the output of certain time-evolved Hamiltonians \cite{gao2017quantum, bermejo2018architectures}.
	The popularity of IQP sampling task is partly due to the fact that IQP circuits do not require a universal gate set, and thus are easier to practically implement, and some such experiments have been implemented \cite{bluvstein2023logical}.

	A key problem in the practical implementation of all quantum supremacy tests --- using IQP circuits or otherwise --- is that often the device we have access to is noisy.
	Intuitively, noise destroys the quantum properties of quantum computers, rendering them more easily classically simulatable.
	\citeauthor{Bremner_2017} \cite{Bremner_2017} show that IQP circuits can be approximately sampled from, assuming anticoncentration of circuit output distributions and a single layer of bit-flip noise before measurement, by exploiting a Fourier path method. \citeauthor{gao2018efficient} \cite{gao2018efficient} and \citeauthor{Aharonov_2023} \cite{Aharonov_2023} extend these techniques to demonstrate that in the presence of noise, random circuit sampling becomes classically simulatable assuming anticoncentration.
	The anticoncentration requirement comes up in proofs of hardness and easiness, and it is a statement about the `flatness' of the output distribution on average, when the quantum circuit is chosen randomly from some ensemble (see \cite{hangleiter2206computational} for details on this property). For example, \cite{Aharonov_2023} show that, assuming random quantum circuits anticoncentrate, there is a classical algorithm that samples from an $\epsilon$-approximation to the random quantum circuit's output distribution, with probability $1-\delta$, and runtime $\poly(n,1/\epsilon,1/\delta)$.

	In this work, we demonstrate a polynomial-time classical simulation algorithm for sampling from the output distribution of any IQP circuit beyond an $O(1)$ depth, in the presence of dephasing or depolarizing noise interspersed between gates.
	Importantly, our algorithm requires no assumption about the output distribution or distribution from which the IQP circuit is selected, and instead applies to every IQP circuit constructed from gates involving $O(1)$ qubits. 
	
	Our methods exploit the fact that Pauli noise inflicted on an IQP circuit has the effect of removing entanglement locally. 
	This allows us to break down our circuit into small subcircuits, each of which is not connected to the other subcircuits by entangling gates.
	We thus only have to simulate these smaller subcircuits disjointly, while the gates between them can be simulated classically.
	To show this, we employ results from graph percolation theory and concentration of measure.

	The onset of classical simulatability at $\Omega(1)$ depth due to noise has been observed recently in a different setting of computing expectation values for optimization problems \cite{dsfranca}. 
	To our knowledge, our results are the first to demonstrate the onset of classical simulatability at $\Omega(1)$ depth due to noise in the setting of sampling from the output distribution (for circuits whose noiseless implementations are thought to be hard to sample from). 

    Our sampling algorithm is efficient when the circuit depth is larger than a critical threshold which scales as $O(p^{-1}\log(p^{-1}))$, where $p$ is the strength of the noise.
    We furthermore show that this bound is tight by demonstrating the existence of hard-to-sample noisy IQP circuits (up to complexity-theoretic assumptions) at depths of $\Theta(p^{-1}\log(p^{-1}))$. To construct these circuits, we use existing results on fault-tolerance in low-depth IQP circuits  \cite{fujii2016computational}\cite{Bremner_2017}, which we describe in \cref{Sec: Tightness}.
    This demonstrates an asymptotically sharp phase transition in the complexity of exact sampling from the output distribution of a noisy IQP circuit at constant depth.
	
	Our results also place limitations on error mitigation techniques to achieve quantum advantage with noisy IQP circuits, which we discuss in \cref{Sec: Fault-Tolerance} and \cref{Sec: Anticoncentration}. Broadly, we show that fault-tolerant implementations of IQP circuits with $O(1)$-local operations of super-constant depth must deviate from the IQP framework. Some ways to do this include intermediate measurement with feed-forward, introducing non-diagonal gates, or supplying fresh qubits. For example, \cite{paletta2023robust} describes an implementation of IQP circuits that shows robustness to noise by performing a single step of intermediate measurement with feed-forward. Similarly, \cite{bluvstein2023logical} implement logical IQP circuits augmented with CNOTs.

	\section{Preliminaries and Notation}
	IQP circuits involve the application of a unitary which is diagonal in the computational basis on qubits initialized in the $\ket{+}$ state, followed by measurement in the Hadamard basis. The diagonal unitary is canonically constructed with gate sets such as $\{T,CS\}$, $\{e^{it Z},e^{it ZZ}\}$, or $\{Z,CZ,CCZ\}$ \cite{bremner2016average}. 
	Our results hold generally for any diagonal gate set and connectivity as long as each diagonal gate acts on $O(1)$ qubits. In our notation, we represent an arbitrary diagonal gate using CPTP map $\mathcal{D}$, and associated diagonal unitary matrix $D$, where $\mathcal{D}(\rho) = D\rho D^{\dagger}$. Our results also apply to IQP circuits augmented with SWAP gates (e.g. for implementations on architectures with restricted connectivity), but we omit discussion of this case as it follows straightforwardly from our results.
	\begin{figure}[H]
		\centering
		\input{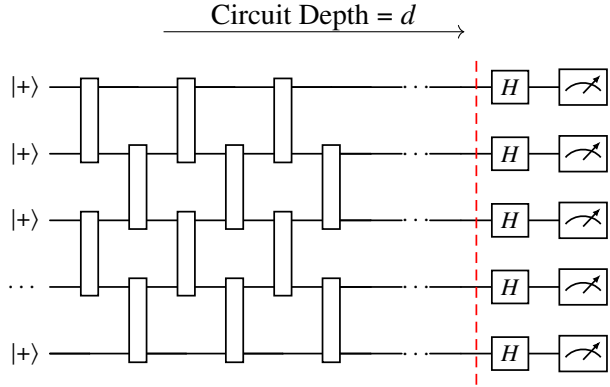}
		\caption{A generic IQP circuit on a 1D architecture involving $d$ layers of diagonal gates}
		\label{fig:IQP_Circuit}
	\end{figure}
	In this work, we consider noisy implementations of the diagonal portion of the IQP circuit, where independent Pauli noise channels are introduced on each qubit after each diagonal gate. Each interspersed Pauli noise channel can be expressed in the general form,
	\begin{align}
		\mathcal{N}_{p_X,p_Y,p_Z}(\rho) = p_I \rho + p_X X \rho X + p_Y Y \rho Y + p_Z Z\rho Z
	\end{align}
	where $p_I,p_X,p_Y,p_Z$ are non-negative probabilities that sum to 1, and it can be assumed that $p_I \geq 0.5$. We will often use $\mathcal{N}$ to refer to a Pauli noise channel of this form when the parameters are not relevant.

	A (potentially noisy) IQP circuit $C$ on qubits $1,\ldots,n$ is specified by a list of diagonal unitary channels (gates) and/or Pauli noise channels of known parameters that act on qubits in $1,\ldots,n$ in a fixed temporal order. Note we do not include Hadamard gates (or noise channels after the Hadamard gate) in the description of an IQP circuit --- instead, they are considered part of the `Hadamard-basis' measurement which occurs after the circuit $C$ is applied. 
	We use $\Phi_C$ to denote the CPTP map representing the action of (potentially noisy) circuit $C$ and $P_C$ to denote the output distribution over length-$n$ bitstrings obtained by Hadamard-basis measurement on the state $\Phi_C(\ketbra{+}^{\otimes n})$.
	That is, for a bitstring $b\in \{0,1\}^n$, we have:
	\begin{align*}
		\mathbf{P}_{X\sim P_{C} }(X=b) = \tr\left[ H^{\dagger \otimes n} \ketbra{b} H^{\otimes n} \Phi_C(\ketbra{+}^{\otimes n})\right].
	\end{align*}
	
	We also use subscripts on channels and density matrices to indicate which subsystems they apply to. In the case that the density matrix refers to a larger system, then we use the subscript to refer to the reduced density matrix on the substem indicated by the subscript (e.g. $\rho_A = \tr_A(\rho)$).

	\section{Classical Simulation of Noisy IQP circuits above Critical Depth}

	\begin{theorem} \label{Theorem:Main}
		Suppose $C$ is an IQP circuit containing $k$-local diagonal gates of depth $d$ on $n$ qubits. Let $\tilde{C}$ denote the noisy implementation of $C$, where each layer is interspersed with identical Pauli noise channels $\mathcal{N}_{p_X,p_Y,p_Z}$ on every qubit. Let $p = p_Z + \min(p_X,p_Y)$. 
		There exists a constant depth threshold $d_c \leq O(p^{-1}\log(kp^{-1}))$ such that when $d \geq d_c$, there exists a randomized classical algorithm that exactly samples from $P_{\tilde{C}}$ with random runtime $T$ of expected value,
		\begin{align*}
			\mathbf{E} [T] \leq O(dn^5).
		\end{align*}
	\end{theorem}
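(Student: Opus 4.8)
The plan is to show that dephasing/depolarizing noise interspersed between deep sections of diagonal gates forces a ``disentanglement'' of the IQP circuit into weakly-coupled small pieces, and then simulate those pieces independently. The first step is to understand the effect of noise on a single qubit line that passes through many consecutive diagonal gates. Here one should exploit that $Z$-type errors commute through diagonal gates, so along a given qubit wire the accumulated $Z$-noise simply compounds; combined with the $\min(p_X,p_Y)$ contribution (an $X$ or $Y$ error followed later by another of the same type also leaves a residual $Z$-like or dephasing effect), one gets that after $d$ layers each qubit experiences an \emph{effective} dephasing channel of strength approaching $1/2$ exponentially fast in $d$, at rate governed by $p = p_Z + \min(p_X,p_Y)$. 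Concretely, once $d \gtrsim p^{-1}\log(kp^{-1})$, each qubit wire is, with high ``probability'' in the Pauli-error sense, completely dephased at some layer — i.e. the off-diagonal coherence is destroyed.

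The second step converts this per-wire statement into a statement about entanglement structure. A key observation is that a diagonal two-qubit gate $D$ acting across a cut creates entanglement only through the relative phase it imparts between computational basis states; if one of the two qubits entering that gate has already been fully dephased, the gate acts (on the reduced state) like a stochastic classical operation and transmits no entanglement across the cut. So I would model the circuit as a graph: vertices are (qubit, layer) sites or the $O(1)$-size gate blocks, edges are entangling gates, and I ``delete'' an edge whenever one of its endpoints sits downstream of a dephasing event on that wire. Since each wire gets dephased independently with high probability after $\Theta(p^{-1}\log(kp^{-1}))$ layers, deleting edges this way is essentially an independent bond/site percolation process on the circuit's interaction graph with deletion probability that can be pushed below the percolation threshold $p_c$ of that graph (any bounded-degree graph has $p_c$ bounded away from $0$; here degree is $O(1)$ because gates are $k$-local with $k = O(1)$). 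Below threshold, the surviving connected components are all of size $O(\log n)$ with high probability, and in expectation the total work is controlled.

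The third step is the actual sampling algorithm and its runtime. Condition on the random Pauli errors (sampled classically, which is efficient since they are product channels): this reduces $\tilde C$ to a fixed Clifford-error-dressed diagonal circuit whose interaction graph has only small components. Simulate each component by brute force — a component on $m$ qubits costs $2^{O(m)} = \poly(n)$ since $m = O(\log n)$ — and stitch the marginal samples together, using the fact that across deleted edges the joint distribution factorizes (conditioned on the classical error pattern and on already-sampled ``boundary'' bits). The expected runtime bound $O(dn^5)$ then comes from: $O(dn)$ error locations to sample, times the expected cost of the largest component (a constant-probability tail of $\poly(n)$-size components, with the $n^5$ absorbing the exponential-in-$\log n$ blowup and the bookkeeping of conditioning on boundaries). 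I would carry out: (i) single-wire dephasing lemma; (ii) ``dephased endpoint kills the edge'' lemma for diagonal gates; (iii) reduction to subcritical percolation on a bounded-degree graph and the $O(\log n)$ component-size bound (Chernoff / standard percolation tail estimates); (iv) description of the conditional-sampling algorithm and its correctness; (v) the runtime computation.

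\textbf{Main obstacle.} The delicate part is step (ii) together with making the percolation reduction fully rigorous: it is not literally true that a single $Z$-error ``cuts'' a wire — what is true is that \emph{enough} accumulated dephasing makes the reduced state across a cut separable \emph{up to small error}, and one must track how these small errors propagate and compound across many cuts without blowing up, while simultaneously arguing the resulting edge-deletions are independent (or at least stochastically dominated by an independent subcritical process) so that percolation theory applies. Controlling the interplay between the ``approximate'' disentanglement at each cut and the combinatorics of the component decomposition — and showing the total error stays $o(1)$, or better, that the sampling can be made \emph{exact} by a suitable coupling — is where the real work lies, and it is presumably why the threshold carries the extra $\log(kp^{-1})$ factor (one needs the per-wire dephasing to be not just large but close enough to maximal that the union bound over all $O(dn)$ gates still goes through).
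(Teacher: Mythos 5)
Your high-level picture (noise removes entanglement, the circuit percolates into small components, simulate each component by brute force) is indeed the paper's strategy, but your proposal has a genuine gap exactly where you flag it, and the paper's resolution is a specific idea you are missing. The theorem demands \emph{exact} sampling, and the paper gets it not by arguing that accumulated noise makes states ``approximately dephased'' or ``separable up to small error,'' but by an exact convex decomposition of each single noise channel: $\mathcal{N}_{p_X,p_Y,p_Z} = (1-2p)\,\mathcal{N}_1 + 2p\,\mathcal{N}_2\circ\mathcal{N}_{0,0,1/2}$, where $\mathcal{N}_{0,0,1/2}$ is the \emph{completely} dephasing channel and $\mathcal{N}_1,\mathcal{N}_2$ are bit-flip-type channels (this is where $p = p_Z + \min(p_X,p_Y)$ enters). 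The algorithm samples, independently for each noise location, which branch of the mixture occurred; in the dephasing branch the error commutes through the diagonal circuit to the input, where $\mathcal{N}_{0,0,1/2}(\ketbra{+}) = I/2$ is an exact uniform mixture of computational basis states, and then diagonal gates provably create \emph{zero} entanglement with that qubit (they reduce exactly to diagonal gates on the remaining qubits). There is no error to track, no approximation to propagate across cuts, and the vertex-deletion events are exactly independent Bernoulli variables with retention probability $(1-2p)^d$ per qubit — so standard vertex-percolation tail bounds apply rigorously. This decomposition is precisely the ``suitable coupling'' you say is the real work and leave unresolved, so as written your argument only yields an approximate sampler with uncontrolled error accumulation.

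Two further points are off. First, the interaction graph the paper percolates on (vertices = qubits, edges = entangling gates anywhere in the circuit) does \emph{not} have $O(1)$ degree: its maximum degree is $\Delta \le (k-1)d$, growing with depth. The threshold $d_c = O(p^{-1}\log(kp^{-1}))$ arises from the competition $(1-2p)^d < 1/((k-1)d)$ — exponential decay of the retention probability versus only inverse-linear decay of the percolation threshold — not, as you suggest, from needing per-wire dephasing ``close to maximal'' plus a union bound over gates. Second, the expected-runtime bound needs more than ``components are $O(\log n)$ whp'': since a component of size $x$ costs $\sim 2^x$, one must integrate $2^x$ against an exponential tail $\mathbf{P}(|V_i|>x)\le e^{-x\,c_{p,k}(d)}$ and require $c_{p,k}(d)\ge \log 2$, which is what defines $d_c$ (slightly above the bare percolation threshold $d^*$) and yields $\mathbf{E}[T]\le O(dn^5)$; a ``constant-probability tail of $\poly(n)$-size components'' does not suffice for an expected-time Las Vegas bound.
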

	
	\noindent Using standard techniques, we can convert the above `Las Vegas' algorithm into a `Monte Carlo' algorithm, which approximates the output distribution with a guaranteed worst-case runtime:
	\begin{corollary}
		Using the same notation as \cref{Theorem:Main}, there exists a constant depth threshold $d^* \leq O(p^{-1}\log(kp^{-1}))$, such that, when $d > d^*$,$p = \Omega(1)$ and $k=O(1)$, there exists a randomized classical algorithm that samples from $\tilde{Q}_{\tilde{C}}$, such that $\|\tilde{Q}_{\tilde{C}} - P_{\tilde{C}}\|_{TVD} \leq \epsilon$ for any $\epsilon > 0$, with worst-case runtime $T \leq O(d\poly(n/\epsilon))$.
	\end{corollary}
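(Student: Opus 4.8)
The plan is to apply the standard conversion of a Las Vegas algorithm into a Monte Carlo one to the algorithm supplied by \cref{Theorem:Main}. Write $\mathcal{A}$ for that algorithm: it outputs a sample exactly distributed as $P_{\tilde C}$, and its random runtime $T$ (counted in elementary classical steps) satisfies $\mathbf{E}[T] \leq M$ with $M = O(dn^5)$. Under the additional hypotheses $p = \Omega(1)$ and $k = O(1)$, the threshold $d_c \leq O(p^{-1}\log(kp^{-1}))$ of \cref{Theorem:Main} is a genuine $O(1)$ constant, so I would simply set $d^* := d_c$; then the corollary's hypothesis $d > d^*$ implies the hypothesis $d \geq d_c$ needed to invoke \cref{Theorem:Main}.

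First I would truncate: run $\mathcal{A}$ for at most $T_{\max} := M/\epsilon$ steps; if it halts within this budget, output its sample, and otherwise output the fixed string $0^n$. Let $\tilde{Q}_{\tilde C}$ be the resulting output distribution and let $\mathcal{E}$ denote the event ``$\mathcal{A}$ halts within $T_{\max}$ steps''. By Markov's inequality, $\mathbf{P}[\overline{\mathcal{E}}] = \mathbf{P}[T > T_{\max}] \leq M/T_{\max} = \epsilon$.

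Next I would bound the error. For every $b \in \{0,1\}^n$ we have $P_{\tilde C}(b) = \mathbf{P}[\mathcal{A}\text{ outputs }b,\ \mathcal{E}] + \mathbf{P}[\mathcal{A}\text{ outputs }b,\ \overline{\mathcal{E}}]$, while $\tilde{Q}_{\tilde C}(b) = \mathbf{P}[\mathcal{A}\text{ outputs }b,\ \mathcal{E}] + \mathbf{P}[\overline{\mathcal{E}}]\,\mathds{1}[b = 0^n]$. The common term cancels, and since $b \mapsto \mathbf{P}[\mathcal{A}\text{ outputs }b,\ \overline{\mathcal{E}}]$ and $b \mapsto \mathbf{P}[\overline{\mathcal{E}}]\,\mathds{1}[b=0^n]$ are both nonnegative with total mass $\mathbf{P}[\overline{\mathcal{E}}]$, the triangle inequality gives $\|\tilde{Q}_{\tilde C} - P_{\tilde C}\|_{TVD} \leq \mathbf{P}[\overline{\mathcal{E}}] \leq \epsilon$. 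The worst-case runtime of the truncated procedure is $T_{\max} + O(\poly(n)) = O(dn^5/\epsilon) = O(d\,\poly(n/\epsilon))$, as claimed.

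The only point that needs checking is that each elementary step of $\mathcal{A}$ costs $\poly(n)$ wall-clock time, so that counting steps and counting time agree up to polynomial factors; this is immediate from the construction behind \cref{Theorem:Main}, which manipulates only $O(1)$-qubit subsystems together with classically tracked Pauli data. I therefore expect no real obstacle in this corollary. If a runtime with only logarithmic dependence on $1/\epsilon$ is preferred, the variant that restarts $\mathcal{A}$ after every $2M$ steps for at most $\lceil \log_2(1/\epsilon)\rceil$ attempts succeeds except with probability $\leq \epsilon$ and runs in worst-case time $O(dn^5\log(1/\epsilon)) \subseteq O(d\,\poly(n/\epsilon))$.
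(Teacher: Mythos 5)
Your main argument is correct, but it takes a genuinely different route from the paper's. You perform the generic Las Vegas-to-Monte Carlo conversion: truncate \cref{alg:1} after $M/\epsilon$ elementary steps, where $M=O(dn^5)$ bounds the expected runtime from \cref{Theorem:Main}, invoke Markov's inequality, and absorb the failure event into a fixed output; your joint-probability decomposition correctly bounds the deviation by $\mathbf{P}[\overline{\mathcal{E}}]\leq\epsilon$ without any independence assumption between runtime and output. The paper instead truncates on the \emph{cause} of long runtimes rather than on the clock: \cref{alg:Monte_Carlo_Sampler} runs \cref{alg:1} through step \ref{Alg1:Step_4}, checks whether $\max_i|V_i|\leq\log(n/\epsilon)/c_{p,k}(d)$, and outputs a uniformly random string otherwise, with the error controlled by the percolation tail bound of \cref{Corollary:Graph} and a union bound. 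The trade-offs: the paper's threshold is the percolation threshold $d^*$, which is concretely (though not asymptotically) smaller than your $d_c$ --- for $d^*<d<d_c$ the expected-runtime bound of \cref{Theorem:Main} is unavailable, yet the component-size truncation still works --- and its runtime scales as $(n/\epsilon)^{\log 2/c_{p,k}(d)}$, which is precisely why the hypotheses $p=\Omega(1)$ and $k=O(1)$ appear in the statement; your bound $O(dn^5/\epsilon)$ is polynomial in $n/\epsilon$ unconditionally, so your proof does not actually need those hypotheses, while your choice $d^*:=d_c$ still meets the claimed $d^*\leq O(p^{-1}\log(kp^{-1}))$. Both arguments establish the corollary as stated.

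One caveat: your closing aside about restarting after $2M$ steps for $\lceil\log_2(1/\epsilon)\rceil$ attempts does not work for a sampling task. Each successful attempt produces a sample from the \emph{conditional} distribution given fast halting, and since the runtime of \cref{alg:1} is strongly correlated with the sampled noise configuration (hence with the output), that conditional law can differ from $P_{\tilde{C}}$ by as much as $\mathbf{P}[T>2M]\leq 1/2$ in total variation; independent repetition boosts the probability of producing \emph{some} sample but never removes this conditioning bias, so the claimed $O(dn^5\log(1/\epsilon))$ variant does not achieve TVD $\leq\epsilon$ without a further argument that output and runtime are nearly independent. Drop that remark (or prove such near-independence); the single-truncation argument you give first is the one to keep.
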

	
	Importantly, we note that the above algorithms allow us to sample efficiently for models of Pauli noise including \emph{dephasing} and \emph{depolarizing} noise. However, the dependency on $p = p_Z+\min(p_X,p_Y)$ implies our algorithm fails in the specific case that every noise channel is of the form $\mathcal{N}_{p_X,0,0}$ or of the form $\mathcal{N}_{0,p_Y,0}$ (when $p=0$).

	The constants hidden in the big-O notation are relatively small in practice, and the scaling of runtime with $n$ is a very loose upper bound. The main practical bottleneck for these algorithms is the scaling of depth thresholds $d^*$ and $d_c$ with $p$.
	We provide exact analytic expressions that relate $d^*$ and $d_c$ to noise parameters in the appendix, but as they are cumbersome to use, we plot them in \cref{fig:Comparison} for different values of $p$ for $k=2$. 
	We expect that similar analytic expressions for $d_c$ and $d^*$ can be found even if the noise channels are non-identical, but we omit discussion of this case for clarity. 
	\begin{figure}[h!]
		\centering
		\includegraphics[width=0.5\textwidth]{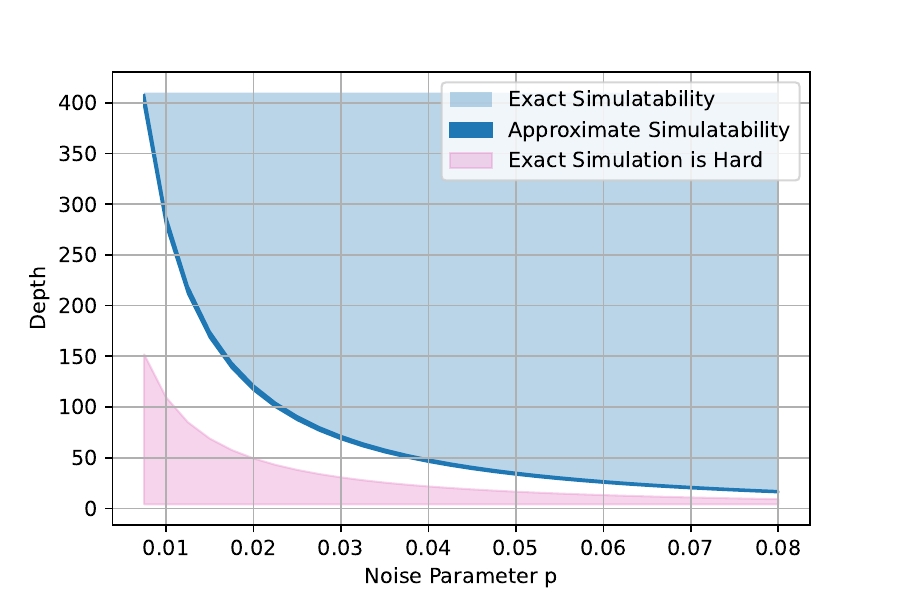}
		\caption{A plot of the depth thresholds for approximate sampling $(d^*)$ and exact sampling $(d_c)$. We also include the depth threshold for hardness of exact sampling, which we obtain in \cref{Sec: Tightness}}
		\label{fig:Comparison}
	\end{figure}

	\subsection{Motivation for the Classical Simulation Algorithm}
	\label{Sec:Motivation}
	
	The key idea behind the algorithm is that noise has the effect of removing entanglement that builds up in the circuit, and thus where noise appears, we can ``disentangle'' these parts of the circuit and simulate them classically.
	
	\begin{figure}[H]
		\centering
		\includegraphics[width=0.9\textwidth]{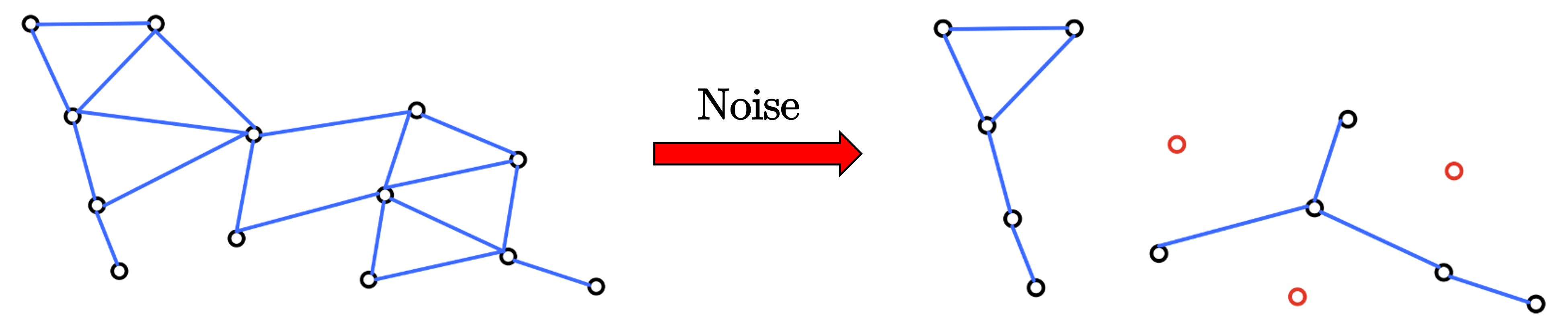}
		\caption{The graph on the left shows the interaction graph of a particular circuit, where the qubits are vertices and two vertices are joined by an edge if there is an entangling gate acting between them in the circuit. On the right, the red vertices indicate qubits which have been hit with noise. The effect of the noise is essentially to remove all interactions with this qubit from the larger circuit. The circuit can then effectively be simulated by considering only the connected components which are much smaller.}
		\label{fig:Interaction_Graph_with_Noise}
	\end{figure}
	
	In particular, for an IQP circuit $C$, we can define a corresponding interaction graph $G_C(V,E)$, where the qubits are vertices and two vertices are joined by an edge if there is an entangling gate acting between them in the circuit. 
	We will show that whenever a qubit receives an error, the edges of its corresponding vertex can be essentially ``removed'' from the interaction graph and its interactions become classically simulatable using a randomized classical algorithm. See \cref{fig:Interaction_Graph_with_Noise} as an example. 
	It is thus possible to simulate the noisy circuit by sampling a configuration of errors in the circuit, removing entangling gates acting on qubits that are affected by errors, and classically simulating each subcircuit corresponding to the remaining connected components \textit{independently} using standard circuit simulation techniques. 
	We will show these remaining subcircuits are sufficiently small that classically simulating them is efficient.

	\subsubsection{Sampling Completely Dephasing Errors to Reduce Entanglement}
	Our algorithm takes advantage of the fact that Pauli noise channels can be viewed as applying a completely dephasing error ($\mathcal{N}_{0,0,1/2}$ channel) with some fixed probability, along with correlated $X$-error or $Y$-error channels. This is shown in the following lemma which we prove in \cref{Sec:Noisy_Circuit_Proofs}.
	
	\begin{samepage}

	\begin{lemma} \label{Lemma:Noise}
		For any Pauli noise channel $\mathcal{N}_{p_X,p_Y,p_Z}$, define $p = p_Z + \min(p_X,p_Y)$, define $\mathcal{N}_1 = \mathcal{N}_{\frac{|p_X-p_Y|}{1-2p},0,0}$ if $p_X \geq p_Y$ or $\mathcal{N}_1 = \mathcal{N}_{0,\frac{|p_X-p_Y|}{1-2p},0}$ otherwise, and define $\mathcal{N}_2 = \mathcal{N}_{\frac{\min(p_X,p_Y)}{p},0,0}$. Then, for any single-qubit state $\rho$, 
		\begin{align*}
			\mathcal{N}_{p_X,p_Y,p_Z} (\rho) = (1-2p) \mathcal{N}_1(\rho)+ 2p \mathcal{N}_2 \circ \mathcal{N}_{0,0,1/2}(\rho).
		\end{align*}
	\end{lemma}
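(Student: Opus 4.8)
The plan is to verify the stated operator identity by a direct computation, evaluating both sides on an arbitrary single-qubit state $\rho$. By the $X \leftrightarrow Y$ symmetry built into the statement, it suffices to treat the case $p_X \geq p_Y$, so that $\min(p_X,p_Y) = p_Y$, $p = p_Z + p_Y$, $\mathcal{N}_1 = \mathcal{N}_{\frac{p_X-p_Y}{1-2p},0,0}$ and $\mathcal{N}_2 = \mathcal{N}_{p_Y/p,0,0}$; the case $p_X < p_Y$ follows by relabeling. (We also tacitly assume $p > 0$, since otherwise $2p = 0$ and the second term is absent, so there is nothing to prove.)

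First I would expand the completely dephasing channel as $\mathcal{N}_{0,0,1/2}(\rho) = \frac12 \rho + \frac12 Z \rho Z$, compose it with $\mathcal{N}_2$, and apply the Pauli identity $(XZ)\rho (XZ)^\dagger = Y\rho Y$ (which follows from $XZ = -iY$) to get
\begin{align*}
\mathcal{N}_2 \circ \mathcal{N}_{0,0,1/2}(\rho) = \tfrac12\Bigl(1 - \tfrac{p_Y}{p}\Bigr)\bigl(\rho + Z\rho Z\bigr) + \tfrac12 \tfrac{p_Y}{p}\bigl(X\rho X + Y\rho Y\bigr).
\end{align*}
Multiplying by $2p$ and using $p - p_Y = p_Z$ turns the right-hand side into $p_Z \rho + p_Z Z\rho Z + p_Y X\rho X + p_Y Y\rho Y$. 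Separately, I would expand $(1-2p)\mathcal{N}_1(\rho)$, noting that $(1-2p)\cdot \frac{p_X-p_Y}{1-2p} = p_X - p_Y$, so that this term equals $\bigl[(1-2p) - (p_X-p_Y)\bigr]\rho + (p_X - p_Y) X\rho X$; then using $1-2p = 1 - 2p_Z - 2p_Y$ together with $p_I = 1 - p_X - p_Y - p_Z$ simplifies the coefficient of $\rho$ to $p_I - p_Z$.

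Adding the two contributions, the coefficient of $X\rho X$ becomes $(p_X - p_Y) + p_Y = p_X$, the coefficient of $\rho$ becomes $(p_I - p_Z) + p_Z = p_I$, and the coefficients of $Z\rho Z$ and $Y\rho Y$ are $p_Z$ and $p_Y$ respectively, which is exactly $\mathcal{N}_{p_X,p_Y,p_Z}(\rho)$. There is no genuine obstacle here beyond careful bookkeeping of Pauli signs and coefficients; the only place the hypotheses are actually needed is in confirming that $\mathcal{N}_1$ and $\mathcal{N}_2$ are legitimate Pauli channels, i.e.\ that $\frac{|p_X - p_Y|}{1-2p}$ and $\frac{\min(p_X,p_Y)}{p}$ lie in $[0,1]$. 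The latter is immediate since $p \geq \min(p_X,p_Y)$; for the former, with $p_X \geq p_Y$ the inequality $p_X - p_Y \leq 1 - 2p_Z - 2p_Y$ rearranges to $p_Z \leq 1 - p_X - p_Y - p_Z = p_I$, which holds because $p_I \geq 0.5 \geq p_Z$. This is the one point at which the standing assumption $p_I \geq 0.5$ enters.
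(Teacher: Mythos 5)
Your proposal is correct and is essentially the paper's own argument run in the opposite direction: the paper regroups the terms of $\mathcal{N}_{p_X,p_Y,p_Z}(\rho)$ using $X\mathcal{N}_{0,0,1/2}(\rho)X = \tfrac12(X\rho X + Y\rho Y)$ to build the decomposition, while you expand the right-hand side and match Pauli coefficients, which is the same computation. Your added check that the channel parameters lie in $[0,1]$ (via $p_I \geq 0.5$) is a small, sensible addition the paper leaves implicit.
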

			
\end{samepage}
	\noindent The advantage of sampling completely dephasing errors is that they have very simple behavior in IQP circuits.
	In particular, one can show that the error can be commuted through to the beginning of the circuit, such that the effect of the error can be replicated by 
	replacing the initial $\ket{+}$ state of the qubit the error acts on with a random computational basis state. 
	This is summarized in the following lemma, proven in \cref{Sec:Noisy_Circuit_Proofs}.
	
	\begin{lemma} \label{Lemma:Commutation}
		Let $\tilde{C}$ be an IQP circuit with Pauli noise. 
		Let the initial state be $\ketbra{+}^n$. 
		Let $\tilde{C}'$ be the circuit $\tilde{C}$ where there is a \emph{single} completely dephasing error ($\mathcal{N}_{0,0,1/2}$ channel) on qubit $v \in \{1,\ldots n\}$ occurring \emph{at any point} in $\tilde{C}$. 
		Then,
		\begin{align}
			\Phi_{\tilde{C}'}(\ketbra{+}^n) = \mathbf{E}_{b \sim U(\{0,1\})}\left[\Phi_{\tilde{C}}(\ketbra{+}^{v-1} \otimes \ketbra{b} \otimes \ketbra{+}^{n-v})\right]
		\end{align}
	\end{lemma}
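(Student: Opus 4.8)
The plan is to track what a single completely dephasing channel $\mathcal{N}_{0,0,1/2}$ on qubit $v$ does as it is commuted leftward through the circuit, past all the diagonal gates and Pauli noise channels that precede it, until it hits the initial $\ketbra{+}^n$ state. First I would write $\mathcal{N}_{0,0,1/2}(\sigma) = \tfrac12 \sigma + \tfrac12 Z_v \sigma Z_v = \sum_{c\in\{0,1\}} \ketbra{c}_v\, \sigma\, \ketbra{c}_v$, i.e. it is exactly the map that measures qubit $v$ in the computational basis and forgets the outcome. So $\Phi_{\tilde C'} = \Phi_{\text{post}} \circ (\text{dephase on }v) \circ \Phi_{\text{pre}}$, where $\Phi_{\text{pre}}$ is everything before the inserted error and $\Phi_{\text{post}}$ everything after. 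The key structural fact is that every gate in $\Phi_{\text{pre}}$ is \emph{diagonal} in the computational basis, and every noise channel is a Pauli channel; I claim each of these commutes with the dephasing projector $\ketbra{c}_v \cdot \ketbra{c}_v$ in the appropriate sense, so that the dephasing on $v$ can be pulled all the way to the left of $\Phi_{\text{pre}}$.

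Concretely, the second step is the commutation argument, done channel-by-channel through $\Phi_{\text{pre}}$. A diagonal unitary $D$ satisfies $D\,\ketbra{c}_v\, D^\dagger$ is still diagonal and, more importantly, $D$ commutes with the projector $\ketbra{0}_v + $ (anything): since $D$ is diagonal it commutes with $Z_v$, hence with $\mathcal{N}_{0,0,1/2}$ acting on $v$. For a Pauli noise channel $\mathcal{N}_{p_X,p_Y,p_Z}$ on some qubit $w$: if $w\neq v$ it trivially commutes with a channel supported on $v$; if $w=v$, each Pauli $P\in\{I,X,Y,Z\}$ either commutes or anticommutes with $Z_v$, and in both cases $P\,\mathcal{N}_{0,0,1/2}(\sigma)\,P^\dagger = \mathcal{N}_{0,0,1/2}(P\sigma P^\dagger)$ because $\mathcal{N}_{0,0,1/2}$ is invariant under conjugation by $Z_v$ and by $X_v$ (the latter because $X_v Z_v X_v = -Z_v$ flips the sign inside the mixture, leaving the $\tfrac12+\tfrac12$ mixture unchanged). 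Hence the dephasing channel on $v$ commutes past every element of $\Phi_{\text{pre}}$, giving $\Phi_{\tilde C'} = \Phi_{\text{post}}\circ \Phi_{\text{pre}} \circ (\text{dephase on }v) = \Phi_{\tilde C}\circ(\text{dephase on }v)$, where the last equality holds because $\Phi_{\tilde C}$ is exactly $\Phi_{\text{post}}\circ\Phi_{\text{pre}}$.

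The final step evaluates $(\text{dephase on }v)$ on the input $\ketbra{+}^n$: since $\ketbra{+} = \tfrac12\sum_{b,b'}\ketbra{b}{b'}$, measuring qubit $v$ in the computational basis gives $\sum_{c}\ketbra{c}_v\,\ketbra{+}\,\ketbra{c}_v = \tfrac12\ketbra{0}+\tfrac12\ketbra{1} = \mathbf{E}_{b\sim U(\{0,1\})}[\ketbra{b}]$ on qubit $v$, while the other $n-1$ qubits are untouched. Substituting this in and using linearity of $\Phi_{\tilde C}$ yields exactly the claimed identity
\[
\Phi_{\tilde C'}(\ketbra{+}^n) = \mathbf{E}_{b\sim U(\{0,1\})}\left[\Phi_{\tilde C}\!\left(\ketbra{+}^{v-1}\otimes\ketbra{b}\otimes\ketbra{+}^{n-v}\right)\right].
\]
I expect the only subtle point — the ``main obstacle'' — to be the commutation of the dephasing channel past an $X$- or $Y$-type Pauli noise channel acting on the \emph{same} qubit $v$: one must check carefully that conjugation by $X_v$ (which anticommutes with $Z_v$) nevertheless leaves $\mathcal{N}_{0,0,1/2}$ invariant, so that the error really does survive commutation past \emph{all} preceding noise, not just the diagonal gates. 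Everything else is bookkeeping with linearity of CPTP maps.
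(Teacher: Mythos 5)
Your proposal is correct and follows essentially the same route as the paper: commute the completely dephasing channel on qubit $v$ leftward through the diagonal gates (which commute with $Z_v$) and through the Pauli noise channels (where the possible sign from anticommutation cancels under conjugation), then evaluate $\mathcal{N}_{0,0,1/2}(\ketbra{+}) = \tfrac{I}{2} = \mathbf{E}_{b}[\ketbra{b}]$ and invoke linearity. The ``subtle point'' you flag about $X$/$Y$ noise on qubit $v$ is handled in the paper by the same observation, stated as the identity $\sigma_i\sigma_j\rho\sigma_j^{\dagger}\sigma_i^{\dagger} = \sigma_j\sigma_i\rho\sigma_i^{\dagger}\sigma_j^{\dagger}$ for Pauli matrices, so your check is exactly the intended one.
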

	
	\noindent The above lemma shows that dephasing errors force qubits to act classically in IQP circuits. 
	Next we show that diagonal gates in $\tilde{C}$ acting on computational basis states introduce no entanglement with the computational basis state.
	\begin{lemma} \label{Lemma:Diagonal}
		Let $A$ and $B$ be subsystems of qubits and $\mathcal{D}$ be any diagonal gate acting across these subsystems. Suppose subsystem $A$ is in computational basis state $\ketbra{b}$, and $\rho$ is the state of subsystem $B$. Define $D' = \tr_{A}((\ketbra{b}_A \otimes I_{B})D)$. Then,
		\begin{align}
			\mathcal{D}(\ketbra{b}_A \otimes \rho_B) = \ketbra{b}_A \otimes \mathcal{D'}(\rho_B)
		\end{align}
	\end{lemma}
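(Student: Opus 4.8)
The plan is to verify the claimed identity by working directly in the computational basis, using the defining property that $\mathcal{D}$ is diagonal. Write $D = \sum_{x,y} D_{xy}\,\ketbra{x}{y}$ where $x$ ranges over basis states of $A\otimes B$; since $D$ is diagonal, $D = \sum_x d_x \ketbra{x}$ with $|d_x|=1$. Splitting the index $x$ as $(a,c)$ with $a$ a basis state of $A$ and $c$ of $B$, we have $D = \sum_{a,c} d_{a,c}\,\ketbra{a}_A\otimes\ketbra{c}_B$. The operator $D' = \tr_A\big((\ketbra{b}_A\otimes I_B)D\big)$ then evaluates to $D' = \sum_c d_{b,c}\,\ketbra{c}_B$, which is itself a diagonal unitary on $B$ (the restriction of $D$ to the sector where $A$ holds the value $b$).

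The main computation is then to expand $\mathcal{D}(\ketbra{b}_A\otimes\rho_B) = D(\ketbra{b}_A\otimes\rho_B)D^\dagger$. Acting with $D$ on the left: because $D$ is block-diagonal in the $A$-register and $\ketbra{b}_A$ kills every term of $D$ except those with $a=b$, the left multiplication collapses to $\ketbra{b}_A \otimes \big(\sum_c d_{b,c}\ketbra{c}_B\big)\rho_B = \ketbra{b}_A\otimes D'\rho_B$. The analogous collapse on the right gives an overall factor $\ketbra{b}_A\otimes D'\rho_B D'^\dagger = \ketbra{b}_A\otimes\mathcal{D}'(\rho_B)$. This is exactly the claimed equation. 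I would either present this as an index manipulation as above, or more slickly observe that $(\ketbra{b}_A\otimes I_B)\,D = (\ketbra{b}_A\otimes I_B)\,D\,(\ketbra{b}_A\otimes I_B) = \ketbra{b}_A\otimes D'$, so that $D(\ketbra{b}_A\otimes\rho_B) = (\ketbra{b}_A\otimes D')(\ketbra{b}_A\otimes\rho_B) = \ketbra{b}_A\otimes D'\rho_B$, and symmetrically for the conjugate.

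There is essentially no obstacle here; the statement is a structural fact about diagonal operators and the proof is a one-line manipulation once the notation is unpacked. The only point requiring a word of care is checking that $D'$ as defined via the partial trace really is the "$A=b$ block" of $D$ — i.e. that $\tr_A\big((\ketbra{b}_A\otimes I_B)D\big) = \bra{b}_A D \ket{b}_A$ acting on $B$ — which follows immediately from the cyclicity-free definition of the partial trace once one notes $(\ketbra{b}_A\otimes I_B)$ is a projector and $D$ commutes with it in the relevant sense (more precisely, $(\ketbra{b}_A\otimes I_B)D(\ketbra{b}_A\otimes I_B) = (\ketbra{b}_A\otimes I_B)D$ because $D$ is diagonal). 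I would also remark, for use in the percolation argument, that this lemma shows a diagonal gate straddling the cut between a classical (computational-basis) qubit and the rest acts as a gate entirely within the rest, so such an edge can be deleted from the interaction graph — this is the link to \cref{fig:Interaction_Graph_with_Noise} and the reason the lemma is stated.
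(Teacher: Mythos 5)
Your proposal is correct and follows essentially the same route as the paper's proof: expand $D$ in the computational basis, identify $D' = \tr_A\bigl((\ketbra{b}_A \otimes I_B)D\bigr)$ as the diagonal block of $D$ with $A$ fixed to $b$, and verify the identity by direct conjugation. The only cosmetic difference is that you phrase the collapse via the projector identity $(\ketbra{b}_A\otimes I_B)D = \ketbra{b}_A\otimes D'$ rather than writing out the double sum over phases $e^{i\theta_{ij}}$ as the paper does, but the content is identical.
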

	
	\noindent Thus, when a qubit receives a completely dephasing error in an IQP circuit $C$, the edges of its corresponding vertex in the interaction graph $G_{C}(V,E)$ are essentially ``removed" from the graph, because the circuit's diagonal gates no longer introduce entanglement with the decohered qubit. 
	The probability of a vertex being hit with noise and thus having its edges removed is $1-(1-2p)^d$ and is independent of noise on other vertices.

	\subsubsection{Phase Transition in Circuit Connectivity} \label{sec:Phase_Transition}
	We now want to examine the largest connected component of the IQP graph once edges are removed due to noise.
	The phenomenon of removing edges from randomly chosen vertices of a graph is well-studied in percolation theory as `vertex percolation,' and exactly corresponds to our setting. It is known that there is a phase transition after which all connected components are of $O(\log(n))$ size, as shown in the following lemma.
	\begin{lemma}(Informal) 
		Let $G = (V,E)$ be a graph of maximum degree $\Delta$  on n vertices. Construct $G' = (V,E')$ as follows. Start with $E' = E$. For each $v \in V$, with probability $1-q$, remove all edges incident to $v$ from $E'$. If $q < \frac{1}{\Delta }$, then whp all connected components of $G'$ are of size $O(\log n)$.
	\end{lemma}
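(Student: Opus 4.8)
The plan is to view the statement as a claim about subcritical \emph{site} percolation. An edge $(u,w)\in E$ survives in $E'$ exactly when neither $u$ nor $w$ had its incident edges deleted --- i.e.\ when both endpoints are ``open,'' an event of probability $q$ for each vertex, independently across vertices. Hence the connected component $\mathcal{C}(w)$ of a vertex $w$ in $G'$ is precisely the open cluster of $w$ in site percolation on $G$ with retention parameter $q$, and the lemma is the classical subcritical cluster-size bound. The crux is to show that for every vertex $w$ and every integer $k$,
\[
  \Prob\bigl[\,|\mathcal{C}(w)|\ge k\,\bigr]\ \le\ e^{-ck}
\]
for a constant $c=c(\Delta,q)>0$; a union bound over the $n$ choices of $w$, taking $k=\Theta(\log n)$, then finishes the proof, since $\Prob[\exists w:|\mathcal{C}(w)|\ge k]\le n e^{-ck}$, and choosing $k=(2/c)\log n$ makes this at most $n^{-1}\to 0$.

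For the exponential tail I would run the standard breadth-first exploration of $\mathcal{C}(w)$, maintaining disjoint sets of active, explored, and untouched vertices. If $w$ is closed then $\mathcal{C}(w)=\{w\}$ and there is nothing to prove, so assume $w$ is open and start with $w$ active and everything else untouched. At step $i$, remove one active vertex $v$, reveal the open/closed status of each of its still-untouched neighbours, promote the open ones to active and demote the closed ones to explored, then move $v$ to explored. Since each vertex's status is revealed at most once, statuses are independent across vertices, and $\deg(v)\le\Delta$, the number $X_i$ of vertices newly activated in step $i$ is stochastically dominated by an independent $\mathrm{Bin}(\Delta,q)$ random variable (overlapping neighbourhoods and the already-revealed parent only shrink $X_i$; one can make the domination exact by padding each step with dummy coins). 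Letting $S_t=1+\sum_{i=1}^t(X_i-1)$ be the number of active vertices after $t$ steps, the exploration runs for exactly $|\mathcal{C}(w)|$ steps, terminating when $S_t$ first hits $0$, so the event $\{|\mathcal{C}(w)|\ge k\}$ forces $S_t>0$ for all $t<k$ and in particular $\sum_{i=1}^{k-1}X_i\ge k-1$.

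Now $\sum_{i=1}^{k-1}X_i$ is stochastically dominated by a $\mathrm{Bin}\bigl((k-1)\Delta,q\bigr)$ variable of mean $(k-1)\Delta q<k-1$, so $\{\sum X_i\ge k-1\}$ is a bona fide upper-tail large deviation \emph{precisely} because $q<1/\Delta$, and a Chernoff bound gives $\Prob[|\mathcal{C}(w)|\ge k]\le e^{-(k-1)\Delta\,d(1/\Delta\,\|\,q)}$, with $d(\cdot\,\|\,\cdot)$ the binary relative entropy, which is strictly positive under the hypothesis; absorbing the $-1$ into constants yields the desired bound. The only step that needs genuine care is this exploration/coupling argument: the true increments $X_i$ are neither independent nor identically distributed (revealed coins are never reused, and neighbourhoods of distinct active vertices overlap), so one must verify that replacing them by an i.i.d.\ $\mathrm{Bin}(\Delta,q)$ sequence is a legitimate stochastic domination --- it is, because every such discrepancy can only lower the true cluster size, and the threshold $q<1/\Delta$ then emerges exactly as the condition that the dominating walk $S_t$ has strictly negative drift. (A cruder route is to union-bound over the at most $(e\Delta)^k$ connected subgraphs of size $k$ containing $w$, each fully open with probability $q^k$, giving the bound $(e\Delta q)^k$; this only works for $q<1/(e\Delta)$, so the exploration argument is preferable for matching the stated threshold.)
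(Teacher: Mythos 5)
Your proposal is correct and follows essentially the same route as the paper: the paper's proof is also a cluster-exploration (query) process in the spirit of Krivelevich, bounding the event $\{|V_i|>x\}$ by the upper tail of a $\mathrm{Bin}(x\Delta,q)$ variable and applying a Chernoff bound, which is exactly your BFS-with-binomial-domination argument (your explicit negative-drift/union-bound framing and the paper's query-counting are the same estimate in different clothing).
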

	\noindent In our setting, if an IQP circuit $C$ of depth $d$ is constructed using $k$-local gates, then the max degree of the interaction graph $G_{C}$ is upper bounded by $\Delta \leq d(k-1)$ since a qubit can be acted on by at most $d$ gates, each of which entangles it with at most $k-1$ other qubits.
	The algorithm takes advantage of the fact that the probability of a qubit being included in a subcircuit decays inverse exponentially with depth, $q =(1-2p)^d$, while the probability required for percolation decays inverse linearly with depth, $q \leq 1/(k-1)d$. 
	Thus, there is a constant depth $d^*$ after which percolation into $O(\log(n))$-size subcircuits, which can be simulated tractably, occurs with high probability.

	\subsection{The Classical Simulation Algorithm} \label{Sec:Classical_Simulation_Algorithm}
	Here we give the classical simulation algorithm used to sample from a noisy IQC circuit.
	We leave the proof of runtime to \cref{Sec:Algorithm_Proof}, but prove correctness below in \cref{lemma:correctness}.

	\begin{samepage}
		\ \newline
		\noindent\makebox[\linewidth]{\rule{0.8\paperwidth}{0.4pt}}
		
		\begin{algorithm}[Sampler for Noisy IQP Circuits]
			\label{alg:1}
			
			Let $C,\tilde{C},\mathcal{N}_{p_x,p_y,p_z},p,n,d,k$ be defined as in \cref{Theorem:Main}. 
			Let $\mathcal{N}_1,\mathcal{N}_2$ be defined as in \cref{Lemma:Noise}.

			The algorithm stores and updates a classical description of the initial state of the IQP circuit as a list of $n$ characters $b = (b_1,\ldots,b_n)$ where each $b_i \in \{0,1,$`+'$\}$ for $i \in \{1,\ldots,n\}$ represents the initial state of qubit $i$ being one of $\{\ket{0},\ket{1},\ket{+}\}$ respectively. At the start of the algorithm, each $b_i =$ `+'. We use $\ket{b}$ to refer to the quantum state that $b$ describes, i.e. $\ket{b} = \bigotimes_i \ket{b_i}$. The algorithm outputs a list of $n$ measurement outcomes $s = (s_1,\ldots,s_n)$, where each $s_i \in \{0,1\}$ for $i \in \{1,\ldots,n\}$ represents the outcome of measurement on qubit $i$. The algorithm proceeds by modifying $\tilde{C}$ in stages $\tilde{C} \to \tilde{C}_1 \to \tilde{C}_2 \to \tilde{C}_3$, until $\Phi_{\tilde{C}_3}(\ketbra{b})$ can be simulated using state vector methods.
			\begin{enumerate} 
				\item \label{Alg1:Step_1}  
				Start with $\tilde{C}_1=\tilde{C}$. 
				For each single qubit channel $\mathcal{N}_{p_X,p_Y,p_Z}$, simulate $\mathcal{N}_{p_X,p_Y,p_Z}$  by replacing $\mathcal{N}_{p_X,p_Y,p_Z}\rightarrow \mathcal{N}_2 \circ \mathcal{N}_{0,0,1/2}$  with probability $2p$, and $\mathcal{N}_{p_X,p_Y,p_Z}\rightarrow \mathcal{N}_1$ otherwise.

				\item \label{Alg1:Step_2} 
				Start with $\tilde{C}_2 = \tilde{C}_1$. For each $i \in 1,\ldots,n$, if qubit $i$ receives a $\mathcal{N}_{0,0,1/2}$ error in $\tilde{C}_2$, update $b_i \sim U(\{0,1\}$. Remove all $\mathcal{N}_{0,0,1/2}$ channels from $\tilde{C}_2$.

				\item \label{Alg1:Step_3} 
				Start with $\tilde{C}_3$ empty. Iterate through each channel of $\tilde{C}_2$ in the temporal order in which it is applied and perform the following:
				\begin{enumerate}
					\item For each diagonal gate $\mathcal{D}$ on qubits $L \subseteq \{1,\ldots,n\}$, define $A = \{i: i \in L \wedge b_i \in \{0,1\}\}$, define $D' = \tr_{A}((\ketbra{b}_A \otimes I_{L - A})D)$, and add the equivalent implementation $\mathcal{D'}_{L - A}$ (which acts only on $L-A$ and leaves $\ketbra{b}_A$ unchanged) to $\tilde{C}_3$ .
					\item For each noise channel $\mathcal{N}_{p_X',p_Y',p_Z'}$ on qubit $i \in \{1,\ldots,n\}$, if $b_i\in \{0,1\}$, simulate this channel by updating $b_i \gets b_i \oplus 1$ with probability $p_X' + p_Y'$, and otherwise (if $b_i = $`+') add the channel to $\tilde{C}_3$  
					
				\end{enumerate}

				\item \label{Alg1:Step_4} Construct a graph $G_{\tilde{C}_3}(V,E)$ where $V = \{1,\ldots,n\}$ and $(v,w) \in E$ iff there is a diagonal gate acting between qubits $v$ and $w$ in $\tilde{C}_3$. Enumerate the connected components of $G_{\tilde{C}_3}$ as $V_1,\ldots,V_m$.
				
				\item \label{Alg1:Step_5}  Iterate through each $j \in 1,\ldots,m$. Suppose $V_j = \{v_1,\ldots,v_l\}$. If $|V_j|=1$ and $b_{v_1} \in \{0,1\}$, sample $s_{v_1} \sim U(\{0,1\})$. Otherwise, define the subcircuit $\tilde{C}_{3,j}$ to include only those channels in $\tilde{C}_{3,j}$ acting entirely on qubits of $V_j$, simulate the action of $\tilde{C}_{3,j}$ on qubits of $V_j$ with state vector methods, and sample $(s_{v_1},\ldots,s_{v_l}) \sim P_{\tilde{C}_{3,j}}$.
			\end{enumerate}
		\end{algorithm}
		\noindent\makebox[\linewidth]{\rule{0.8\paperwidth}{0.4pt}}
		
	\end{samepage}

	\begin{figure}[!h]
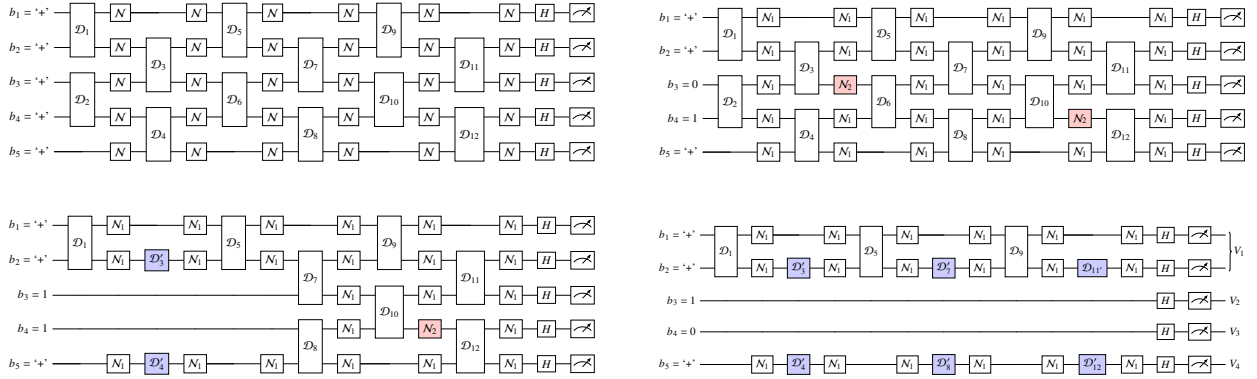

		\begin{subfigure}[b]{0.48\textwidth}
			\centering
			\resizebox{\linewidth}{!}{\input{tikz/Noisy_IQP_Circuit}}  
			\label{fig:A}
		\end{subfigure}\hspace{0.04\textwidth}
		\begin{subfigure}[b]{0.48\textwidth}
			\centering
			\resizebox{\linewidth}{!}{\input{tikz/Step1}}  
			\label{fig:B}
		\end{subfigure}\vspace{0.5cm}
		\begin{subfigure}[b]{0.48\textwidth}
			\centering
			\resizebox{\linewidth}{!}{\input{tikz/Step2}}  
			\label{fig:C}
		\end{subfigure}\hspace{0.04\textwidth}
		\begin{subfigure}[b]{0.48\textwidth}
			\centering
			\resizebox{\linewidth}{!}{\input{tikz/Step3}}  
			\label{fig:D}
		\end{subfigure}
		\caption{In (a), we show an example of $\tilde{C}$ on 5 qubits, and the list of characters $b = (b_1,b_2,b_3,b_4,b_5)$ which represents the initial state. In (b), we show the circuit $\tilde{C}_2$ constructed after steps 1 and 2. Certain noise channels, indicated in red, have been sampled as $\mathcal{N}_2 \circ \mathcal{N}_{0,0,1/2}$ (step 1), and after this, each $\mathcal{N}_{0,0,1/2}$ channel has been removed and its corresponding initial state in $b$ randomized, as $b_3=0,b_4=1$ (step 2). In (c), we show an intermediate layer of step 3, where the $\mathcal{D}'$ channels, indicated in blue, represent the replacements made in step 3a while $b_3 = 1$ due to step 3b. In (d), we show the final circuit $\tilde{C}_3$ and indicate the disjoint subsets $V_1,V_2,V_3,V_4$ into which it is partitioned for independent state vector simulation (steps 4 and 5).}
		\label{fig:Noise_on_Circuit}
	\end{figure}
	
	\begin{lemma} \label{lemma:correctness}
		For $\tilde{C}$ defined as in \cref{Theorem:Main}, let $Q_{\tilde{C}}(s)$ be the distribution over output strings $s$ produced by algorithm \ref{alg:1} on input $\tilde{C}$. Then, $Q_{\tilde{C}} = P_{\tilde{C}}$.
	\end{lemma}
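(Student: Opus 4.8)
The plan is to follow the algorithm's four-stage rewriting $\tilde{C}\to\tilde{C}_1\to\tilde{C}_2\to\tilde{C}_3$ and show that each stage preserves the output distribution \emph{in expectation} over the fresh coin flips it introduces. Writing $Q_{\tilde{C}}(s)$ as the expectation, over all of the algorithm's randomness, of the indicator that it outputs $s$, the tower property reduces the claim to one equality per stage, together with the definition $P_{\tilde{C}}(s)=\tr\!\big[H^{\dagger\otimes n}\ketbra{s}H^{\otimes n}\,\Phi_{\tilde{C}}(\ketbra{+}^{\otimes n})\big]$. \emph{Stage 1:} averaging over the Step~1 replacements gives $\mathbf{E}[\Phi_{\tilde{C}_1}]=\Phi_{\tilde{C}}$, immediately from \cref{Lemma:Noise} (replacing $\mathcal{N}_{p_X,p_Y,p_Z}$ by $\mathcal{N}_2\circ\mathcal{N}_{0,0,1/2}$ with probability $2p$ and by $\mathcal{N}_1$ otherwise has expectation $\mathcal{N}_{p_X,p_Y,p_Z}$) together with multilinearity of circuit composition and independence of the replacements. \emph{Stage 2:} for fixed $\tilde{C}_1$, Step~2 deletes every $\mathcal{N}_{0,0,1/2}$ and re-initializes $b_i$ uniformly on $\{0,1\}$ for each qubit carrying one; applying \cref{Lemma:Commutation} once per deleted channel — valid since $\mathcal{N}_{0,0,1/2}$ commutes with every diagonal gate and every Pauli channel, and the lemma extends verbatim to product initial states of $\ket{+}$'s and basis states (a repeated dephasing on a qubit being idempotent) — yields $\mathbf{E}[\Phi_{\tilde{C}_2}(\ketbra{b})]=\Phi_{\tilde{C}_1}(\ketbra{+}^{\otimes n})$.

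\emph{Stage 3} is the crux. Let $S=\{i:b_i\in\{0,1\}\}$ after Step~2; this index set is invariant throughout Step~3, since $b$ is modified only in Step~3(b) and only by a flip $b_i\mapsto b_i\oplus1$ of a qubit already in $S$. I would prove by induction on the prefix of $\tilde{C}_2$ processed the invariant that, averaging over the Step~3(b) flips, the global state equals $\ketbra{b'}_{S}\otimes\big(\Phi_{\,\text{prefix of }\tilde{C}_3}(\ketbra{+})\big)_{\bar S}$, where $b'$ is the current description; the base case is the post-Stage-2 state with $\tilde{C}_3$ empty. The inductive step has two cases: a diagonal gate on qubits $L$ factors, by \cref{Lemma:Diagonal} with $A=L\cap S$, as $\ketbra{b'}_{L\cap S}\otimes\mathcal{D}'(\,\cdot\,)_{L\setminus S}$ with $\mathcal{D}'$ exactly the operator appended to $\tilde{C}_3$ in Step~3(a), so leaving $b'$ unchanged is faithful; a single-qubit Pauli channel on $i$ is appended to $\tilde{C}_3$ if $i\notin S$, and if $i\in S$ it acts on the pure basis state $\ketbra{b'_i}$ as the bit-flip $b'_i\mapsto b'_i\oplus1$ with probability $p'_X+p'_Y$ (the $Z$-component acting trivially), exactly the Step~3(b) update. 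Running the induction to the end gives $\mathbf{E}[\Phi_{\tilde{C}_3}(\ketbra{b'})]=\Phi_{\tilde{C}_2}(\ketbra{b})$ for each fixed $\tilde{C}_2,b$.

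\emph{Stage 4:} $\tilde{C}_3$ acts only on $\bar S$, and each of its channels acts within a single connected component of $G_{\tilde{C}_3}$, so $\Phi_{\tilde{C}_3}=\bigotimes_j\Phi_{\tilde{C}_{3,j}}$ (tensored with the identity on the singleton components inside $S$), and the Hadamard-basis output distribution of $\Phi_{\tilde{C}_3}(\ketbra{b'})$ is the product of the per-component distributions. On a singleton $\{v\}\subseteq S$, $|\bra{s}H\ket{b'_v}|^2=\tfrac12$ matches the uniform bit drawn in Step~5; on a component $V_j$ whose qubits all have $b'_v=\text{`+'}$, the component distribution is precisely $P_{\tilde{C}_{3,j}}$, which Step~5 samples faithfully by state-vector simulation. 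Chaining the four stage-equalities through the tower property gives $Q_{\tilde{C}}=P_{\tilde{C}}$.

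I expect the Stage~3 induction to be the main obstacle: getting the product-form invariant to survive a diagonal gate straddling $S$ and $\bar S$, confirming that the $S/\bar S$ bookkeeping of the algorithm faithfully tracks the actual state, and matching the $\mathcal{D}'$ of \cref{Lemma:Diagonal} to the operator defined in Step~3(a). The remaining ingredients — linearity/multilinearity of channel composition, the tower property, the action of $X/Y$-type Pauli channels on computational basis states, and the fact that Hadamard-basis measurement of a basis state is an unbiased coin — are routine.
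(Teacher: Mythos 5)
Your proposal is correct and follows essentially the same route as the paper's proof: a stage-by-stage verification using \cref{Lemma:Noise} for Step 1, \cref{Lemma:Commutation} for Step 2, \cref{Lemma:Diagonal} plus the bit-flip action of Pauli channels on basis states for Step 3 (where the paper argues over sampled trajectories $b^{(0)},\ldots,b^{(f)}$ rather than an explicit prefix induction, but the content is the same), and the factorization over connected components with uniform Hadamard-basis outcomes on decohered singletons for Steps 4--5. No gaps; your Stage 3 invariant is just a more explicit rendering of the paper's argument.
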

	\begin{proof}
		We check the steps of the algorithm to ensure correct sampling:
		\begin{itemize}
			\item[Step \ref{Alg1:Step_1}:]  \Cref{Lemma:Noise} demonstrates that $\mathcal{N}_{p_x,p_y,p_z}$ can be implemented by probabilistically implementing $\mathcal{N}_2 \circ \mathcal{N}_{0,0,1/2}$ with probability $2p$ and $\mathcal{N}_1$ otherwise. Hence the channel we sample from remains the same, that is,
			\begin{align}
				\Phi_{\tilde{C}} = \mathbf{E}_{\tilde{C}_1} \left[\Phi_{\tilde{C}_1}\right]
			\end{align}
			\item[Step \ref{Alg1:Step_2}:] \Cref{Lemma:Commutation} shows that whenever a qubit is hit by a completely dephasing error ($\mathcal{N}_{0,0,1/2}$) in an IQP circuit, this can be simulated by replacing its initial state with a random computational basis state. 
            Let $b,b^{(0)}\in \{0,1, '+'\}^n,$ where $b^{(0)}$ denote the state of $b$ after step \ref{Alg1:Step_2} of the algorithm.
            Then,
			\begin{align}
				\Phi_{\tilde{C}_1}(\ketbra{+}^{\otimes n}) = \Phi_{\tilde{C}_2} \left( \mathbf{E}_{b^{(0)}} \left[\bigl| b^{(0)} \bigr\rangle \bigl\langle b^{(0)} \bigr|\right] \right)
			\end{align}
        
			\item[Step \ref{Alg1:Step_3}:] Observe the following,
			\begin{itemize}
				\item[(a) ] \Cref{Lemma:Diagonal} shows that if a set of qubits $A$ is in a computational basis state, then all diagonal gates acting on $A$ and another subsystem $L-A$ can be replaced by diagonal gates acting only on qubits in $L-A$: 
				\begin{align*} 
					\mathcal{D} (\ketbra{b}_A \otimes \rho_{L-A}) = \ketbra{b}_A \otimes \mathcal{D'} (\rho_{L-A})
				\end{align*}
                This leaves the state of the qubits in $A$ unchanged.
                
				\item[(b) ]  This step probabilistically implements noise channels on qubits in computational basis states. This works because $Y$ and $X$ errors act as bit-flip errors on $\ket{b}$ (because global phase introduced by $Y$ can be ignored) while $I$ and $Z$ errors act trivially. Thus, the channel $\mathcal{N}_{p_X',p_Y',p_Z'}$ acts on computational basis states by applying a bit-flip with probability $p_X'+p_Y'$.
			\end{itemize}
			Let $b^{(i)}$ denote the state of $b$ after the $i^{th}$ noise channel is encountered while iterating through the channels of $\tilde{C}_2$. Note that each randomly sampled `trajectory' $(b^{(0)},b^{(1)},\ldots,b^{(f)})$ of $b$ defines how the diagonal gates of $\tilde{C}_3$ are constructed from the diagonal gates of $\tilde{C}_2$, where we use $b^{(f)}$ to denote the final state of $b$. We have 
			\begin{align}
				\Phi_{\tilde{C}_2} \left( \bigl| b^{(0)} \bigr\rangle \bigl\langle b^{(0)} \bigr| \right) = \mathbf{E}_{b^{(0)},b^{(1)},\ldots,b^{(f)}} \Phi_{\tilde{C}_3} \left( \bigl| b^{(f)} \bigr\rangle \bigl\langle b^{(f)} \bigr| \right)
			\end{align}
			Moreover, regardless of the trajectory, $\tilde{C}_3$ contains no channels acting on qubits that are initialized in a computational basis state in $b^{(f)}$.
			\item[Step \ref{Alg1:Step_4}:] This step partitions the qubits into subsets $V_1,\ldots,V_m$ such that $\tilde{C}_3$ contains no diagonal gates crossing any partition.
			\item[Step \ref{Alg1:Step_5}:] Due to step \ref{Alg1:Step_4}, we can iterate through each $j = 1,\ldots,m$ and simulate the portion of $\tilde{C}_3$ acting on each $V_j$ independently. Qubits that are initialized in a computational basis state in $b_{(f)}$ correspond to isolated vertices in $G_{\tilde{C}_3}$, and their connected components are of the form $V_j = \{v\}$ (size 1). The outcome of Hadamard-basis measurement on these qubits is uniformly random. All other connected components involve qubits initialized in the $\ket{+}$ state, which are simulated exactly using state vector simulation.
		\end{itemize}
		
	\end{proof}

	\begin{figure}[!h]
		\begin{subfigure}[b]{0.65\textwidth}
			\centering
			\includegraphics[width=\linewidth]{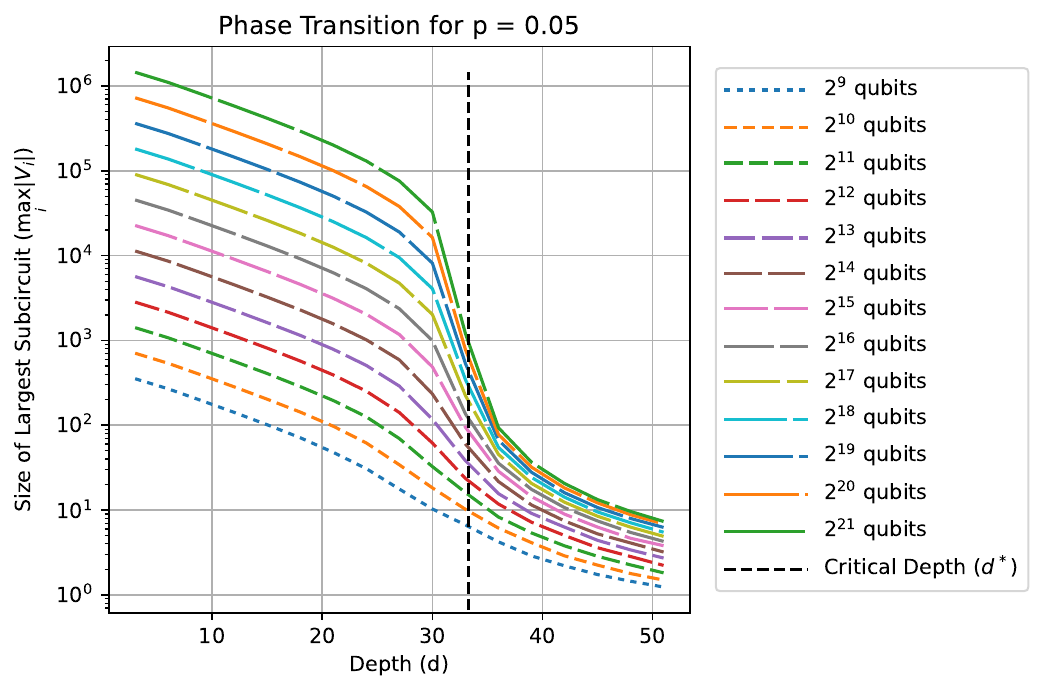}
		\end{subfigure}
		\begin{subfigure}[b]{0.35\textwidth}
			\centering
			\includegraphics[width=\linewidth]{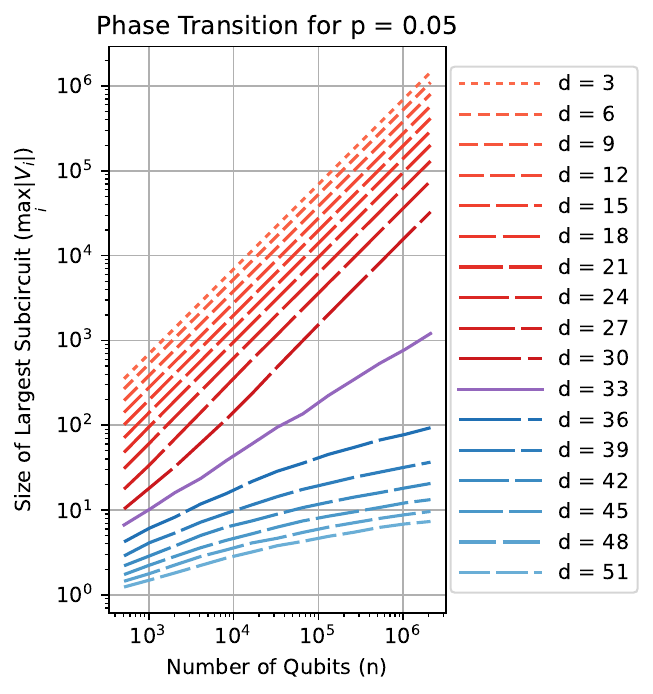}
		\end{subfigure}\vspace{0.5cm}
            \begin{subfigure}[b]{0.65\textwidth}
			\centering
			\includegraphics[width=\linewidth]{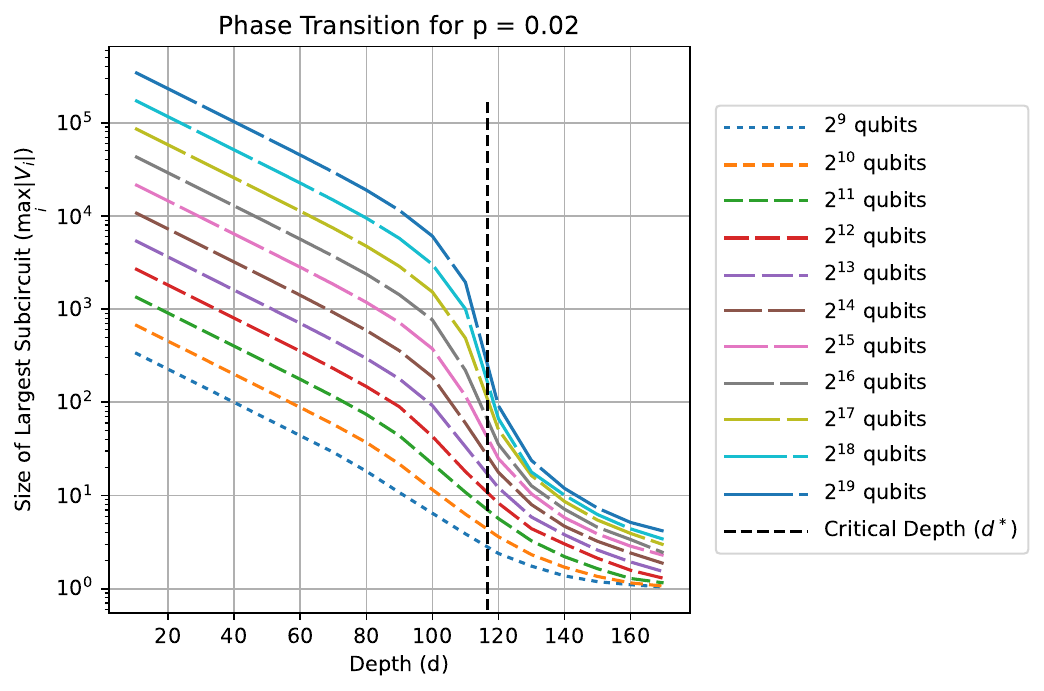}
		\end{subfigure}
		\begin{subfigure}[b]{0.35\textwidth}
			\centering
			\includegraphics[width=\linewidth]{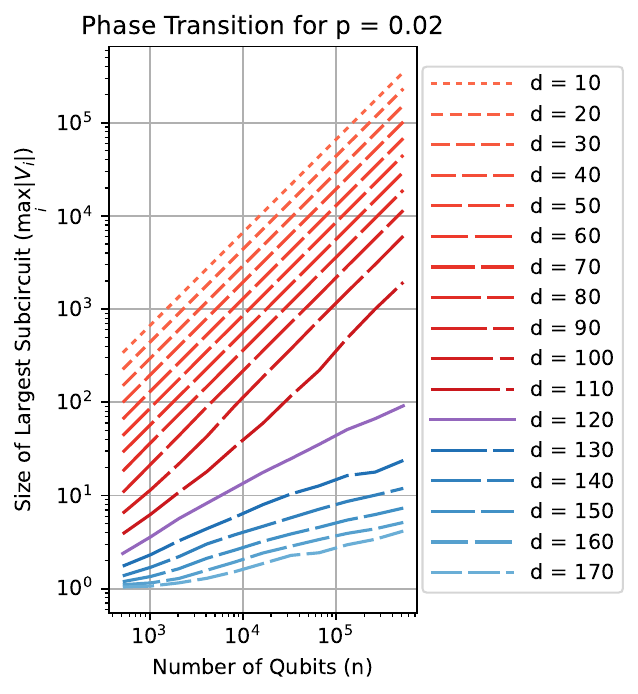}
		\end{subfigure}
		\caption{Using our analysis, the phase transition occurs at $d^* \approx 33$ for $p=0.05$ and $d^* = 117$ for $p=0.02$. In the figures on the left, we plot the size of the largest subcircuit ($\max_i |V_i|$) on a logarithmic scale against the depth of the circuit for different values of $n$. Before percolation occurs, the size of the largest subcircuit should decay exponentially with depth, because the number of non-decohered qubits decays as $\max_i |V_i| \leq (1-2p)^dn$: this is observed in the early linear portion of the plot. After the phase transition, the size should decay further to $\log(n)$ which is observed in the later portion of the plot (note that the y-axis values for each n, which were previously evenly spaced, get closer together). To directly observe the change in scaling with $n$, in the figures on the right, we plot the size of the largest circuit against the number of qubits on a log-log scale for different values of $d$. We see that depths below the phase transition show linear growth of the largest subcircuit with $n$ while depths above the phase transition show logarithmic growth of the largest subcircuit with $n$.}
		\label{fig:Numerics}
	\end{figure}
	
	\subsection{Numerical Observation of Phase Transition at Constant Depth}
	We observe the phase transition in classical simulatability by numerically studying the size of the largest subcircuit that needs to be simulated in \cref{alg:1} for various depths and numbers of qubits. We average over randomly constructed IQP circuits with interspersed noise of strength $p=0.05$ and $p=0.02$ and $2$-local gates (e.g. gate set $\{CS,T\}$). Because our algorithm does not depend on the randomness of the IQP circuits or the gate set used, we sample random $d$-regular graphs (which can always be implemented in either $d$ or $d+1$ layers by Vizing's theorem \cite{Vizing}) and observe the size of the largest connected component when vertices are kept with probability $(1-2p)^d$. The plots in \cref{fig:Numerics} show that a phase transition in the size of the largest subcircuit occurs at the same depth $d$, regardless of the number of qubits to start with. Moreover, this depth is observed to correspond to the analytic value of $d^*$, which shows our bound is reasonably tight.

 \section{Hardness of Noisy IQP circuits below Critical Depth} \label{Sec: Tightness}

So far we have demonstrated that noisy IQP circuits beyond an $O(p^{-1}\log p^{-1})$ depth can be efficiently sampled from.
It could be the case that this bound is not tight, and that there are other algorithms which are able to efficiently sample down to depths of $O(p^{-1})$, say.
Here we demonstrate that this is not the case, by constructing a class of IQP circuits with interspersed dephasing noise which are hard to exactly sample from, at depth $\Theta(p^{-1}\log p^{-1})$.
This provides a lower bound on the regime in which our sampling algorithm can work efficiently, and demonstrates that our bound on the critical depth for classical simulatability $d^* = O(p^{-1}\log p^{-1})$ is asymptotically sharp in terms of $p$.
This shows the existence of a sharp phase transition in the \textit{computational complexity} of noisy IQP circuits at constant depth.

We will construct an IQP circuit with interspersed dephasing noise. Note that this form of noise can be commuted to the end of the circuit, where it becomes bit-flip errors on the output distribution (due to hadamard-basis measurements). Therefore, we introduce the following notation.

 \begin{definition}[Noisy Circuit Sampling Distribution]
    For any noiseless IQP circuit $C$, we will use $P_C$ to denote the output distribution of sampled bitstrings from $C$, and $P_{C,q}$ to denote the output distribution of sampled bitstrings from $C$ with independent bit-flips applied on every output bit with probability $q$.
\end{definition}

Our proof makes use of three prior results, and we provide a brief exposition of each result in the appendix.
The first, from \cite{fujii2016computational}, is that there are noisy IQP circuits for which efficient exact sampling from the output collapses the polynomial hierarchy.
More formally:
\begin{lemma} (from \cite{fujii2016computational}) \label{hardness}
    There exists a family of IQP circuits $C$ constructed from a single layer of $e^{iZ\pi/8}$ gates and 4 layers of $e^{iZZ\pi/4}$ gates such that it is hard classically to exactly sample from the distribution $P_{C,\pfail} $ for $\pfail < 0.134$, unless the polynomial hierarchy collapses to the third level.
\end{lemma}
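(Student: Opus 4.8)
Since Lemma~\ref{hardness} is quoted from \cite{fujii2016computational}, the appendix entry I would supply is an exposition rather than a new proof, and the plan is to fold the standard worst-case hardness of \emph{exact} IQP sampling into an analysis of how interspersed dephasing acts on the output distribution. First recall the Bremner--Jozsa--Shepherd route \cite{bremner2011classical}: an exact classical sampler for an IQP circuit $C$ on $n$ qubits could be used, by sampling from conditionals on $O(n)$ of the output bits and applying Stockmeyer's approximate-counting theorem, to approximate an individual output probability such as $P_C(0^n)$ to within a small multiplicative error in $\mathrm{BPP}^{\mathrm{NP}} \subseteq \Sigma^p_3$; for circuits assembled from $e^{iZ\theta}$ and $e^{iZZ\theta}$ gates these probabilities are (normalized) Ising partition functions, which are \#P-hard to approximate to that precision, so by Toda's theorem the polynomial hierarchy would collapse to its third level. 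The circuit family in the statement is engineered so that this hard Ising computation is realized while keeping all non-Cliffordness in the single layer of $e^{iZ\pi/8}$ ($T$-like) gates and using the four layers of the Clifford gate $e^{iZZ\pi/4}$ to implement a small, fixed-size encoding of the answer.

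Next I would make the noise explicit. As noted above, interspersed dephasing commutes through the diagonal gates to just before the Hadamard-basis measurement, where each accumulated $Z$ error flips the corresponding output bit, so the noisy circuit samples exactly from $P_{C,\pfail}$ with a per-bit flip probability $\pfail$ fixed by the noise strength. Thus $P_{C,\pfail}(y) = \sum_{e}\pfail^{|e|}(1-\pfail)^{n-|e|}P_C(y\oplus e)$ is a ``smoothed'' version of the noiseless partition function, and --- because the Clifford layers encoded each logical output bit into a constant-size block --- applying a classical decoder to samples from $P_{C,\pfail}$ (equivalently, a syndrome-averaged rewriting of the partition function) recovers the logical computation up to a residual logical error controlled by $\pfail$ against the code's decoding threshold.

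The content of the lemma is then the quantitative claim that whenever $\pfail < 0.134$ this residual error is weak enough that approximating the decoded output probability to the multiplicative precision Stockmeyer supplies is still \#P-hard --- equivalently, the $\pfail$-smoothing has not driven the Ising partition function out of its hard regime --- so that rerunning the counting argument on the decoded distribution still forces the collapse of PH. The constant $0.134$ is exactly the point at which this threshold inequality fails for the particular fixed-size code and circuit used in \cite{fujii2016computational} (and the closely related construction of \cite{Bremner_2017}); a cruder bound would only give a smaller admissible $\pfail$.

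The routine parts of this exposition are the noise-commutation bookkeeping and the invocation of the Stockmeyer/Toda machinery. The main obstacle --- and the source of the specific value $0.134$ --- is the last step: tracking how the decoding-failure probability of the constant-size code (equivalently, the attenuation of the Ising partition function under $\pfail$-smoothing) propagates into the multiplicative error of the final estimate, and checking that \#P-hardness survives exactly up to $\pfail = 0.134$. Obtaining this precise constant, rather than merely \emph{some} positive threshold, is where the care is needed.
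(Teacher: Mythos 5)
Your proposal diverges from the paper's argument at the crucial step, and the mechanism you supply for noise robustness does not work. The exposition in the paper (following \cite{fujii2016computational}) is based on topologically protected measurement-based quantum computation: the depth-5 IQP circuit prepares a (three-dimensional, topologically protected) cluster state, Hadamard-basis measurement plus \emph{postselection} simulates adaptive MBQC and hence universal postselected quantum computation, so an exact classical sampler would give $\mathrm{PP}\subseteq \mathrm{postBPP}$ and collapse PH to the third level; robustness to output bit-flips then comes from topological error \emph{detection} carried out inside the postselection argument (one postselects on trivial-syndrome, error-free outcomes), and the constant $0.134$ is the threshold of that analysis, in which error suppression improves with the size of the cluster state. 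Your version attributes the robustness to the four layers of $e^{iZZ\pi/4}$ gates implementing ``a small, fixed-size encoding of each output bit'' that is classically decoded from samples of $P_{C,\pfail}$, with $0.134$ as the decoding threshold of that constant-size code. That is not the construction, and it cannot be made to work as stated: a code of constant block size cannot suppress $13.4\%$ independent bit-flip noise below a constant residual logical error, so after your decoding step you are left with exactly the problem you started with --- showing that hardness survives a constant amount of smoothing --- and your proposal offers no argument for that. Note also that the lemma asserts hardness of sampling from $P_{C,\pfail}$ itself; no efficient decoder mapping noisy samples back to the noiseless logical distribution is claimed or needed, and the growing-block repetition-code idea you are implicitly borrowing belongs to \cref{encoding} (from \cite{Bremner_2017}), where the block size $r$ grows and the gate locality grows with it --- something a fixed depth-5 circuit of $1$- and $2$-local gates cannot accommodate.

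A secondary, less serious point: for \emph{exact} sampling the PH collapse is obtained directly from the postselection argument of \cite{bremner2011classical} (postIQP $=$ PP), without invoking Stockmeyer approximate counting or multiplicative-error estimation of Ising partition functions; the Stockmeyer machinery is the tool for \emph{approximate}-sampling hardness and is not where the constant $0.134$ comes from. So while your opening paragraph describes a legitimate alternative framing of worst-case IQP hardness, the quantitative heart of the lemma --- why hardness persists for all $\pfail<0.134$ --- is missing from your write-up and is supplied in the source by the fault-tolerance/error-detection analysis of the topological cluster state, not by partition-function smoothing.
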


The high-level idea is to take this class of depth-5 circuits which are robust to low-level bit flip noise and construct noisy IQP circuits of depth $d$, which have identical output.
The fundamental problem here is that noise accumulates in deep circuits and will exceed the $0.134$ bound when $d > O(p^{-1})$, so we need some form of error mitigation.
To tackle this, we use a result from \cite{Bremner_2017} which shows that IQP circuits can be made fault-tolerant to arbitrary strengths of bit-flip noise on the output distribution, at the cost of non-locality in the gate set.  
The fundamental idea is that we can use a repetition code to repeat each qubit and gate $r$ times, and then decode with high probability.

\begin{lemma} (from \cite{Bremner_2017}) \label{encoding}
    Let $C$ be an arbitrary IQP circuit constructed of depth $d$ on $n$ qubits. Then, for any parameter $r$, there is an encoded IQP circuit $C'$ where every $k$-local gate in $C$ is encoded into a $kr$-local gate in $C'$, and a decoding algorithm $A$ such that,
    \begin{align}
        A(P_{C',q}) = P_{C, \pfail}
    \end{align}
    where $\pfail \leq (4q(1-q))^{r/2}$.
\end{lemma}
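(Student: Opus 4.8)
The plan is to encode $C$ into $C'$ using an $r$-fold repetition code tailored to the Hadamard-basis measurement, and then to control the decoder's failure probability with a Chernoff-type estimate. This is exactly the building block that, combined with the noise-commutation observation above (dephasing through diagonal gates becomes independent output bit-flips), yields fault tolerance for genuinely noisy IQP circuits, so it suffices to produce a noiseless $C'$ on $nr$ qubits and a decoder $A$ with $A(P_{C',q}) = P_{C,\pfail}$ for the stated $\pfail$.

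For the encoding, partition the $nr$ qubits of $C'$ into $n$ blocks $B_1,\dots,B_n$ of $r$ qubits, with $B_i$ carrying logical qubit $i$ and logical codewords $\ket{+}_L := \ket{+}^{\otimes r}$, $\ket{-}_L := \ket{-}^{\otimes r}$; equivalently the logical $Z$-operator is $\bar Z_i := \prod_{l\in B_i} Z_l$, which is diagonal. The all-$\ket{+}$ logical input encodes to the all-$\ket{+}$ physical input, so $C'$ is launched from $\ket{+}^{\otimes nr}$ as required. Any diagonal $k$-qubit gate of $C$ can be written (up to a global phase) as $\exp\bigl(i\sum_{S} c_S \prod_{j\in S} Z_{i_j}\bigr)$ for real coefficients $c_S$ indexed by subsets $S$ of its $k$ supporting qubits; I encode it by the substitution $Z_{i_j}\mapsto \prod_{l\in B_{i_j}} Z_l$, obtaining a diagonal operator supported on $kr$ qubits. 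Using $Z^{\otimes r}\ket{\pm}^{\otimes r} = \ket{\mp}^{\otimes r}$ one checks that this operator restricts on the codespace to the intended logical gate, so $C'$ is a legitimate IQP circuit of the claimed locality, and running it noiselessly and measuring in the Hadamard basis produces an $nr$-bit string whose $i$-th block is $r$ identical copies of $b_i$, where $(b_1,\dots,b_n)\sim P_C$.

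The decoder $A$ maps an $nr$-bit sample to the $n$-bit string of block-wise majorities (taking $r$ odd, or breaking ties uniformly at random). When each of the $nr$ output bits is independently flipped with probability $q$, block $B_i$ decodes to $b_i$ unless at least $r/2$ of its bits were flipped; these events are independent across blocks and have a common probability $\pfail$ that does not depend on $b_i$, so $A(P_{C',q})$ is exactly $P_{C,\pfail}$ with $\pfail = \Pr[\mathrm{Bin}(r,q)\ge r/2]$ (with random tie-breaking, $\pfail=\Pr[\mathrm{Bin}(r,q)>r/2]+\tfrac12\Pr[\mathrm{Bin}(r,q)=r/2]$, which is no larger). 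For $q\le \tfrac12$ and $j\ge r/2$ we have $q^j(1-q)^{r-j}\le (q(1-q))^{r/2}$, hence
\begin{align*}
\pfail \le \sum_{j\ge r/2}\binom{r}{j} q^j(1-q)^{r-j} \le (q(1-q))^{r/2}\sum_{j=0}^{r}\binom{r}{j} = (4q(1-q))^{r/2}.
\end{align*}

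The one nontrivial idea --- the content of \cite{Bremner_2017} --- is the encoding step: one needs a code whose logical diagonal unitaries are implemented by physical \emph{diagonal} gates (so that $C'$ never leaves the IQP framework), while still correcting the only error type that matters here, namely bit-flips on the Hadamard-basis outcome. The repetition code does this because protecting an IQP circuit reduces to protecting a classical bit against a binary symmetric channel, and block-wise products of $Z$'s realize logical diagonal gates transversally; the unavoidable price is the blow-up from $k$-local to $kr$-local gates. Everything else --- verifying the encoded gate acts correctly on codewords, independence across blocks, and the binomial tail bound --- is routine.
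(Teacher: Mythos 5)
Your proof is correct and follows essentially the same route as the paper's: encode via the $r$-fold repetition code by substituting each logical $Z$ with $Z^{\otimes r}$ on its block (the paper's $e^{iZ_i\pi/8}\to e^{iZ_{iM}\pi/8}$), decode by block-wise majority vote, and bound the per-block failure by a binomial tail to get $\pfail \leq (4q(1-q))^{r/2}$. Your write-up is if anything more careful than the paper's sketch (arbitrary diagonal gates, explicit verification that the noiseless encoded output is block-constant, independence across blocks, the binomial coefficients, and the implicit $q\leq 1/2$ assumption), but there is no substantive difference in approach.
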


\noindent While \cref{encoding} results in an increase in the locality of the gates, we can reduce the locality of the gates to a constant at the expense of greater circuit depth using the following lemma, also referenced in \cite{Bremner_2017}.
\begin{lemma} (from \cite{shepherd2010binary} \label{locality}
    Let $C$ be an arbitrary IQP circuit constructed using gates of the form $e^{iZ^{\otimes k}\pi/8}$ for some $k$ a multiple of $3$, and gates of the form $e^{iZ^{\otimes k'}\pi/4}$ for some $k'$ a multiple of $2$. 
    There exists an equivalent circuit $C'$, such that $P_{C'} = P_C$, where every $e^{iZ^{\otimes k}\pi/8}$ gate in $C$ is decomposed into $k^2/2$ layers of $3$-local gates in $C'$, and every $e^{iZ^{\otimes k'}\pi/8}$ gate in $C$ is decomposed into $k'$ layers of $2$-local gates in $C'$.
\end{lemma}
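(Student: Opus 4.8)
The plan is to reduce the lemma to an exact operator identity that rewrites a weight-$k$ diagonal $Z$-rotation at angle $\pi/8$ (resp.\ a weight-$k'$ one at angle $\pi/4$) as a product of diagonal gates of weight at most $3$ (resp.\ $2$) together with an inconsequential global phase, and then to read off the layer count from an edge-coloring / hypergraph-factorization argument. First I would record the arithmetic identity underlying everything: for bits $x_1,\dots,x_m\in\{0,1\}$ and $r=\sum_i x_i$, the elementary symmetric polynomial $e_j(x)$ equals $\binom{r}{j}$, so expanding $(1-2)^r$ gives the integer identity $\bigoplus_i x_i = \tfrac{1-(-1)^r}{2} = \sum_{j\ge 1}(-1)^{j-1}2^{j-1}e_j(x)$. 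Reducing modulo $8$ kills all terms with $j\ge 4$, giving $\bigoplus_i x_i \equiv e_1 - 2e_2 + 4e_3 \pmod 8$, and reducing modulo $4$ gives $\bigoplus_i x_i\equiv e_1 - 2e_2 \pmod 4$; each congruence can be double-checked directly since both sides depend only on $r$, by verifying $r=0,\dots,7$.

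Next I would translate this into gates. Writing the diagonal entry of $e^{iZ^{\otimes k}\pi/8}$ on $\ket{x}$ as $e^{i\pi/8}(e^{-i\pi/4})^{\bigoplus_i x_i}$ (using $(-1)^{\sum_i x_i}=1-2\bigoplus_i x_i$) and substituting the mod-$8$ identity,
\begin{align*}
e^{iZ^{\otimes k}\pi/8} = e^{i\pi/8}\ \Bigl(\prod_{i}(T^\dagger)_i\Bigr)\Bigl(\prod_{i<j}(CS)_{ij}\Bigr)\Bigl(\prod_{i<j<l}(CCZ)_{ijl}\Bigr),
\end{align*}
since $(e^{-i\pi/4})^{e_1}=\prod_i (T^\dagger)_i$, $(e^{-i\pi/4})^{-2e_2}=i^{e_2}=\prod_{i<j}(CS)_{ij}$, and $(e^{-i\pi/4})^{4e_3}=(-1)^{e_3}=\prod_{i<j<l}(CCZ)_{ijl}$; every factor is diagonal and supported on at most $3$ qubits. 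The identical computation with the mod-$4$ identity gives $e^{iZ^{\otimes k'}\pi/4}=e^{i\pi/4}\bigl(\prod_i (S^\dagger)_i\bigr)\bigl(\prod_{i<j}(CZ)_{ij}\bigr)$, a product of at-most-$2$-local diagonal gates. Because every gate of $C$ is diagonal (so all these replacements commute past the rest of the circuit), performing the substitution gate by gate produces a diagonal circuit $C'$ whose unitary equals that of $C$ up to a global phase; since a global phase does not affect Hadamard-basis measurement statistics, $P_{C'}=P_C$.

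Finally I would bound the depth. For the $\pi/4$ gate the only nontrivial piece is the complete graph of $CZ$'s on the $k'$ involved qubits; since $k'$ is even, $K_{k'}$ has a proper edge-coloring with $k'-1$ colors, i.e.\ it decomposes into $k'-1$ perfect matchings, each a layer of $k'/2$ disjoint $2$-local gates, so with the single layer of $S^\dagger$'s we use at most $k'$ layers of ($\le 2$)-local gates. For the $\pi/8$ gate the dominant piece is the complete $3$-uniform hypergraph of $CCZ$'s on the $k$ involved qubits; by Baranyai's theorem, which applies precisely because $3\mid k$, it decomposes into $\binom{k}{3}\big/(k/3)=\tfrac{(k-1)(k-2)}{2}$ parallel classes, each a layer of $k/3$ disjoint $3$-local $CCZ$'s, and the layer of $T^\dagger$'s together with the $CS$'s on $K_k$ (at most $k$ layers by Vizing's theorem \cite{Vizing}) add only lower-order terms, so the total fits within $\tfrac12 k^2$ layers of ($\le 3$)-local gates. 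This simultaneously explains the hypotheses: $3\mid k$ and $2\mid k'$ are exactly the divisibility conditions under which the required parallel-class / $1$-factorization decompositions exist.

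The step I expect to be the real obstacle is pinning down the arithmetic identity and its gate translation with all factors of $2$ and all signs correct, so that the residual operator genuinely is a pure global phase rather than some leftover diagonal correction; the combinatorial depth bound is then a routine application of Vizing's theorem and a theorem of Baranyai, and the reduction $P_{C'}=P_C$ is immediate once the decomposition is exact.
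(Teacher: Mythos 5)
Your proposal is correct and takes essentially the same route as the paper: expand the weight-$k$ parity phase into diagonal contributions of weight at most $3$ (all higher-weight terms being multiples of $2\pi$ in the exponent), then parallelize the pairwise $CS$/$CZ$ factors by edge-coloring of the complete graph and the $CCZ$ factors by Baranyai's theorem (using $3\mid k$), arriving at the same $k^2/2$ and $k'$ layer counts. Your write-up merely makes the arithmetic more explicit (the elementary-symmetric-polynomial congruences mod $8$ and mod $4$, the exact $T^\dagger$, $CS$, $CCZ$, $S^\dagger$, $CZ$ factorizations, and the global phase), and it shares the paper's harmless slack that the bound $1+k+(k-1)(k-2)/2\le k^2/2$ only holds for $k\ge 4$.
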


We can then combine these results to show that there are deep circuits which survive the application of interspersed dephasing noise right up until the depth at which \cref{alg:1} becomes efficient.
\begin{theorem}[Deep \& Hard-to-Sample Circuits]
    There exist uniform families of noisy circuits of depth $d=\Theta(p^{-1}\log(p^{-1}))$ with interspersed dephasing noise of strength $p$, for which, if there exists an efficient, exact sampling algorithm for the output distribution, the polynomial hierarchy collapses to the third level. 
\end{theorem}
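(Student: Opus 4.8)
The plan is to build the deep hard-to-sample circuits by chaining the three lemmas above and then interspersing dephasing noise, tuning a single free parameter --- the repetition-code distance $r$ --- so that the code's suppression of the induced output noise just barely keeps pace with the noise that accumulates over the depth forced by the locality reduction. Concretely, I would: (i) take $C$ from the depth-$5$ family of \cref{hardness} (one layer of $e^{iZ\pi/8}$ gates, four layers of $e^{iZZ\pi/4}$ gates); (ii) apply the repetition-code encoding of \cref{encoding} with a suitable parameter $r=r(p)$ (a multiple of $3$), producing a depth-$5$ circuit $C'$ with $r$-local and $2r$-local diagonal gates and a poly-time decoder $A$ with $A(P_{C',q})=P_{C,\pfail}$ and $\pfail\le(4q(1-q))^{r/2}$; (iii) apply \cref{locality} to each encoded gate, obtaining a $3$-local circuit $C''$ with $P_{C''}=P_{C'}$ and depth $d''=r^2/2+4\cdot 2r=\Theta(r^2)$, possibly at the cost of $O(1)$ ancilla qubits per gate (initialized in $\ket{+}$ and measured in the Hadamard basis); (iv) let $\tilde{C}''$ be $C''$ with a dephasing channel $\mathcal{N}_{0,0,p}$ on every qubit after every layer. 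Since every step is polynomial-time and (i)--(iii) are deterministic, the resulting family $\tilde{C}''$ is uniform.

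The heart of the argument is the parameter count in (iv). Because $Z$ commutes with every diagonal gate and $HZH=X$, every dephasing channel commutes to the end of the circuit, and the $d''$ copies of $\mathcal{N}_{0,0,p}$ on a wire compose to a single bit-flip of that Hadamard-basis outcome with probability $q=\tfrac12\big(1-(1-2p)^{d''}\big)$. Passing these flips through the local, $O(1)$-size post-processing of \cref{locality} changes this by only a constant factor, and then the decoder $A$ returns samples from $P_{C,\pfail}$ with $\pfail\le(4q(1-q))^{r/2}=(1-(1-2p)^{2d''})^{r/2}$ (the constant absorbed into $d''$). Using $1-2p\le e^{-2p}$ and $d''=\Theta(r^2)$, this bound is at most $\exp(-\tfrac{r}{2}e^{-4p\,d''})$, and a short computation shows the right side falls below $0.134$ exactly when $d''$ lies between an absolute constant and a quantity of order $p^{-1}\log p^{-1}$; choosing $r=\Theta(\sqrt{p^{-1}\log p^{-1}})$ near the upper end of this window puts $d''=\Theta(r^2)=\Theta(p^{-1}\log p^{-1})$ while keeping $\pfail<0.134$, which is the claimed depth.

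To finish: an efficient exact sampler for $P_{\tilde{C}''}=P_{C'',q}$, post-composed with the (polynomial-time) locality-reduction map and the decoder $A$, is an efficient exact sampler for $P_{C,\pfail}$ with $\pfail<0.134$; by \cref{hardness} this collapses the polynomial hierarchy to the third level.

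I expect the main obstacle to be exactly the balancing in step (iv): the locality reduction costs a depth blow-up quadratic in $r$, so raising $r$ to sharpen the code also raises the accumulated dephasing quadratically, and one must verify that the window in which the repetition code still beats the noise closes at $\Theta(p^{-1}\log p^{-1})$ rather than at $\Theta(p^{-1})$ or $\Theta(p^{-1}\log^2 p^{-1})$. A secondary point to handle carefully is that the ancillas introduced by \cref{locality} are themselves noisy and measured, so one has to confirm their bit-flips feed into only $O(1)$ logical output bits, so that they merely rescale $q$ by a constant and leave the asymptotics of both $d''$ and $\pfail$ intact.
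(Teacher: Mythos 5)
Your proposal follows essentially the same route as the paper: chain \cref{hardness}, \cref{encoding}, and \cref{locality}, commute the interspersed dephasing through the diagonal circuit to the output as independent bit-flips of rate $\tfrac{1}{2}\left(1-(1-2p)^{d}\right)$, apply the decoder $A$, and balance the repetition parameter $r$ against the resulting depth $\Theta(r^2)$ so that $\pfail<0.134$ is maintained up to $d=\Theta(p^{-1}\log p^{-1})$, exactly as in the paper's proof. Two minor notes: the locality reduction of \cref{locality} rewrites each encoded gate on the \emph{same} qubits with no ancillas, so your secondary worry about noisy measured ancillas is moot, and your inequality $1-2p\le e^{-2p}$ points the wrong way for the needed lower bound on $(1-2p)^{2d''}$, though the intended estimate $(1-2p)^{2d''}=e^{-\Theta(p\,d'')}$ is valid for small $p$ and leaves the asymptotics unchanged.
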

\begin{proof}
    From \cref{hardness}, there exists a family of IQP circuits $C$ such that if we can exactly sample from $P_{C,\pfail}$ when $\pfail < 0.134$, the polynomial hierarchy collapses to the third level.
    For any parameter $r$, applying \cref{encoding} to $C$ gives us a new family of IQP circuits $C'$ acting on $nr$ qubits which are robust to a higher level of output noise.
    $C'$ is a circuit consisting of 1 layer of $r$-local gates $e^{iZ^{\otimes r}\pi/8}$ and 4 layers of $2r$-local gates $e^{iZ^{\otimes 2r}\pi/4}$. 
    Choosing $r$ to be a multiple of $3$ and applying \cref{locality} to $C'$ gives us a new family of IQP circuits $C''$, involving only $3$-local gates on $nr$ qubits, such that $P_{C''} = P_{C'}$.
    We can bound the depth of this circuit as $d \leq r^2/2 + 4(2r) \leq r^2$ (for sufficiently large $r$).
    Finally, consider the circuit family $\tilde{C}''$, where $C''$ is interspersed with dephasing channels of strength $p$. These dephasing channels commute through the circuit and become bit-flip errors $N_{p_X,0,0}$ on the output, where $p_X = \frac{1-(1-2p)^d}{2}$. Now, using decoder $A$, we can relate $P_{\tilde{C''}}$ back to the hard-to-sample distribution $P_C$ as follows,
    \begin{align*}
    A(P_{\tilde{C}''}) &= A\left(P_{C'', \frac{1-(1-2p)^d}{2}}\right)\\
    &= A\left(P_{C', \frac{1-(1-2p)^d}{2}}\right)\\
    &= P_{C,\left( 4\left(\frac{1-(1-2p)^d}{2}\right)
        \left(\frac{1+(1-2p)^d}{2} \right)\right)^{\sqrt{d}/2}}
    \end{align*}
    As it is hard to sample from $P_{C,\pfail}$ for $\pfail < 0.134$, we can compute a sufficient condition for the hardness of exact sampling from $P_{\tilde{C}''}$ as,
    \begin{align*}
        \left( 4\left(\frac{1-(1-2p)^d}{2}\right)
        \left(\frac{1+(1-2p)^d}{2} \right)\right)^{\sqrt{d}/2} &< 0.134\\
        \left(1-(1-2p)^{2d}\right)^{\sqrt{d}/2} &< 0.134\\
         e^{-(1-2p)^{2d}\sqrt{d}/2} &< 0.134\\
         (1-2p)^{2d}\sqrt{d}/2 &> -\log(0.134)\\
         (1-2p)^{4d}4d &> 16(\log(0.134))^2
    \end{align*}
    Using \cref{Lemma:d} (in appendix), this inequality is satisfied when $d < \Theta(p^{-1}\log(p^{-1}))$. Thus, we can construct noisy IQP circuits of depth $d = \Theta(p^{-1}\log(p^{-1}))$ that are hard to exactly sample from.
\end{proof}

	\section{Applications \& Relation to Previous Work}

	\subsection{Limitations of Fault-Tolerance in IQP Circuits} \label{Sec: Fault-Tolerance}
	
	Since our results allow classical simulation of the ``worst-case" noisy circuit past some depth (e.g. the most noise-robust circuit possible at that depth), this allows us to rule out some forms of fault tolerance in IQP circuits in certain parameter regimes.
	Specifically, suppose there exists some fault-tolerance protocol $A$ that, for any noise parameter $p$, encodes a noiseless `logical' IQP circuit $C$ of $k$-local gates of depth $d$ on $n$ qubits into a `physical' circuit IQP $C'$ of $k$-local gates of depth $d' \geq d$ on $\poly(n)$ qubits with interspersed noise of strength $p$, such that one can recover an additive approximation to $P_C$ using samples from $P_{C'}$. 
	Our results show that $A$ cannot be defined for physical circuits of depth $d' \geq d^*$ (and therefore also for logical circuits of depth $d \geq d^*$), where $d^* \leq O(p^{-1}\log(kp^{-1}))$, assuming some complexity-theoretic conjectures. 
	
	The argument follows from \cite{bremner2016average} and \cite{Hangleiter_2018}, which show that there exist IQP circuits that are hard to sample from with additive approximation error, assuming some complexity-theoretic conjectures.
	If a protocol $A$ existed and worked correctly, we could take one of the encoded physical circuits, simulate it in polynomial time using \cref{alg:1}, and thus reproduce this `hard' distribution in polynomial time, thus causing complexity theoretic collapse. 
 
Intuitively, our algorithm suggests that fault-tolerance protocols which remain within the IQP framework cannot correct interspersed errors faster in depth than they build up (using only $O(1)$-local operations), and so must fail after some constant depth.

 \subsection{Anticoncentrated IQP Circuits and Classical Simulatibility} \label{Sec: Anticoncentration}
    It is interesting to consider how our classical simulation algorithm relates to previous work on demonstrating quantum computational advantage using IQP circuits that have anti-concentration behavior. In particular, \cite{Bremner_2017} provide a classical algorithm that approximately samples from noisy IQP circuits in quasi-polynomial time, assuming only anti-concentration of the output distribution (i.e. no requirements on circuit depth). It has been shown (\cite{Hangleiter_2018}) that certain families of IQP circuits anti-concentrate after only $4$ layers. Such circuits are leading candidates for quantum supremacy demonstrations, as this anti-concentration property is required in most proofs of hardness of approximate sampling \cite{bremner2016average,Hangleiter_2018,hangleiter2206computational}. However, for small noise strengths, a noisy implementation of the circuit family of \cite{Hangleiter_2018} would be classically simulatable using \cite{Bremner_2017}'s algorithm. In this regime, where one runs an anti-concentrated IQP circuit without any fault-tolerant encoding, our algorithm does not improve on the algorithm of \cite{Bremner_2017}. 
    
    In light of the classical simulation result of \cite{Bremner_2017}, a natural thought might be to encode the ``bare'' anti-concentrated IQP circuit into a noise-robust IQP circuit that is not anti-concentrated, and so cannot be simulated by the algorithm of \cite{Bremner_2017}. Indeed, this is exactly what \cite{Bremner_2017} propose in the later portion of their work. Our simulation algorithm rules out scalable quantum advantage with this idea, for any encoding scheme that remains within the IQP framework and requires a $\omega(1)$ depth overhead. For example, the encoding scheme described in \cite{Bremner_2017} results in an $\Omega((\log n)^2)$ blowup in depth, which would make the circuit simulatable using our techniques, or alternatively an $\Omega(\log n)$ blowup in gate locality, which would make it difficult to implement in practice.\footnote{This can be seen by applying \cref{encoding} and \cref{locality} with parameter $r \propto \log n$}

 \subsection{Computational Quantum-Classical Boundary of Noisy IQP Circuits}
	
	We compare our results to work by \citeauthor{fujii2016computational} \cite{fujii2016computational}, where they consider exact sampling from the output distribution of a family of fixed depth (e.g. depth-5) IQP circuits on $n$ with a variable noise strength $p$.
	They show that for this family, there exists a threshold noise strength $p_1$ such that for $p<p_1$, exact sampling from this family is classically hard (unless the polynomial hierarchy collapses to the third level) while there is a separate critical value $p_2$ such that if $p>p_2$, there is an efficient classical simulation algorithm for exact sampling.
 Our results extend their results, by providing an exact sampling algorithm for noisy IQP circuits at depths above some critical depth $d_1$, and a hardness result for noisy IQP circuits at depths below some critical depth $d_2$, where $d_1$ and $d_2$ are asymptotically equivalent functions of the noise strength $p$.
 Thus, we exhibit a similar ``Quantum-Classical Boundary'' to \citeauthor{fujii2016computational} \cite{fujii2016computational} in depth rather than noise strength.

	\subsection{Depth Lower Bounds for QAOA}
	The canonical QAOA circuit involves a tunable cost function Hamiltonian $H_c$, typically implemented with $R_{ZZ}(\gamma)$ gates, and a tunable mixer Hamiltonian $H_m$, typically this is the transverse-field mixer implemented with $R_{X}(\beta)$ gates on every qubit. These are alternatively applied in $r$ rounds on $\ket{+}$ states, followed by measurement in the computational basis. The final state is therefore
	\begin{align}
		\prod_{j=1}^r e^{iH_m\beta_j} e^{iH_c\gamma_j} \ket{+}^{\otimes n}
	\end{align}
	Our analysis applies to any QAOA circuit where the cost function is diagonal in the computational basis (i.e. the goal is to approximate the ground state of a classical Hamiltonian). As an example, we can consider QAOA for the MAXCUT problem.
	
	\begin{lemma}
        Let $G$ be a graph with maximum degree $\Delta$. The output distribution of a noisy $r$-round QAOA circuit to solve MAXCUT on $G$ can be sampled from with additive approximation error when $\Delta > \Delta^*$, where $\Delta^*$ is $O(p^{-1}\log(rp^{-1}))$.
	\end{lemma}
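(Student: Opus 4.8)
The plan is to reduce this statement to \cref{Theorem:Main} (or rather its \cref{corollary} about additive-error approximate sampling) by identifying an $r$-round QAOA circuit for MAXCUT on $G$ with a noisy IQP circuit and checking that its interaction graph has the right maximum degree. First I would note that the cost Hamiltonian $H_c = \sum_{(u,v)\in E(G)} Z_u Z_v$ is diagonal in the computational basis, so each $e^{iH_c\gamma_j}$ factorizes into commuting two-local diagonal gates $e^{i\gamma_j Z_u Z_v}$, one per edge of $G$; these are exactly $2$-local IQP gates. The mixer $e^{iH_m\beta_j} = \prod_v e^{i\beta_j X_v}$ is a layer of single-qubit $X$-rotations. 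Since $e^{i\beta X} = H e^{i\beta Z} H$, we can conjugate: inserting $H^{\otimes n} H^{\otimes n} = I$ between rounds lets us absorb the Hadamards so that, up to a change of basis at the boundary, the whole circuit becomes a diagonal circuit on $\ket{+}^{\otimes n}$ followed by Hadamard-basis measurement — i.e. an IQP circuit in the sense of the paper — with the single-qubit $Z$-rotations from the mixer being harmless $1$-local diagonal gates that do not add edges to the interaction graph. (One has to be slightly careful that the final measurement in QAOA is in the computational basis while IQP uses the Hadamard basis, but this is just one more layer of Hadamards, which commutes with the accounting of noise as in \cref{Lemma:Commutation}; alternatively one folds it into the last mixer.)

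Next I would set up the noise model: "noisy $r$-round QAOA" means a Pauli noise channel $\mathcal{N}_{p_X,p_Y,p_Z}$ on each qubit after each gate (or after each layer), with $p = p_Z + \min(p_X,p_Y)$ as in \cref{Theorem:Main}. The key structural point is the interaction graph of this IQP circuit: a qubit $v$ is touched by at most two $2$-local gates per round (since the edge-gates $e^{i\gamma_j Z_uZ_v}$ within a single round on a bounded-degree graph can be colored — but actually we do not even need Vizing's theorem here since depth is measured per round), namely one gate for each of its $\deg_G(v) \le \Delta$ neighbors, repeated across $r$ rounds. Hence the interaction graph $G_C$ has maximum degree $\Delta_{G_C} \le \Delta$, and crucially it is $\Delta$ that controls connectivity, not the total depth $r\Delta$ — the edge multiset is just $r$ copies of $E(G)$, so $G_C$ has the same adjacency structure as $G$. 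The circuit depth (number of diagonal layers) is $d = O(r\Delta)$ once we decompose each round into layers, and each qubit is hit by noise with probability $1 - (1-2p)^d$, so it survives (remains in $\ket{+}$) with probability $q = (1-2p)^d$. By the percolation lemma in \cref{sec:Phase_Transition}, once $q < 1/\Delta_{G_C}$, i.e. $(1-2p)^{O(r\Delta)} < 1/\Delta$, all connected components after removing decohered qubits are $O(\log n)$, and \cref{alg:1} runs efficiently. Solving $(1-2p)^{O(r\Delta)} \le 1/\Delta$ for $\Delta$ gives $\Delta \ge \Delta^*$ with $\Delta^* = O\!\left(\tfrac{\log \Delta}{r\,p}\right) = O(p^{-1}\log(rp^{-1}))$ after absorbing the mild dependence (using $r \ge 1$ and that $\log\Delta \le \log(\text{something polynomial in the relevant parameters})$; the clean bound $O(p^{-1}\log(rp^{-1}))$ follows exactly as in the derivation of $d^*$ in \cref{Theorem:Main} with the role of $d$ played by $r\Delta$ and the role of $k-1$ played by $1$). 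Finally, hardness of approximate (additive-error) sampling from the noiseless MAXCUT-QAOA output being the relevant benchmark, the conclusion is stated as "can be sampled from with additive approximation error," matching the \cref{corollary}.

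The main obstacle is the basis-change bookkeeping in the first step: QAOA is usually written with a transverse-field mixer and a computational-basis measurement, which is not literally the IQP normal form (diagonal unitary on $\ket{+}$, Hadamard-basis measurement), so one must argue carefully that conjugating every layer by Hadamards turns it into an IQP circuit \emph{without} changing the interaction graph's degree and \emph{without} the interspersed Pauli noise changing character in a way that breaks the hypothesis $p = p_Z + \min(p_X,p_Y)$ — e.g. a Hadamard conjugation swaps $X \leftrightarrow Z$ noise, so one should either assume the noise is symmetric enough (depolarizing, or dephasing stated in the appropriate basis) or track how $p$ transforms. A secondary subtlety is that the "depth" of a round is $\Theta(\Delta)$ (the edge-gates of $G$ need $\Theta(\Delta)$ layers by Vizing), so one must be careful that the exponent in $(1-2p)^d$ is $\Theta(r\Delta)$, not $\Theta(r)$ — this is exactly why the threshold comes out as a function of $\Delta$ and $r$ together and why the $\log(rp^{-1})$ (rather than $\log(p^{-1})$) appears. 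I expect everything else to be a direct invocation of \cref{alg:1}, \cref{lemma:correctness}, and the runtime bound of \cref{Theorem:Main}.
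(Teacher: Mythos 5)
Your opening reduction is where the argument breaks: an $r$-round QAOA circuit with $r\geq 2$ is \emph{not} an IQP circuit up to a boundary change of basis. Writing $e^{i\beta X}=He^{i\beta Z}H$ and inserting $H^{\otimes n}H^{\otimes n}=I$ does not let the Hadamards cancel, because conjugating the next cost layer by $H^{\otimes n}$ turns the diagonal $e^{i\gamma Z_uZ_v}$ rotations into non-diagonal $X_uX_v$ rotations; the circuit genuinely alternates between two non-commuting bases, and no global basis change removes the interior mixer layers. As a consequence, the step where you treat a single dephasing event anywhere in the circuit as permanently ``classicalizing'' a qubit (so that its edges are removed once, with survival probability $q=(1-2p)^{O(r\Delta)}$) is invalid: \cref{Lemma:Commutation} and \cref{Lemma:Diagonal} only apply within a diagonal block, and a decohered qubit sitting in a computational basis state is taken to a non-computational-basis state by the next mixer $e^{i\beta X}$, after which the following cost layer can re-entangle it with its neighbors. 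This is exactly the obstruction the paper flags as the non-trivial issue, and your proposal never addresses it, so the percolation argument as you set it up does not yield disconnected components.

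The paper's fix is different: a qubit's edges can only be removed if it receives a dephasing error in \emph{every one} of the $r$ rounds (so it stays in a computational basis state across all mixer layers), which happens with probability $1-q=(1-(1-2p)^d)^r$ where $d\geq\Delta$ is the depth needed to implement $H_c$ with $R_{ZZ}$ gates. Percolation then requires $q<1/\Delta$, and bounding $q\leq(1-2p)^{\Delta}r$ together with \cref{Lemma:d} gives $\Delta^*=O(p^{-1}\log(rp^{-1}))$. Note the qualitative difference from your accounting: in the correct analysis more rounds make it \emph{harder} to decohere a qubit for the whole circuit (the threshold grows with $r$ through the $\log(rp^{-1})$ factor), whereas in your version more rounds only add noise and your threshold $O(\log\Delta/(rp))$ shrinks with $r$ --- the fact that you arrive at the stated asymptotic form is an artifact of loose bookkeeping rather than a correct derivation. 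Your secondary worry about Hadamard conjugation swapping $X$ and $Z$ noise is moot once one abandons the (false) IQP reduction and works round-by-round as the paper does.
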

	\begin{proof}
		The key insight is that each round of QAOA resembles an IQP circuit. Suppose the depth $d$ required to implement $H_c$ is greater than the critical depth required for percolation to occur.
        Using the same sampling technique of Algorithm \ref{alg:1}, the resultant state vector can be classically stored as a set of decohered qubits, which are sampled as computational basis states, and a set of $O(\log n)$-sized connected components of non-decohered qubits. 
        
        The non-trivial issue to resolve is that subsequent rounds of QAOA can entangle the $O(\log(n))$-sized connected components with each other, making the state vector simulation unmanageable. That is, a decohered qubit (in a computational basis state) may be taken to some non-computational basis state by the application of $H_m$, after which the subsequent application of $H_c$ may entangle it with other qubits. One way to avoid this is for each decohered qubit to receive a dephasing error in every single round (between every $H_m$ application), thus keeping the decohered qubit in a computational basis state throughout the circuit. The probability that an error occurs on the same qubit in every round is $1-q = (1-(1-2p)^d)^r$. When $q < \frac{1}{\Delta}$, percolation will occur and the circuit will become classically simulatable. Because we are considering implementations of $H_C$ involving $R_{ZZ}$ gates, $\Delta \leq d$ (where $d$ is the depth of the circuit required to construct $H_C$). Therefore, we have,
		\begin{align*}
                q =
			(1-(1-(1-2p)^d)^r) \leq 
			(1-2p)^d\left[\sum_i^r (1-(1-2p)^d)^i\right] \leq 
			(1-2p)^dr \leq (1-2p)^{\Delta}r
		\end{align*}
        By \cref{Lemma:d} (in appendix), $(1-2p)^{\Delta}r < \frac{1}{\Delta}$ will be satisfied at $\Delta > \Delta^*$ for some critical degree $\Delta^* = O(p^{-1}\log(rp^{-1})$. At this point, percolation occurs with high probability, and therefore the circuit is classically simulatable.
	\end{proof}

	\section{Conclusion and Outlook}
	Our results show that dephasing and depolarizing noise make sampling from IQP circuits classically easy, thus putting tighter bounds on the regime in which quantum supremacy can be achieved using IQP sampling.
	
	On a theoretical level, our algorithm explicitly demonstrates how noise can remove quantum resources from a system to make it classically simulatable. In particular, our algorithm takes advantage of the fact that noise reduces the coherence of the initial state of the IQP circuit, and this, in turn, reduces the amount of entanglement that can be built up. 
    We speculate that the perspective of circuit percolation under noise may provide a framework for future classical simulation algorithms, both in IQP and more general cases, such as random quantum circuits. For example, recent work \cite{new_work} builds off our results to simulate noisy linear-optical circuits. We conjecture that the loss of quantum advantage at $\Omega(1)$ depths may be observed in a variety of NISQ circuits.
    In particular, we highlight the fact that most existing classical simulation algorithms for NISQ circuits take advantage of the convergence of the output distribution to the uniform distribution (due to noise or randomness of the circuit). This usually requires a depth that depends on the system size. However, our results are an example of convergence to a classically simulatable distribution that occurs at an $\Omega(1)$ depth which only depends on the noise strength. This is a common feature of graph percolation results --- that the onset of percolation is only determined by the local connectivity and the noise strength, but not the system size.

    We also note the similarities of our results with the phase transition in the cross-entropy benchmark for the task for noisy random circuit sampling observed in \cite{morvan2023phase, ware2023sharp}.
    Here the cross-entropy, which is used as a proxy for fidelity between the actual device and the idealized case, demonstrates that for a noisy quantum computer the device achieves good fidelity with the ideal quantum computer up to a sharp cut-off which depends on the circuit architecture.

    Finally, we note that our results here suggest that any fault-tolerant scheme for IQP circuits seems to require intermediate circuit measurements, or at least the ability to implement highly non-local gates. 
    In this sense, making an IQP circuit fault-tolerant (without leaving the IQP framework) seems to be qualitatively as hard as making a generic, non-IQP circuit fault-tolerant.

	\section*{Acknowledgements}
	{\begingroup
		\hypersetup{urlcolor=navyblue}
		We thank \href{https://orcid.org/0000-0002-4766-7967}{Dominik Hangleiter} and \href{https://orcid.org/0000-0003-3974-2987}{Michael Gullan}s for useful discussions about the sampling algorithm and IQP sampling.
        We also recognize useful comments and discussions with \href{https://orcid.org/0000-0001-5640-0343}{Ashley Montanaro} about this work and error correction in IQP circuits.
        Furthermore, we thank \href{https://orcid.org/0000-0002-8972-4413}{Alexander (Sasha) Barg} for discussions regarding graph percolation. We gratefully acknowledge discussions held at the Simons Institute workshop on ``Quantum Computational Advantage, Noisy Shallow Circuits, and Fault Tolerance," 2024. We are also thankful to anonymous reviewers for FOCS and SODA whose valuable comments helped improve our results.
		\endgroup}
	
	JR acknowledges support from the National Science Foundation Graduate Research Fellowship Program under Grant No. DGE 1840340.
	JDW acknowledges support from the United States Department of Energy, Office of Science, Office of Advanced Scientific Computing Research, Accelerated Research in Quantum Computing program, and also NSF QLCI grant OMA-2120757.
	YKL acknowledges support from NIST, DOE ASCR (Fundamental Algorithmic Research for Quantum Computing, award No. DE-SC0020312), NSF QLCI grant OMA-2120757, and ARO (Quantum Algorithms for Algebra and Discrete Optimization, grant number W911NF-20-1-0015).

	{\begingroup
		\hypersetup{urlcolor=navyblue}
		\printbibliography[heading=bibintoc]

@article{Vizing,
	Author = {Vizing, V.  G. },
	Da = {1965/05/01},
	Date-Added = {2024-02-22 23:19:08 +0000},
	Date-Modified = {2024-02-22 23:19:08 +0000},
	Doi = {10.1007/BF01885700},
	Id = {Vizing1965},
	Isbn = {1573-8337},
	Journal = {Cybernetics},
	Number = {3},
	Pages = {32--41},
	Title = {The chromatic class of a multigraph},
	Ty = {JOUR},
	Url = {https://doi.org/10.1007/BF01885700},
	Volume = {1},
	Year = {1965},
	Bdsk-Url-1 = {https://doi.org/10.1007/BF01885700},
	Bdsk-Url-2 = {http://dx.doi.org/10.1007/BF01885700}}

@misc{gao2018efficient,
      title={Efficient classical simulation of noisy quantum computation}, 
      author={Xun Gao and Luming Duan},
      year={2018},
      eprint={1810.03176},
      archivePrefix={arXiv},
      primaryClass={quant-ph}
}

@article{dsfranca,
	Abstract = {Recent successes in producing intermediate-scale quantum devices have focused interest on establishing whether near-term devices could outperform classical computers for practical applications. A central question is whether noise can be overcome in the absence of quantum error correction or if it fundamentally restricts any potential quantum advantage. We present a transparent way of comparing classical and quantum algorithms running on noisy devices for a large family of tasks that includes optimization and variational eigenstate solving. Our approach is based on entropic inequalities that determine how fast the quantum state converges to the fixed point of the noise model, together with established classical methods of Gibbs state simulation. Our techniques are extremely versatile and so may be applied to a large variety of algorithms, noise models and quantum computing architectures. We use our result to provide estimates for problems within reach of current experiments, such as quantum annealers or variational quantum algorithms. The bounds we obtain indicate that substantial quantum advantages are unlikely for classical optimization unless noise rates are decreased by orders of magnitude or the topology of the problem matches that of the device. This is the case even if the number of available qubits increases substantially.},
	Author = {Stilck Fran{\c c}a, Daniel and Garc{\'\i}a-Patr{\'o}n, Raul},
	Da = {2021/11/01},
	Date-Added = {2024-02-12 21:33:42 +0000},
	Date-Modified = {2024-02-12 21:33:42 +0000},
	Doi = {10.1038/s41567-021-01356-3},
	Id = {Stilck Fran{\c c}a2021},
	Isbn = {1745-2481},
	Journal = {Nature Physics},
	Number = {11},
	Pages = {1221--1227},
	Title = {Limitations of optimization algorithms on noisy quantum devices},
	Ty = {JOUR},
	Url = {https://doi.org/10.1038/s41567-021-01356-3},
	Volume = {17},
	Year = {2021},
	Bdsk-Url-1 = {https://doi.org/10.1038/s41567-021-01356-3},
	Bdsk-Url-2 = {http://dx.doi.org/10.1038/s41567-021-01356-3}}

@article{krivelevich2015phase,
  title={The phase transition in site percolation on pseudo-random graphs},
  author={Krivelevich, Michael},
  journal={arXiv preprint arXiv:1404.5731},
  year={2014}
}

@article{Chatzigeorgiou_2013,
   title={Bounds on the Lambert Function and Their Application to the Outage Analysis of User Cooperation},
   volume={17},
   ISSN={1089-7798},
   url={http://dx.doi.org/10.1109/LCOMM.2013.070113.130972},
   DOI={10.1109/lcomm.2013.070113.130972},
   number={8},
   journal={IEEE Communications Letters},
   publisher={Institute of Electrical and Electronics Engineers (IEEE)},
   author={Chatzigeorgiou, Ioannis},
   year={2013},
   month=aug, pages={1505–1508} }

@Article{Corless1996,
author={Corless, R. M.
and Gonnet, G. H.
and Hare, D. E. G.
and Jeffrey, D. J.
and Knuth, D. E.},
title={On the LambertW function},
journal={Advances in Computational Mathematics},
year={1996},
month={Dec},
day={01},
volume={5},
number={1},
pages={329-359},
abstract={The LambertW function is defined to be the multivalued inverse of the functionw {\textrightarrow}wew. It has many applications in pure and applied mathematics, some of which are briefly described here. We present a new discussion of the complex branches ofW, an asymptotic expansion valid for all branches, an efficient numerical procedure for evaluating the function to arbitrary precision, and a method for the symbolic integration of expressions containingW.},
issn={1572-9044},
doi={10.1007/BF02124750},
url={https://doi.org/10.1007/BF02124750}
}

@Inbook{Grimmett1999,
author="Grimmett, Geoffrey",
title="Some Basic Techniques",
bookTitle="Percolation",
year="1999",
publisher="Springer Berlin Heidelberg",
address="Berlin, Heidelberg",
pages="32--52",
abstract="The probability $\theta$(p), that the origin belongs to an infinite open cluster, is a non-decreasing function of p. This is intuitively obvious since an increase in the value of p leads generally to an increase in the number of open edges of Ld, thereby increasing the number and lengths of open paths from the origin. Another way of putting this is to say that {\{}|C| = ∞{\}} is an increasing event, in the sense that: if $\omega$ ∈ {\{}|C|=∞{\}} then $\omega${\textasciiacutex} ∈ {\{}|C|=∞{\}} whenever $\omega$ ≤ $\omega${\textasciiacutex} With such an example in mind we make the following definition. The event A in ℱ is called increasing if IA($\omega$) ≤ IA($\omega${\textasciiacutex}) whenever $\omega$ ≤ $\omega${\textasciiacutex}, where IAis the indicator function of A. We call A decreasing if its complement A is increasing.",
isbn="978-3-662-03981-6",
doi="10.1007/978-3-662-03981-6_2",
url="https://doi.org/10.1007/978-3-662-03981-6_2"
}

@article{arute2019quantum,
  title={Quantum supremacy using a programmable superconducting processor},
  author={Arute, Frank and Arya, Kunal and Babbush, Ryan and Bacon, Dave and Bardin, Joseph C and Barends, Rami and Biswas, Rupak and Boixo, Sergio and Brandao, Fernando GSL and Buell, David A and others},
  journal={Nature},
  volume={574},
  number={7779},
  pages={505--510},
  year={2019},
  publisher={Nature Publishing Group}
}

@article{bremner2016average,
  title={Average-case complexity versus approximate simulation of commuting quantum computations},
  author={Bremner, Michael J and Montanaro, Ashley and Shepherd, Dan J},
  journal={Physical review letters},
  volume={117},
  number={8},
  pages={080501},
  year={2016},
  publisher={APS}
}

@misc{paletta2023robust,
      title={Robust sparse IQP sampling in constant depth}, 
      author={Louis Paletta and Anthony Leverrier and Alain Sarlette and Mazyar Mirrahimi and Christophe Vuillot},
      year={2023},
      eprint={2307.10729},
      archivePrefix={arXiv},
      primaryClass={quant-ph}
}

@article{bremner2011classical,
  title={Classical simulation of commuting quantum computations implies collapse of the polynomial hierarchy},
  author={Bremner, Michael J and Jozsa, Richard and Shepherd, Dan J},
  journal={Proceedings of the Royal Society A: Mathematical, Physical and Engineering Sciences},
  volume={467},
  number={2126},
  pages={459--472},
  year={2011},
  publisher={The Royal Society Publishing}
}

@misc{shepherd2010binary,
      title={Binary Matroids and Quantum Probability Distributions}, 
      author={Dan Shepherd},
      year={2010},
      eprint={1005.1744},
      archivePrefix={arXiv},
      primaryClass={cs.CC}
}

@Article{fujii2016computational,
author={Fujii, Keisuke
and Tamate, Shuhei},
title={Computational quantum-classical boundary of noisy commuting quantum circuits},
journal={Scientific Reports},
year={2016},
month={May},
day={18},
volume={6},
number={1},
pages={25598},
abstract={It is often said that the transition from quantum to classical worlds is caused by decoherence originated from an interaction between a system of interest and its surrounding environment. Here we establish a computational quantum-classical boundary from the viewpoint of classical simulatability of a quantum system under decoherence. Specifically, we consider commuting quantum circuits being subject to decoherence. Or equivalently, we can regard them as measurement-based quantum computation on decohered weighted graph states. To show intractability of classical simulation in the quantum side, we utilize the postselection argument and crucially strengthen it by taking noise effect into account. Classical simulatability in the classical side is also shown constructively by using both separable criteria in a projected-entangled-pair-state picture and the Gottesman-Knill theorem for mixed state Clifford circuits. We found that when each qubit is subject to a single-qubit complete-positive-trace-preserving noise, the computational quantum-classical boundary is sharply given by the noise rate required for the distillability of a magic state. The obtained quantum-classical boundary of noisy quantum dynamics reveals a complexity landscape of controlled quantum systems. This paves a way to an experimentally feasible verification of quantum mechanics in a high complexity limit beyond classically simulatable region.},
issn={2045-2322},
doi={10.1038/srep25598},
url={https://doi.org/10.1038/srep25598}
}

@article{BARANYAI1979276,
title = {The edge-coloring of complete hypergraphs I},
journal = {Journal of Combinatorial Theory, Series B},
volume = {26},
number = {3},
pages = {276-294},
year = {1979},
issn = {0095-8956},
doi = {https://doi.org/10.1016/0095-8956(79)90002-9},
url = {https://www.sciencedirect.com/science/article/pii/0095895679900029},
author = {Zsolt Baranyai},
abstract = {This paper contains and generalizes the solution of the following classical problem: If h | n then the h-element subsets of an n-element set can be partitioned into (h−1n−1) classes so that every class contains nh disjoint h-element sets and every h-element set appears in exactly one class. A short formulation of this statement is: If h | n then the hypergraph Knh is 1-factorizable. In this paper we study the factorization and edge-coloring problems of the hypergraph Krxmh (which is the complete, regular, h-uniform, r-partite hypergraph with m vertices in each of the r classes of vertices).}
}

@article{Bremner_2017,
   title={Achieving quantum supremacy with sparse and noisy commuting quantum computations},
   volume={1},
   ISSN={2521-327X},
   url={http://dx.doi.org/10.22331/q-2017-04-25-8},
   DOI={10.22331/q-2017-04-25-8},
   journal={Quantum},
   publisher={Verein zur Forderung des Open Access Publizierens in den Quantenwissenschaften},
   author={Bremner, Michael J. and Montanaro, Ashley and Shepherd, Dan J.},
   year={2017},
   month=apr, pages={8} }

@inproceedings{Aharonov_2023, series={STOC ’23},
   title={A Polynomial-Time Classical Algorithm for Noisy Random Circuit Sampling},
   url={http://dx.doi.org/10.1145/3564246.3585234},
   DOI={10.1145/3564246.3585234},
   booktitle={Proceedings of the 55th Annual ACM Symposium on Theory of Computing},
   publisher={ACM},
   author={Aharonov, Dorit and Gao, Xun and Landau, Zeph and Liu, Yunchao and Vazirani, Umesh},
   year={2023},
   month=jun, collection={STOC ’23} }

@article{jordan2009permutational,
  title={Permutational quantum computing},
  author={Jordan, Stephen P},
  journal={arXiv preprint arXiv:0906.2508},
  year={2009}
}

@article{shepherd2010quantum,
  title={Quantum Complexity: restrictions on algorithms and architectures},
  author={Shepherd, Daniel James},
  journal={arXiv preprint arXiv:1005.1425},
  year={2010}
}

@article{bermejo2018architectures,
  title={Architectures for quantum simulation showing a quantum speedup},
  author={Bermejo-Vega, Juan and Hangleiter, Dominik and Schwarz, Martin and Raussendorf, Robert and Eisert, Jens},
  journal={Physical Review X},
  volume={8},
  number={2},
  pages={021010},
  year={2018},
  publisher={APS}
}

@article{gao2017quantum,
  title={Quantum supremacy for simulating a translation-invariant ising spin model},
  author={Gao, Xun and Wang, Sheng-Tao and Duan, L-M},
  journal={Physical review letters},
  volume={118},
  number={4},
  pages={040502},
  year={2017},
  publisher={APS}
}

@article{Hangleiter_2018,
   title={Anticoncentration theorems for schemes showing a quantum speedup},
   volume={2},
   ISSN={2521-327X},
   url={http://dx.doi.org/10.22331/q-2018-05-22-65},
   DOI={10.22331/q-2018-05-22-65},
   journal={Quantum},
   publisher={Verein zur Forderung des Open Access Publizierens in den Quantenwissenschaften},
   author={Hangleiter, Dominik and Bermejo-Vega, Juan and Schwarz, Martin and Eisert, Jens},
   year={2018},
   month=may, pages={65} }

@misc{new_work,
      title={Classical simulability of constant-depth linear-optical circuits with noise}, 
      author={Changhun Oh},
      year={2024},
      eprint={2406.08086},
      archivePrefix={arXiv},
      primaryClass={quant-ph},
      url={https://arxiv.org/abs/2406.08086}, 
}

@article{bluvstein2023logical,
  title={Logical quantum processor based on reconfigurable atom arrays},
  author={Bluvstein, Dolev and Evered, Simon J and Geim, Alexandra A and Li, Sophie H and Zhou, Hengyun and Manovitz, Tom and Ebadi, Sepehr and Cain, Madelyn and Kalinowski, Marcin and Hangleiter, Dominik and others},
  journal={Nature},
  pages={1--3},
  year={2023},
  publisher={Nature Publishing Group UK London}
}

@inproceedings{aaronson2011computational,
  title={The computational complexity of linear optics},
  author={Aaronson, Scott and Arkhipov, Alex},
  booktitle={Proceedings of the forty-third annual ACM symposium on Theory of computing},
  pages={333--342},
  year={2011}
}

@unpublished{martinis2018quantum,
title= {The Quantum Space Race},
author = {John Martinis},
year = {2018},
note= {Plenary talk at Quantum
Information Processing (QIP) 2018. TU Delft},
URL= {https://collegerama.tudelft.nl/Mediasite/Showcase/qip2018/Channel/qip-day3},
}

@article{bouland2017complexity,
  title={Complexity classification of conjugated Clifford circuits},
  author={Bouland, Adam and Fitzsimons, Joseph F and Koh, Dax Enshan},
  journal={arXiv preprint arXiv:1709.01805},
  year={2017}
}

@article{lund2014boson,
  title={Boson sampling from a Gaussian state},
  author={Lund, Austin P and Laing, Anthony and Rahimi-Keshari, Saleh and Rudolph, Terry and O’Brien, Jeremy L and Ralph, Timothy C},
  journal={Physical review letters},
  volume={113},
  number={10},
  pages={100502},
  year={2014},
  publisher={APS}
}

@article{wu2021strong,
  title={Strong quantum computational advantage using a superconducting quantum processor},
  author={Wu, Yulin and Bao, Wan-Su and Cao, Sirui and Chen, Fusheng and Chen, Ming-Cheng and Chen, Xiawei and Chung, Tung-Hsun and Deng, Hui and Du, Yajie and Fan, Daojin and others},
  journal={Physical review letters},
  volume={127},
  number={18},
  pages={180501},
  year={2021},
  publisher={APS}
}

@article{zhu2022quantum,
  title={Quantum computational advantage via 60-qubit 24-cycle random circuit sampling},
  author={Zhu, Qingling and Cao, Sirui and Chen, Fusheng and Chen, Ming-Cheng and Chen, Xiawei and Chung, Tung-Hsun and Deng, Hui and Du, Yajie and Fan, Daojin and Gong, Ming and others},
  journal={Science bulletin},
  volume={67},
  number={3},
  pages={240--245},
  year={2022},
  publisher={Elsevier}
}

@article{madsen2022quantum,
  title={Quantum computational advantage with a programmable photonic processor},
  author={Madsen, Lars S and Laudenbach, Fabian and Askarani, Mohsen Falamarzi and Rortais, Fabien and Vincent, Trevor and Bulmer, Jacob FF and Miatto, Filippo M and Neuhaus, Leonhard and Helt, Lukas G and Collins, Matthew J and others},
  journal={Nature},
  volume={606},
  number={7912},
  pages={75--81},
  year={2022},
  publisher={Nature Publishing Group UK London}
}

@article{zhong2020quantum,
  title={Quantum computational advantage using photons},
  author={Zhong, Han-Sen and Wang, Hui and Deng, Yu-Hao and Chen, Ming-Cheng and Peng, Li-Chao and Luo, Yi-Han and Qin, Jian and Wu, Dian and Ding, Xing and Hu, Yi and others},
  journal={Science},
  volume={370},
  number={6523},
  pages={1460--1463},
  year={2020},
  publisher={American Association for the Advancement of Science}
}

@article{boixo2018characterizing,
  title={Characterizing quantum supremacy in near-term devices},
  author={Boixo, Sergio and Isakov, Sergei V and Smelyanskiy, Vadim N and Babbush, Ryan and Ding, Nan and Jiang, Zhang and Bremner, Michael J and Martinis, John M and Neven, Hartmut},
  journal={Nature Physics},
  volume={14},
  number={6},
  pages={595--600},
  year={2018},
  publisher={Nature Publishing Group UK London}
}

@article{barak2020spoofing,
  title={Spoofing linear cross-entropy benchmarking in shallow quantum circuits},
  author={Barak, Boaz and Chou, Chi-Ning and Gao, Xun},
  journal={arXiv preprint arXiv:2005.02421},
  year={2020}
}

@article{maslov2024fast,
  title={Fast classical simulation of Harvard/QuEra IQP circuits},
  author={Maslov, Dmitri and Bravyi, Sergey and Tripier, Felix and Maksymov, Andrii and Latone, Joe},
  journal={arXiv preprint arXiv:2402.03211},
  year={2024}
}

@article{napp2022efficient,
  title={Efficient classical simulation of random shallow 2D quantum circuits},
  author={Napp, John C and La Placa, Rolando L and Dalzell, Alexander M and Brandao, Fernando GSL and Harrow, Aram W},
  journal={Physical Review X},
  volume={12},
  number={2},
  pages={021021},
  year={2022},
  publisher={APS}
}

@article{villalonga2020establishing,
  title={Establishing the quantum supremacy frontier with a 281 pflop/s simulation},
  author={Villalonga, Benjamin and Lyakh, Dmitry and Boixo, Sergio and Neven, Hartmut and Humble, Travis S and Biswas, Rupak and Rieffel, Eleanor G and Ho, Alan and Mandr{\`a}, Salvatore},
  journal={Quantum Science and Technology},
  volume={5},
  number={3},
  pages={034003},
  year={2020},
  publisher={IOP Publishing}
}

@article{hangleiter2206computational,
  title = {Computational advantage of quantum random sampling},
  author = {Hangleiter, Dominik and Eisert, Jens},
  journal = {Rev. Mod. Phys.},
  volume = {95},
  issue = {3},
  pages = {035001},
  numpages = {82},
  year = {2023},
  month = {Jul},
  publisher = {American Physical Society},
  doi = {10.1103/RevModPhys.95.035001},
  url = {https://link.aps.org/doi/10.1103/RevModPhys.95.035001}
}

@article{chabaud2017continuous,
  title={Continuous-variable sampling from photon-added or photon-subtracted squeezed states},
  author={Chabaud, Ulysse and Douce, Tom and Markham, Damian and Van Loock, Peter and Kashefi, Elham and Ferrini, Giulia},
  journal={Physical Review A},
  volume={96},
  number={6},
  pages={062307},
  year={2017},
  publisher={APS}
}

@article{chabaud2021classical,
  title={Classical simulation of Gaussian quantum circuits with non-Gaussian input states},
  author={Chabaud, Ulysse and Ferrini, Giulia and Grosshans, Fr{\'e}d{\'e}ric and Markham, Damian},
  journal={Physical Review Research},
  volume={3},
  number={3},
  pages={033018},
  year={2021},
  publisher={APS}
}

@article{hangleiter2020sampling,
  title={Sampling and the complexity of nature},
  author={Hangleiter, Dominik},
  journal={arXiv preprint arXiv:2012.07905},
  year={2020}
}

@article{ware2023sharp,
  title={A sharp phase transition in linear cross-entropy benchmarking},
  author={Ware, Brayden and Deshpande, Abhinav and Hangleiter, Dominik and Niroula, Pradeep and Fefferman, Bill and Gorshkov, Alexey V and Gullans, Michael J},
  journal={arXiv preprint arXiv:2305.04954},
  year={2023}
}

@article{morvan2023phase,
  title={Phase transition in random circuit sampling},
  author={Morvan, Alexis and Villalonga, B and Mi, X and Mandra, S and Bengtsson, A and Klimov, PV and Chen, Z and Hong, S and Erickson, C and Drozdov, IK and others},
  journal={arXiv preprint arXiv:2304.11119},
  year={2023}
}
		\endgroup}

	\appendix

	\section{Proofs of Noisy Circuit Lemmas}
	\label{Sec:Noisy_Circuit_Proofs}
	
	\begin{lemma}[Restatement of \cref{Lemma:Noise}] \label{Lemma:Noise_Appendix}
		For any Pauli noise channel $\mathcal{N}_{p_X,p_Y,p_Z}$, define $p = p_Z + \min(p_X,p_Y)$, define $\mathcal{N}_1 = \mathcal{N}_{\frac{|p_X-p_Y|}{1-2p},0,0}$ if $p_X \geq p_Y$ or $\mathcal{N}_1 = \mathcal{N}_{0,\frac{|p_X-p_Y|}{1-2p},0}$ otherwise, and define $\mathcal{N}_2 = \mathcal{N}_{\frac{\min(p_X,p_Y)}{p},0,0}$. Then, for any single-qubit state $\rho$, 
		\begin{align*}
			\mathcal{N}_{p_X,p_Y,p_Z} (\rho) = (1-2p) \mathcal{N}_1(\rho)+ 2p \mathcal{N}_2 \circ \mathcal{N}_{0,0,1/2}(\rho).
		\end{align*}
	\end{lemma}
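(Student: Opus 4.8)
The plan is to expand both sides into the Pauli basis of qubit channels, i.e.\ to write each side as a linear combination of the four superoperators $\rho\mapsto\rho$, $\rho\mapsto X\rho X$, $\rho\mapsto Y\rho Y$, $\rho\mapsto Z\rho Z$, and then compare coefficients. Since the left-hand side is by definition $p_I\,\rho + p_X X\rho X + p_Y Y\rho Y + p_Z Z\rho Z$, it suffices to show that the right-hand side, once expanded, has exactly these four coefficients.

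First I would unpack the completely dephasing channel as $\mathcal{N}_{0,0,1/2}(\rho) = \tfrac12\rho + \tfrac12 Z\rho Z$. Writing $a := \min(p_X,p_Y)/p$, so that $\mathcal{N}_2(\sigma) = (1-a)\sigma + a\,X\sigma X$, I would compute
$\mathcal{N}_2\circ\mathcal{N}_{0,0,1/2}(\rho) = \tfrac{1-a}{2}\rho + \tfrac{a}{2}X\rho X + \tfrac{1-a}{2}Z\rho Z + \tfrac{a}{2}\,XZ\rho ZX$,
and then use the Pauli identity $XZ=-iY$ together with $(XZ)^\dagger = ZX$ to get $XZ\rho ZX = Y\rho Y$. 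Multiplying by $2p$ and using $pa = \min(p_X,p_Y)$ and $p(1-a) = p - \min(p_X,p_Y) = p_Z$, the second term of the claimed decomposition becomes $p_Z\rho + \min(p_X,p_Y)\,X\rho X + \min(p_X,p_Y)\,Y\rho Y + p_Z\,Z\rho Z$.

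Next I would handle the first term, $(1-2p)\mathcal{N}_1(\rho)$. By the symmetry of the statement under swapping $X$ and $Y$ it is enough to treat the case $p_X\ge p_Y$, where $\mathcal{N}_1(\rho) = (1-c)\rho + c\,X\rho X$ with $c := (p_X-p_Y)/(1-2p)$, so that $(1-2p)\mathcal{N}_1(\rho) = \big[(1-2p)-(p_X-p_Y)\big]\rho + (p_X-p_Y)\,X\rho X$. Adding the two contributions and collecting coefficients: the $X\rho X$ coefficient is $(p_X-p_Y)+\min(p_X,p_Y) = p_X$, the $Y\rho Y$ coefficient is $\min(p_X,p_Y) = p_Y$, the $Z\rho Z$ coefficient is $p_Z$, and the identity coefficient is $(1-2p)-(p_X-p_Y)+p_Z$, which equals $p_I = 1-p_X-p_Y-p_Z$ after substituting $p = p_Z + p_Y$. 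This matches $\mathcal{N}_{p_X,p_Y,p_Z}(\rho)$ term by term.

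Finally, for the decomposition to be meaningful as a convex combination of genuine channels I would check that the derived rates lie in $[0,1]$: $a = \min(p_X,p_Y)/p \le 1$ is immediate from $p = p_Z+\min(p_X,p_Y)$, and (assuming $p_X\ge p_Y$) the bound $c = (p_X-p_Y)/(1-2p)\le 1$ reduces to $p_X+p_Y+2p_Z\le 1$, i.e.\ $p_Z\le p_I$, which follows from the standing assumption $p_I\ge 1/2\ge p_Z$; the degenerate cases $p=0$ and $1-2p=0$ force $\min(p_X,p_Y)=0$ (and $p_Z=0$ in the first), and are handled by reading the offending ratio as $0$ and taking $\mathcal{N}_1 = \mathcal{N}_{p_X,p_Y,p_Z}$, respectively $\mathcal{N}_1$ the identity channel. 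No step here is a real obstacle; the only things requiring care are the Pauli-sign bookkeeping that turns $XZ\rho ZX$ into $Y\rho Y$ and the substitution $p = p_Z + \min(p_X,p_Y)$ when simplifying the identity coefficient.
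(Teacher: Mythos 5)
Your proposal is correct and is essentially the same argument as the paper's: both are direct Pauli-coefficient bookkeeping using the identity $XZ\rho ZX = Y\rho Y$ and the substitution $p = p_Z + \min(p_X,p_Y)$, with the paper regrouping the left-hand side into the claimed mixture while you expand the right-hand side and match coefficients. Your added checks that the ratios lie in $[0,1]$ and the handling of the degenerate cases $p=0$ and $1-2p=0$ are a small, welcome refinement but do not change the substance.
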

	
	\begin{proof}
		Suppose $p_X \geq p_Y$. Then, $p = p_Z + p_Y$
		\begin{align*}
			\mathcal{N}_{p_X,p_Y,p_Z}(\rho) &= p_I \rho + p_X X \rho X + p_Y Y \rho Y  + p_Z Z \rho Z\\
			&= (p_I-p_Z) \rho + p_Z (\rho + Z\rho Z) +  (p_X - p_Y) X \rho X+ p_Y(X\rho X + Y\rho Y)  \\
			&= (p_I-p_Z) \rho + 2p_Z \mathcal{N}_{0,0,1/2}(\rho) +  (p_X - p_Y) X \rho X+ 2p_Y X\mathcal{N}_{0,0,1/2}(\rho )X \\
			&= (p_I-p_Z+p_X-p_Y) \mathcal{N}_{\frac{p_X-p_Y}{(p_I-p_Z+p_X-p_Y)}, 0 , 0}(\rho)+ 2 (p_Y +p_Z)  \mathcal{N}_{\frac{p_Y}{(p_Y+p_Z)},0,0} \circ \mathcal{N}_{0,0,1/2}(\rho)\\
			&= (1-2p) \mathcal{N}_{\frac{p_X-p_Y}{1-2p},0,0}(\rho)+ 2p  \mathcal{N}_{\frac{p_Y}{p},0,0}  \circ \mathcal{N}_{0,0,1/2}(\rho)
		\end{align*}
		where we have used that $p_I + p_X + p_Y+p_Z = 1$.
		The case when $p_Y > p_X$ follows using the same proof as above where we flip the role of the $X$ and $Y$ operators.
	\end{proof}
	
	\begin{lemma} [Restatement of \cref{Lemma:Commutation}] \label{Lemma:Commutation_Appendix}
		Let $\tilde{C}$ be an IQP circuit with Pauli noise. 
		Let the initial state be $\ketbra{+}^n$. 
		Let $\tilde{C}'$ be the circuit $\tilde{C}$ where there is a \emph{single} completely dephasing error ($\mathcal{N}_{0,0,1/2}$ channel) on qubit $v \in \{1,\ldots n\}$ occurring \emph{at any point} in $\tilde{C}$. 
		Then,
		\begin{align*}
			\Phi_{\tilde{C}'}(\ketbra{+}^n) = \mathbf{E}_{b \sim U(\{0,1\})}\left[\Phi_{\tilde{C}}(\ketbra{+}^{v-1} \otimes \ketbra{b} \otimes \ketbra{+}^{n-v+1})\right]
		\end{align*}
	\end{lemma}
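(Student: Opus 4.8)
The plan is to track how a single completely dephasing channel $\mathcal{N}_{0,0,1/2}$ commutes past every gate and noise channel that precedes it in $\tilde{C}$, until it reaches the initial state, where its effect on $\ket{+}$ is to produce the maximally mixed state $\tfrac12(\ketbra{0}+\ketbra{1})$. Write $\tilde{C}$ as a composition $\Phi_{\tilde{C}} = \Psi_{\text{after}} \circ \mathcal{N}^{(v)}_{0,0,1/2} \circ \Psi_{\text{before}}$, where $\mathcal{N}^{(v)}_{0,0,1/2}$ acts on qubit $v$ and $\Psi_{\text{before}}, \Psi_{\text{after}}$ are the (potentially noisy) portions of the circuit before and after that point. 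The first step is to observe that $\mathcal{N}_{0,0,1/2}(\sigma) = \tfrac12(\sigma + Z\sigma Z)$, and that both $Z$ (on qubit $v$) and the identity commute with \emph{every} diagonal unitary $D$ and with \emph{every} Pauli channel $\mathcal{N}_{p_X,p_Y,p_Z}$ on any qubit --- diagonal gates commute with $Z$ trivially, and Pauli channels commute with conjugation by $Z$ since the Pauli group is (up to phase) abelian under conjugation. Hence $\mathcal{N}^{(v)}_{0,0,1/2}$ commutes with all of $\Psi_{\text{before}}$: $\mathcal{N}^{(v)}_{0,0,1/2}\circ\Psi_{\text{before}} = \Psi_{\text{before}}\circ\mathcal{N}^{(v)}_{0,0,1/2}$.

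Pushing the dephasing channel all the way to the front gives $\Phi_{\tilde{C}'}(\ketbra{+}^n) = \Psi_{\text{after}}\circ\Psi_{\text{before}}\circ\mathcal{N}^{(v)}_{0,0,1/2}(\ketbra{+}^{\otimes n}) = \Phi_{\tilde{C}}\bigl(\mathcal{N}^{(v)}_{0,0,1/2}(\ketbra{+}^{\otimes n})\bigr)$. The next step is the elementary single-qubit computation $\mathcal{N}_{0,0,1/2}(\ketbra{+}) = \tfrac12\ketbra{+} + \tfrac12 Z\ketbra{+}Z = \tfrac12\ketbra{+} + \tfrac12\ketbra{-} = \tfrac12(\ketbra{0}+\ketbra{1})$. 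Tensoring with the untouched $\ket{+}$ states on the other qubits, $\mathcal{N}^{(v)}_{0,0,1/2}(\ketbra{+}^{\otimes n}) = \ketbra{+}^{v-1}\otimes \tfrac12(\ketbra{0}+\ketbra{1})\otimes\ketbra{+}^{n-v} = \mathbf{E}_{b\sim U(\{0,1\})}\bigl[\ketbra{+}^{v-1}\otimes\ketbra{b}\otimes\ketbra{+}^{n-v}\bigr]$. Finally, applying linearity of the CPTP map $\Phi_{\tilde{C}}$ to pull the expectation outside yields exactly the claimed identity.

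I expect the only real subtlety --- the ``hard part,'' though it is mild --- is making precise that the dephasing channel commutes through every preceding channel, including the Pauli \emph{noise} channels, not just the diagonal unitaries: one must check that $Z_v\,\mathcal{N}_{p_X,p_Y,p_Z}(\sigma)\,Z_v = \mathcal{N}_{p_X,p_Y,p_Z}(Z_v\sigma Z_v)$ for a noise channel on qubit $v$ (which holds because $ZX = -XZ$, $ZY = -YZ$ introduce signs that cancel in the conjugation $P\mapsto Z P \sigma P Z$) and trivially for noise on any other qubit. Everything else is bookkeeping: the commutation with diagonal gates is immediate from diagonality, and the final single-qubit calculation plus linearity is routine. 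One should also note the argument is insensitive to \emph{where} in $\tilde{C}$ the dephasing error sits, which is exactly the ``at any point'' clause in the statement --- the commutation lemma applies identically regardless of the split point.

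\begin{remark}
In the statement as restated in the appendix, the right-hand tensor factor is $\ketbra{+}^{n-v+1}$, which should read $\ketbra{+}^{n-v}$ so that the total qubit count is $n$; the proof above establishes the corrected identity.
\end{remark}
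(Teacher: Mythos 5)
Your proof is correct and follows essentially the same route as the paper's: commute the dephasing channel through all preceding diagonal gates and Pauli noise channels to the front (the paper likewise invokes commutativity of Pauli channels), then compute $\mathcal{N}_{0,0,1/2}(\ketbra{+}) = \tfrac{I}{2} = \mathbf{E}_b[\ketbra{b}]$ and use linearity. Your remark about the exponent $n-v+1$ in the appendix restatement is also right --- it is a typo for $n-v$, as in the main-text version of the lemma.
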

	\begin{proof}
		The completely dephasing error trivially commutes with diagonal gates in the circuit, and commutes with all Pauli noise channels in the circuit because Pauli channels commute with each other (i.e. $\sigma_i \sigma_j \rho \sigma_j^\dagger \sigma_i^\dagger = \sigma_j \sigma_i \rho \sigma_i^\dagger \sigma_j^\dagger$, for any Pauli matrices $\sigma_i,\sigma_j$ and density matrix $\rho$). Thus, it can be commuted to the beginning of the circuit where it acts on the initial state as follows,
		\begin{align*}
			\mathcal{N}_{0,0,1/2}(\ketbra{+}) = \frac{1}{2}\ketbra{+} + \frac{1}{2}Z\ketbra{+}Z^\dagger = \frac{I}{2} = \mathbf{E}_{b} [\ketbra{b}].
		\end{align*}
	\end{proof}
	
	\begin{lemma} [Restatement of \cref{Lemma:Diagonal}] \label{Lemma:Diagonal_Appendix}
		Let $A$ and $B$ be subsystems of qubits and $\mathcal{D}$ be any diagonal gate acting across these subsystems. Suppose subsystem $A$ is in computational basis state $\ketbra{b}$, and $\rho$ is the state of subsystem $B$. Define $D' = \tr_{A}((\ketbra{b}_A \otimes I_{B})D)$. Then,
		\begin{align}
			\mathcal{D}(\ketbra{b}_A \otimes \rho_B) = \ketbra{b}_A \otimes \mathcal{D'}(\rho_B)
		\end{align}
	\end{lemma}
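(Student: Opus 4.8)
The plan is to exploit the block structure of a diagonal unitary relative to the computational basis of the subsystem $A$. Expanding $D$ in the computational basis of $A\otimes B$ as $D = \sum_{x,y}\lambda_{x,y}\,\ketbra{x}_A\otimes\ketbra{y}_B$, diagonality lets us rewrite it in $A$-controlled form $D = \sum_x \ketbra{x}_A \otimes D_x$, where $D_x := \sum_y \lambda_{x,y}\,\ketbra{y}_B$ acts only on $B$ and is itself diagonal in the computational basis of $B$; since $D$ is unitary we have $|\lambda_{x,y}|=1$, so each $D_x$ is unitary. The first step is to identify the operator $D'$ of the statement with the conditional block $D_b$: one checks $(\ketbra{b}_A\otimes I_B)\,D = \ketbra{b}_A\otimes D_b$, and taking the partial trace over the $A$ registers (using $\tr[\ketbra{b}_A]=1$) gives $D' = D_b$. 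This also confirms that $\mathcal{D}'(\rho_B)=D_b\rho_B D_b^\dagger$ is a legitimate unitary channel on $B$.

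The second step is a direct computation using the controlled form:
\begin{align*}
\mathcal{D}(\ketbra{b}_A\otimes\rho_B) = D\,(\ketbra{b}_A\otimes\rho_B)\,D^\dagger = \sum_{x,x'}\bigl(\ketbra{x}_A\,\ketbra{b}_A\,\ketbra{x'}_A\bigr)\otimes D_x\,\rho_B\, D_{x'}^\dagger .
\end{align*}
Since $\ketbra{x}_A\,\ketbra{b}_A\,\ketbra{x'}_A = \langle x|b\rangle\langle b|x'\rangle\,\ketbra{b}_A = \delta_{x,b}\,\delta_{x',b}\,\ketbra{b}_A$ for computational basis strings, every cross term with $(x,x')\neq(b,b)$ vanishes and the sum collapses to $\ketbra{b}_A\otimes D_b\,\rho_B\, D_b^\dagger = \ketbra{b}_A\otimes\mathcal{D}'(\rho_B)$, which is exactly the claimed identity.

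There is no genuine obstacle here; the statement is essentially a bookkeeping exercise, and the calculation above is the whole proof. The only two points that deserve care are (i) that a globally diagonal $D$ really does decompose into $A$-controlled blocks, each of which is diagonal — hence unitary — on $B$, which is immediate from the expansion above, and (ii) the vanishing of the off-diagonal terms $x\neq x'$, which is precisely the step that uses the hypothesis that $A$ is in a computational basis state rather than an arbitrary superposition (a superposition on $A$ would produce surviving cross terms and genuine $A$–$B$ entanglement). Neither requires anything beyond the displayed identities.
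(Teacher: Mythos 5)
Your proof is correct and takes essentially the same route as the paper: both expand the diagonal unitary in the computational basis, kill the off-diagonal $A$ terms using the fact that subsystem $A$ is in a basis state, and verify that $\tr_A((\ketbra{b}_A\otimes I_B)D)$ equals the surviving conditional block on $B$. Your controlled-block notation $D=\sum_x \ketbra{x}_A\otimes D_x$ is just a tidier repackaging of the paper's phase-sum expansion, not a different argument.
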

	\begin{proof}
		Any diagonal unitary matrix acting on subsystems $A$ and $B$ can be written in the following form
		\begin{align}
			D &= \sum_{i \in \{0,1\}^{|A|},j \in \{0,1\}^{|B|}} e^{i\theta_{ij}}\ketbra{ij}
		\end{align}
		where $\theta_{ij}$ are real numbers (phases). Therefore we have
		\begin{align*}
			\mathcal{D} \left(\ketbra{b}_A \otimes \rho_B\right) &= \sum_{i,i' \in \{0,1\}^{|A|},j,j' \in \{0,1\}^{|B|}} e^{i\theta_{ij}-i\theta_{i'j'}}\ketbra{ij} \left(\ketbra{b}_A\otimes \rho_B\right) \ketbra{i'j''}\\
			&= \ketbra{b}_A \otimes \sum_{j,j' \in \{0,1\}^{|B|}} e^{i\theta_{bj}-i\theta_{b j'}} \ketbra{j} \rho_B  \ketbra{j'}
		\end{align*}
		The lemma follows by observing that
		\begin{align*}
			\tr_A((\ketbra{b}_A \otimes I_{B})D) &= \tr_A\left(\sum_{j \in \{0,1\}^{|B|}} e^{i\theta_{bj}}\ketbra{bj}  \right)\\
			&= \sum_{j \in \{0,1\}^{|B|}} e^{i\theta_{bj}} \ketbra{j}
		\end{align*}
		and hence
		\begin{align*}
			\mathcal{D}'(\rho_B) = \sum_{j,j' \in \{0,1\}^{|B|}} e^{i\theta_{bj}-\theta_{bj'}} \ketbra{j} \rho_B \ketbra{j'}.
		\end{align*}
	\end{proof}

	\section{Proof of Percolation Bound} \label{Section:Percolation_Proof}
	It is folklore that graphs exhibit a phase transition in connectivity (i.e. the graph splits into smaller connected components) when elements of the graph are kept with probability which is low relative to the graph degree (see e.g. \cite{Grimmett1999}). \cite{krivelevich2015phase} explicitly proves bounds on the size of components in this regime, and we adapt these results to our setting.

	\begin{lemma}\label{Lemma:Percolation}
		Let $G = (V,E)$ be a graph of maximum degree $\Delta$  on n vertices. Construct $G' = (V,E')$ as follows. Start with $E' = E$. For each $v \in V$, with probability $1-q$, remove all edges incident to $v$ from $E'$. Let $V_1,\ldots, V_m$ be the connected components of $G'$.
		If $q < \frac{1}{\Delta }$, then for each $i \in 1,\ldots,m$,
		\begin{align}
			P(|V_i| >x) \leq e^{-x(1 -q\Delta  - \log (q\Delta ))}
		\end{align}
	\end{lemma}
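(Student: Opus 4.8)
The plan is to bound the probability that a fixed vertex $v$ lies in a large connected component of $G'$ by a union bound over potential ``witnesses'' to that event. Since every component containing $v$ of size greater than $x$ must contain a \emph{connected} subgraph of $G$ on exactly $x+1$ vertices including $v$ (say, a spanning subtree of the component restricted to $x+1$ of its vertices), it suffices to control, for each such connected vertex set $S \ni v$ with $|S| = x+1$, the probability that $S$ survives the vertex-deletion process in the sense that no vertex of $S$ has its edges removed — because if the component of $v$ has size $> x$, one can peel off a connected $(x+1)$-subset all of whose vertices were ``kept.'' Actually, more carefully: if $v$'s component in $G'$ has $> x$ vertices, then there are $> x$ vertices, each kept (probability $q$ each, independently), forming a connected subgraph of $G$ together with $v$. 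So $\Prob(|V_i| > x \text{ for the component } V_i \ni v) \le \sum_{S} q^{|S|}$ where $S$ ranges over connected subgraphs of $G$ of size $x+1$ containing $v$ — I should double-check the exact exponent (whether it is $x$ or $x+1$, and whether $v$ itself being kept is required), but the structure is this.

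The two ingredients I would assemble are: (i) a counting bound on the number of connected subgraphs of size $t$ containing a fixed vertex in a graph of maximum degree $\Delta$, which is the standard ``number of lattice animals / subtrees'' estimate, at most $(e\Delta)^{t}$ or $\frac{(e\Delta)^t}{?}$ — I recall the clean bound is that the number of subtrees of size $t$ rooted at $v$ is at most $\frac{1}{t}\binom{\Delta t}{t-1} \le (e\Delta)^{t-1}$ or similar; and (ii) the independence of the keep-events across distinct vertices, giving survival probability $q^{t}$ for a set of $t$ vertices. Multiplying, $\Prob(|V_i| > x) \le \sum_{t > x} (e\Delta)^{t} q^{t}$ roughly, and then I would replace this tail sum by its largest term times a geometric-series factor. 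Setting $t \approx x$ and using $q\Delta < 1$, the dominant term behaves like $(e q \Delta)^{x} = e^{x(1 + \log(q\Delta))} = e^{-x(-1-\log(q\Delta))}$; comparing to the claimed bound $e^{-x(1 - q\Delta - \log(q\Delta))}$, the exponent in the statement is $1 - q\Delta - \log(q\Delta)$ rather than $-1-\log(q\Delta)$, which suggests the authors are using a sharper counting bound — likely the estimate that the number of size-$t$ connected subsets through $v$ is at most $e^{\Delta t}/(\text{something})$ replaced via a Chernoff-type argument, or more plausibly they are following \cite{krivelevich2015phase} and using the bound $\#\{\text{trees}\} \le (e\Delta)^{t-1}$ combined with a more careful accounting so that the $e$ gets absorbed differently. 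I would reconcile this by using the known bound (e.g.\ from Krivelevich's survey / the Penrose-style argument) that the number of connected subgraphs of size $t$ containing $v$ is at most $\frac{(e\Delta)^{t}}{?}$ and then optimizing; the precise constant $1$ versus $e$-factor will fall out of whichever tree-counting bound is cited, so I would just quote \cite{krivelevich2015phase} for the exact form $(\Delta t)^{t-1}/t!$ or equivalent and use $t! \ge (t/e)^t$.

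The main obstacle, then, is matching the \emph{exact} exponent $1 - q\Delta - \log(q\Delta)$ rather than merely getting \emph{some} exponential decay with the right threshold $q\Delta < 1$. This is a matter of choosing the right enumeration of ``witness'' subgraphs and the right concentration inequality: rather than a crude union bound over all connected $(x+1)$-sets weighted by $q^{x+1}$, one gets the sharper constant by viewing the exploration process (BFS from $v$) as dominated by a subcritical branching process / binomial random walk and applying a Chernoff bound to the total progeny, where $1 - q\Delta - \log(q\Delta)$ is precisely the Legendre-transform rate function $I(1)$ for the relevant Poisson-or-binomial tilting. So the clean route is: set up the BFS exploration of $v$'s component in $G'$, dominate the number of newly-revealed kept neighbors at each step stochastically by $\mathrm{Bin}(\Delta, q)$, observe the component size $|V_i|$ is dominated by the total progeny of a Galton–Watson process with this offspring law, and then invoke the standard large-deviation tail bound for subcritical branching process total progeny, which yields exactly $\Prob(\text{total progeny} > x) \le e^{-x \cdot I}$ with $I = q\Delta - 1 - \log(q\Delta) \cdot(\text{sign})$ — I would verify the sign and constant against the cited reference. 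Finally I would note the bound as stated is per-component (conditioned on $v \in V_i$) and holds uniformly over $i$, which is what is written.
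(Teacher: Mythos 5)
Your second route --- explore the component of a fixed vertex by BFS, dominate the number of kept vertices revealed by a binomial (equivalently, by the total progeny of a subcritical branching process with offspring law $\mathrm{Bin}(\Delta,q)$), and apply a Chernoff bound --- is essentially the paper's own proof: the paper runs a query/exploration process in which reaching a component of size $x+1$ forces at least $x+1$ successes among at most $x\Delta$ queries, giving $P(|V_i|>x)\le P(\mathrm{Bin}(x\Delta,q)>x)$, and then applies the multiplicative Chernoff bound with $\mu=x\Delta q$ and $1+\delta=1/(q\Delta)$. The first half of your write-up (union bound over witness connected subgraphs with an $(e\Delta)^{t}$ counting estimate) is a detour the paper does not take, and, as you note yourself, it only yields decay for $q\Delta<1/e$ with a different constant, so it could not match the statement; the pivot to the exploration argument was the right move.

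The step you left unresolved --- ``verify the sign and constant against the cited reference'' --- is exactly where the substance lies, and your hesitation is well founded. The standard multiplicative Chernoff bound has $e^{+\delta}$ in the numerator, and with $1+\delta=1/(q\Delta)$ it gives $P(\mathrm{Bin}(x\Delta,q)>x)\le e^{-x(q\Delta-1-\log(q\Delta))}$, i.e.\ precisely the rate $I=q\Delta-1-\log(q\Delta)$ you wrote down for the total-progeny tail. The paper writes the numerator as $e^{-\delta}$, which is how it arrives at the exponent $1-q\Delta-\log(q\Delta)$ appearing in the lemma; that exponent does not follow from the correct Chernoff form, and indeed it exceeds the actual large-deviation rate of the dominating binomial tail (e.g.\ for $\Delta=2$, $q=1/4$ the binomial tail decays at rate $\approx 0.29$ per unit $x$ while the printed exponent is $\approx 1.19$), so it cannot be recovered by any tightening of the same reduction. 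Thus your proposal, once the final computation is done with the correct bound, proves the lemma with exponent $q\Delta-1-\log(q\Delta)$; this is positive whenever $q\Delta<1$ and suffices for all downstream uses (it only shifts the definition of $c_{p,k}(d)$ and the thresholds $d_c,d^*$ by bounded amounts), but neither your argument nor the paper's, as corrected, yields the exact constant printed in the statement. The only genuine gap on your side is that you stopped short of carrying out this last computation and committing to the exponent.
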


	\begin{proof} 
		We describe a randomized algorithm (inspired from \cite{krivelevich2015phase}) that constructs a random graph $G'$ that is sampled from the probability distribution described in the statement of the lemma. The algorithm initializes $G'$ to be the empty graph. Then it constructs a set $S$ through probabilistic `queries,' which we define as follows. A query to vertex $v \in V$ ‘succeeds’ with probability $q$, in which case $v$ is added to $S$, or ‘fails’ otherwise, in which case $v$ is added to $G'$ (as an isolated vertex). At any point in the process, we use $N_{G}(S)$ to refer to the vertices not in $S$, that are connected to $S$ by edges in $G$. Whenever $S = \emptyset$ (e.g. the beginning), the algorithm initializes $S$ by querying all unqueried vertices in $G$ until the first successful query. Then, when $S \neq \emptyset$, the algorithm queries all unqueried vertices in $N_{G}(S)$ (this process possibly increases the size of $S$, in which case the algorithm repeats this process with the new $S$). Whenever the algorithm runs out of unqueried vertices to query in $N_G(S)$, it adds $S$ and all of its edges (i.e. $(v,w) \in E$ s.t $v,w \in S$) to $G'$ and resets $S$ to be the empty set. At this point, the algorithm again attempts to initialize $S$ by querying unqueried vertices in $G$, as described earlier. The algorithm finishes when there are no more unqueried vertices in $G$.

		If there is a connected component $V_i$ of size $x+1$ or higher, $|S|$ must have reached $x+1$ at some intermediate point during the above process. Consider $S$ at the moment it reaches a size of  $|S| = x+1$. Suppose the most recent vertex added to $S$ is labeled $v$. To reach this stage, we could have made at most $|S \cup N_G(S-v)|\leq \Delta (|S|-1) = x \Delta  $ queries (as each vertex in $S-v$ has at most $\Delta$ neighbors), and exactly $x+1$ of them have been successful.
		Thus the probability of forming an $V_i$ of size $x+1$ or higher:
		\begin{align*}
			P(|V_i|>x) \leq P(\text{Bin}(x\Delta, q)> x).
		\end{align*}
		where $\text{Bin}(x\Delta, q)$ is the binomial distribution with $x\Delta$ trials and success probability $q$. 
		The expected number of successes $\mu$ in this distribution is $\mu = x\Delta q$, which is less than $x$ when $q<1/\Delta$. We can thus use Chernoff's bound to bound the probability that there are more than $x$ successes. 
    Specifically, for any $\delta > 0$, $P(\text{Bin}(x\Delta, q)> (1+\delta)\mu) \leq \left(\frac{e^{-\delta}}{(1+\delta)^{(1+\delta)}})\right)^{\mu}$. In our case, $\mu = x\Delta q$ and $1+\delta = \frac{1}{\Delta q}$, so we have
		\begin{align}
			P(|V_i|>x) &\leq \left(\frac{e^{1-\frac{1}{\Delta q}}}{(\frac{1}{\Delta q})^{\frac{1}{\Delta q}}})\right)^{x\Delta q}\\
			&\leq \left(\frac{e^{\Delta q-1}}{e^{\log(\frac{1}{\Delta q})}}\right)^{x}\\
			&\leq e^{-x(1 -q\Delta  - \log (q\Delta ))}
		\end{align}


	\end{proof}
	
	\begin{corollary} \label{Corollary:Graph}
		Let $C$ be an IQP circuit containing $k$-local diagonal gates of depth $d$ on $n$ qubits, and let $\tilde{C}$ denote the noisy implementation of $C$, where each layer is interspersed with identical Pauli noise channels $\mathcal{N}_{p_X,p_Y,p_Z}$ on every qubit. 
		Denote $p = p_Z + \min(p_X,p_Y)$. Suppose we use \cref{alg:1} to sample from $P_{\tilde{C}}$.
		Let $V_1,\ldots,V_m$ be the partition of qubits into subsets in step \ref{Alg1:Step_4}. 
		There exists a constant depth threshold $d^* \leq O(p^{-1}\log(kp^{-1})$, such that when $d > d^*$, for each $i \in 1,\ldots,m$,
		\begin{align*}
			\mathbf{P}(|V_i| > x) \leq e^{-xc_{p,k}(d)}
		\end{align*}
		where $c_{p,k}(d) = 1- (k-1)d(1-2p)^d - \log((k-1)d(1-2p)^d)$ is positive and increases with $d$ when $d > d^*$
	\end{corollary}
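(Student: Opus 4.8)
The plan is to show that the partition $V_1,\dots,V_m$ produced in step \ref{Alg1:Step_4} is exactly the set of connected components of a vertex-percolated copy of the interaction graph $G_C$, and then to specialize \cref{Lemma:Percolation}. First I would argue that $G_{\tilde C_3}$ has the distribution of the graph $G'$ of \cref{Lemma:Percolation} applied to $G=G_C$ with retention probability $q=(1-2p)^d$. In step \ref{Alg1:Step_1} each of the $d$ noise channels on a fixed qubit is, independently, replaced by $\mathcal N_2\circ\mathcal N_{0,0,1/2}$ with probability $2p$; hence a qubit $v$ receives a completely dephasing error --- equivalently has $b_v$ set to a value in $\{0,1\}$ in step \ref{Alg1:Step_2} --- with probability $1-(1-2p)^d$, and these events are independent across qubits because the noise channels are. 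By \cref{Lemma:Diagonal}, once $b_v\in\{0,1\}$ every diagonal gate whose support contains $v$ is rewritten in step \ref{Alg1:Step_3} as a gate not acting on $v$; moreover a qubit that enters step \ref{Alg1:Step_3} in the `+' state stays in `+' throughout (the rewriting in step \ref{Alg1:Step_3} only ever flips bits that are already in $\{0,1\}$), so the set of decohered qubits is frozen after step \ref{Alg1:Step_2}. Since no gate in $\tilde C_3$ acts on a larger qubit set than its counterpart in $C$, no new edges are created, and an edge $(v,w)$ of $G_C$ survives in $G_{\tilde C_3}$ iff neither $v$ nor $w$ was decohered --- precisely the edge-deletion process of \cref{Lemma:Percolation}.

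Next I would apply the percolation bound. Each qubit is touched by at most $d$ gates, each having at most $k-1$ other qubits in its support, so $G_C$ has maximum degree $\Delta\le (k-1)d$. Feeding $\Delta=(k-1)d$ and $q=(1-2p)^d$ into \cref{Lemma:Percolation}, which applies whenever $q\Delta<1$, yields $\mathbf{P}(|V_i|>x)\le e^{-x\,c_{p,k}(d)}$ with $c_{p,k}(d)=1-(k-1)d(1-2p)^d-\log\big((k-1)d(1-2p)^d\big)$, exactly the claimed exponent; replacing the true maximum degree by the upper bound $(k-1)d$ only weakens the bound, since $1-q\Delta-\log(q\Delta)$ is decreasing in $\Delta$ on the regime $q\Delta<1$, so this substitution is sound.

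It then remains to analyze $c_{p,k}$ and locate the threshold. Writing $t(d):=(k-1)d(1-2p)^d$ and $f(t):=1-t-\log t$, we have $c_{p,k}(d)=f(t(d))$, and $f$ is strictly decreasing on $(0,\infty)$ with $f(1)=0$, so $c_{p,k}(d)>0$ iff $t(d)<1$. The derivative of $d\mapsto d(1-2p)^d$ equals $(1-2p)^d\big(1+d\log(1-2p)\big)$, which is negative once $d>1/|\log(1-2p)|$, so $t$ is eventually strictly decreasing and hence $c_{p,k}$ is eventually strictly increasing. Finally I would exhibit a threshold $d^*=O\big(p^{-1}\log(kp^{-1})\big)$ above which both $d>1/|\log(1-2p)|$ and $t(d)<1$ hold (the latter also giving $q\Delta<1$): using $|\log(1-2p)|\ge 2p$, the first holds once $d=\Omega(p^{-1})$, while the second is the inequality $d\,|\log(1-2p)|>\log((k-1)d)$, which by \cref{Lemma:d} holds for all $d$ beyond a crossover point of the stated order. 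Since $t$ is below $1$ and decreasing for all $d>d^*$, $c_{p,k}$ is positive and increasing there, completing the corollary.

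The main obstacle is the first step: verifying that the three stages of circuit rewriting in \cref{alg:1} genuinely reproduce the \emph{clean, independent} vertex-percolation process on $G_C$ --- in particular that a qubit initialized in `+' never decoheres later, that decohered qubits remain decohered, that no interaction edges are introduced, and that the dephasing events on distinct qubits are mutually independent. Once that correspondence is pinned down, everything else is the elementary calculus of $f$ and $t$ together with the quantitative estimate of \cref{Lemma:d}.
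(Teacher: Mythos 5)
Your proposal is correct and follows essentially the same route as the paper's proof: identify the algorithm's randomization as vertex percolation on the interaction graph with retention probability $q=(1-2p)^d$ and maximum degree $\Delta\le(k-1)d$, then apply \cref{Lemma:Percolation} and \cref{Lemma:d} to locate $d^*=O(p^{-1}\log(kp^{-1}))$. Your added checks (that `+' qubits never decohere in step \ref{Alg1:Step_3}, that replacing the true degree by $(k-1)d$ only weakens the bound, and the explicit monotonicity analysis of $c_{p,k}$) are details the paper leaves implicit, not a different method.
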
 
	\begin{proof}
		Suppose we define $G_{\tilde{C}}$ for $\tilde{C}$ similarly to how $G_{\tilde{C}_3}$ is defined for $\tilde{C}_3$ (vertices correspond to qubits and edges correspond to entangling gates). In steps \ref{Alg1:Step_1} and \ref{Alg1:Step_2}, each qubit remains in the $\ket{+}$ state independently and with probability $(1-2p)^d$, and otherwise, it is initialized to a computational basis state. Due to steps \ref{Alg1:Step_3} and \ref{Alg1:Step_4}, the corresponding vertex of each qubit in a computational basis state will have no edges in $G_{\tilde{C}_3}$. Thus, the random process which takes $G_{\tilde{C}} \to G_{\tilde{C}_3}$ exactly corresponds to the percolation process which takes $G \to G'$ in \cref{Lemma:Percolation}, where $q=(1-2p)^d$. Note that the maximum degree of $G_{\tilde{C}}$ is $\Delta \leq (k-1)d$, because each qubit is acted on by at most $d$ gates, each of which entangles with at most $k-1$ other qubits. The result follows from \cref{Lemma:Percolation} if we define $d^*$ as the depth at which $(1-2p)^{d^*} = \frac{1}{(k-1)d^*}$, i.e. the depth after which percolation occurs with high probability ($q=1/\Delta$). By \cref{Lemma:d}, $d^* = O(p^{-1}\log(kp^{-1}))$.
	\end{proof}

        \begin{lemma} \label{Lemma:d}
        For small $p> 0$, the integer value of $d > 0$ which satisfies the equation $(1-2p)^ddc = 1$ for any constant $c>0$, is $d = \Theta(p^{-1}\log(cp^{-1}))$
        \end{lemma}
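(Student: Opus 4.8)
The plan is to take logarithms to rewrite $(1-2p)^d dc = 1$ as the fixed-point relation $d\ln\tfrac{1}{1-2p} = \ln(dc)$, and then to locate its solution by a two-sided squeeze. The main tool is the elementary estimate $2p \le \ln\tfrac{1}{1-2p}\le 4p$, valid for $0<p\le 1/4$ (the lower bound is $1-y\le e^{-y}$ at $y=2p$; the upper bound is $1-y\ge e^{-2y}$ on $[0,1/2]$, checked by comparing endpoints and one interior critical point). Writing $f(d):=cd(1-2p)^d$, this yields $cd\,e^{-4pd}\le f(d)\le cd\,e^{-2pd}$, which is what I will feed into the squeeze.

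First I would establish that $f$ is unimodal in $d>0$: since $\tfrac{d}{dd}\ln f(d)=\tfrac1d+\ln(1-2p)$ changes sign exactly once, $f$ increases up to a peak at $d_{\mathrm{peak}}=1/\ln\tfrac1{1-2p}\in[\tfrac1{4p},\tfrac1{2p}]$, where $f(d_{\mathrm{peak}})=c\,d_{\mathrm{peak}}/e\ge c/(4pe)\gg 1$ for small $p$, and then decreases monotonically to $0$. Hence $f(d)=1$ has a unique root $d^*$ on the decreasing branch $[d_{\mathrm{peak}},\infty)$ — this is the root relevant for the percolation applications, e.g.\ in \cref{Corollary:Graph} — and on that branch $f(d)>1$ if and only if $d<d^*$. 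So it suffices to sandwich $d^*$ between two explicit depths of order $p^{-1}\log(cp^{-1})$.

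For the upper bound, set $L:=\ln(c/p)$ (so $L\to\infty$ as $p\to0$) and take $d_{\mathrm{hi}}=p^{-1}L$; then $f(d_{\mathrm{hi}})\le c\,p^{-1}L\,e^{-2L}=L(p/c)\to 0$, so $f(d_{\mathrm{hi}})<1$ and $d^*<d_{\mathrm{hi}}$ for all small $p$. For the lower bound, take $d_{\mathrm{lo}}=\tfrac1{8p}L$; then $f(d_{\mathrm{lo}})\ge c\,\tfrac{L}{8p}\,e^{-L/2}=\tfrac{L}{8}\sqrt{c/p}\to\infty$, so $f(d_{\mathrm{lo}})>1$ and $d^*>d_{\mathrm{lo}}$ for all small $p$ (one also checks $d_{\mathrm{lo}},d_{\mathrm{hi}}>d_{\mathrm{peak}}$, which holds once $L\ge4$, i.e.\ $p\lesssim c$). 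Combining, $\tfrac1{8p}\ln(c/p)\le d^*\le p^{-1}\ln(c/p)$ for sufficiently small $p$, i.e.\ $d^*=\Theta(p^{-1}\log(cp^{-1}))$; since $d^*\to\infty$, any integer value satisfying the equation differs from $d^*$ by at most $1$ and the same $\Theta$ bound holds for it.

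I do not expect a serious obstacle. The one point requiring care is that $f(d)=1$ has two positive roots — a $\Theta(1)$ one near $d\approx 1/c$ on the increasing branch and the large one on the decreasing branch — so one must first argue the unimodality of $f$ and make explicit that the intended root is the large one; after that the two inequalities above are routine logarithm estimates.
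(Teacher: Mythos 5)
Your argument is correct, and it takes a genuinely different route from the paper. The paper solves $(1-2p)^d dc=1$ in closed form via the Lambert function, writing $d = W_{-1}(c^{-1}\log(1-2p))/\log(1-2p)$ and then invoking the known asymptotic $|W_{-1}(-z)|=\Theta(\log z^{-1})$ as $z\to 0^{+}$ together with $\log(1-2p)\approx -2p$; it is a two-line derivation but leans on cited Lambert-$W$ asymptotics and silently selects the $W_{-1}$ branch. You instead run an elementary, self-contained squeeze: the bound $2p\le \ln\tfrac{1}{1-2p}\le 4p$, unimodality of $f(d)=cd(1-2p)^d$ with peak at $1/\ln\tfrac{1}{1-2p}$, and the explicit test depths $d_{\mathrm{lo}}=\tfrac{1}{8p}\ln(c/p)$ and $d_{\mathrm{hi}}=p^{-1}\ln(c/p)$, whose $f$-values you correctly show tend to $\infty$ and $0$ respectively, pinning the large root between them. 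What your approach buys is transparency and rigor on two points the paper glosses over: the equation has \emph{two} positive roots (a $\Theta(1)$ one near $1/c$ and the large one), and your unimodality step makes explicit that the relevant root for \cref{Corollary:Graph} is the one on the decreasing branch — exactly the branch choice the paper encodes implicitly by taking $W_{-1}$ rather than $W_0$; you also get explicit constants in the sandwich and handle the integrality remark. What the paper's route buys is brevity and, if one expands $W_{-1}$ further, a sharper leading-order constant. Minor nit: your final sentence about integer $d$ should be phrased as ``any integer $d$ at which $f$ crosses $1$ on the decreasing branch'' (an integer exactly satisfying the equation generically does not exist), but this matches the paper's own loose phrasing and does not affect the $\Theta$ conclusion.
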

        \begin{proof}
        We can simplify,
        \begin{align}
        (1-2p)^ddc &= 1\\
            \implies d\log(1-2p) e^{d\log(1-2p)} &= c^{-1}\log(1-2p)\\
            \implies d &= \frac{W_{-1}(c^{-1}\log(1-2p))}{\log(1-2p)}
        \end{align}
        where $W_{-1}$ is the lambert $W$-function, i.e. $y = W_{-1}(z)$ is defined as a solution to the equation $e^yy = z$.  It is known that $W_{-1}(-z) = \Theta(\log(z))$ for small $z$ \cite{Chatzigeorgiou_2013} \cite{Corless1996}. In our case, $-c^{-1}\log(1-2p)$ is small when $p$ is small, and therefore,
        \begin{align}
        d = \Theta\left(\frac{\log (-c^{-1}\log (1-2p))} {\log(1-2p)}\right) \approx \Theta(p^{-1}\log (cp^{-1}))
        \end{align}
        where we have used the approximation $\log(1-2p) \approx -2p$ for small $p$
        \end{proof}

	\section{Proofs of Correctness for Sampling Algorithms}
	\label{Sec:Algorithm_Proof}

	\begin{theorem}[Proof of \Cref{alg:1} Correctness] \label{Theorem:Main_appendix}
		Suppose $C$ is an IQP circuit containing $k$-local diagonal gates of depth $d$ on $n$ qubits. Let $\tilde{C}$ denote the noisy implementation of $C$, where each layer is interspersed with identical Pauli noise channels $\mathcal{N}_{p_X,p_Y,p_Z}$ on every qubit. 
		Let $p = p_Z + \min(p_X,p_Y)$. There exists a constant depth threshold $d_c \leq O(p^{-1}\log(kp^{-1}))$ such that
		when $d \geq d_c$, there exists a randomized classical algorithm that exactly samples from $P_{\tilde{C}}$ with random runtime $T$ of expected value,
		\begin{align*}
			\mathbf{E} [T] \leq O(dn^5).
		\end{align*}
	\end{theorem}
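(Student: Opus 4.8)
The plan is to take exactness for granted --- it is exactly the content of \cref{lemma:correctness}, and \cref{alg:1} never ``fails'' but merely has a random runtime --- and devote the proof to bounding $\mathbf{E}[T]$. The runtime splits along the transformations $\tilde C\to\tilde C_1\to\tilde C_2\to\tilde C_3$ and the final simulation step. Steps~\ref{Alg1:Step_1}--\ref{Alg1:Step_3} each make a single pass over the $O(dn)$ channels of the circuit, doing $O(1)$ work per channel (every diagonal gate is $k$-local with $k=O(1)$, so forming $D'$ and updating the affected $b_i$ is $O(1)$) and flipping $O(dn)$ independent coins. Step~\ref{Alg1:Step_4} builds a graph on $n$ vertices with $O(dn)$ edges and extracts its connected components in time $O(dn)$. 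Hence the only place the cost can grow is Step~\ref{Alg1:Step_5}, the direct simulation of each component $V_j$, so everything reduces to controlling the sizes $|V_j|$.

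Simulating $\tilde C_{3,j}$ on a component of size $\ell$ --- unravelling the surviving $\mathcal N_1$ channels as probabilistic Pauli corrections and running a state-vector simulation, or equivalently a density-matrix simulation --- costs $O(d\,\ell\,2^{a\ell})$ for an absolute constant $a$ fixed by the simulation method and $k$, the terminal Hadamard layer and the readout of $P_{\tilde C_{3,j}}$ costing no more. (As usual we work over the number field generated by the gate-set matrix elements, so ``exact'' sampling is meaningful; for the canonical gate sets these numbers have $\poly$-size descriptions.) Thus $T \le O(dn) + \sum_j O(d\,|V_j|\,2^{a|V_j|})$. Charging each component to the vertices it contains, $\sum_j |V_j|\,2^{a|V_j|} = \sum_{v\in V} 2^{a|V(v)|}$ where $V(v)$ is the component of $v$, so $\mathbf E[T] \le O(dn) + O(d)\sum_{v\in V}\mathbf E[2^{a|V(v)|}]$.

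It remains to bound $\mathbf E[2^{a|V(v)|}]$ for a fixed vertex $v$, and this is what pins down $d_c$. By \cref{Corollary:Graph}, for $d>d^*$ the partition of Step~\ref{Alg1:Step_4} is exactly the vertex percolation of the interaction graph with retention probability $q=(1-2p)^d$ and maximum degree $\Delta\le(k-1)d$; running the exploration process of \cref{Lemma:Percolation} seeded at $v$ (note $|V(v)|>x$ with $x\ge1$ forces $v$'s first query to succeed) gives $\mathbf P(|V(v)|>x)\le e^{-x\,c_{p,k}(d)}$ for all $x\ge0$. Hence
\begin{equation*}
\mathbf E\bigl[2^{a|V(v)|}\bigr] \;\le\; \sum_{x\ge1} 2^{ax}\,\mathbf P\bigl(|V(v)|>x-1\bigr) \;\le\; e^{c_{p,k}(d)}\sum_{x\ge1}\bigl(2^{a}e^{-c_{p,k}(d)}\bigr)^{x},
\end{equation*}
which converges to an $O(1)$ quantity once $c_{p,k}(d) > a\log 2$. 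Define $d_c$ to be the least depth with $c_{p,k}(d_c)\ge 2a\log 2$; since $c_{p,k}(d)=1-(k-1)d(1-2p)^d-\log((k-1)d(1-2p)^d)$ equals $0$ at $d^*$ and increases without bound, $d_c$ exists, and this threshold just forces $(k-1)d(1-2p)^d$ below a fixed constant, so \cref{Lemma:d} gives $d_c=O(p^{-1}\log(kp^{-1}))$. For $d\ge d_c$ we obtain $\mathbf E[2^{a|V(v)|}]=O(1)$ for every $v$, hence $\mathbf E[T]\le O(dn)+O(d)\cdot n\cdot O(1)=O(d\,\poly(n))$; a crude accounting of the subcircuit-simulation cost above places this inside the claimed $O(dn^5)$.

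The crux is the conversion carried out in the second and third paragraphs: turning a per-component cost that is exponential in $|V_j|$ into a $\poly(n)$ \emph{expected} total cost. It works only because $d_c$ is taken strictly inside the percolating regime rather than at the percolation threshold $d^*$ (where $c_{p,k}(d^*)=0$ and the geometric series diverges), and because this strictly supercritical $d_c$ is nonetheless independent of $n$ --- precisely the Lambert-$W$ estimate of \cref{Lemma:d}. The only other delicate point is that the tail bound of \cref{Corollary:Graph}, stated for the components $V_i$, also controls the component through a prescribed vertex $v$; as noted, this is immediate from re-seeding the exploration of \cref{Lemma:Percolation} at $v$, since $|V(v)|>x\ge1$ can occur only if $v$ survives the first query.
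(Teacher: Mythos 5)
Your proposal is correct and follows essentially the same route as the paper's proof: exactness is delegated to \cref{lemma:correctness}, the cost is dominated by Step~\ref{Alg1:Step_5}, the component-size tail bound of \cref{Corollary:Graph} (via \cref{Lemma:Percolation}) controls the expected exponential simulation cost, and \cref{Lemma:d} turns the resulting threshold into $O(p^{-1}\log(kp^{-1}))$. The only differences are bookkeeping: you charge the cost per vertex and sum a strictly convergent geometric series (hence demand $c_{p,k}(d_c)\geq 2a\log 2$ and a per-vertex tail bound obtained by re-seeding the exploration at $v$), whereas the paper sums over the at most $n$ components and over $x\leq n$ with $c_{p,k}(d_c)=\log 2$, absorbing the slack into the extra factors of $n$ in the stated $O(dn^5)$ bound.
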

	
	\begin{proof}
		By \cref{lemma:correctness}, the algorithm samples exactly from the output distribution. Now, to bound the runtime, observe that the first 4 steps of the algorithm take runtime $O(nd)$ as it is simply processing the circuit description, which involves $O(nd)$ channels. Step \ref{Alg1:Step_5} is the costliest, and we will loosely bound its runtime in terms of the cost of simulating each subset of qubits in $\{V_1,\ldots V_m\}$ (the partition of qubits into disjoint subsets found in step \ref{Alg1:Step_4}).
		
		For each $i \in 1,\ldots,m$, the subcircuit corresponding to $V_i$ has at most $|V_i| d \leq nd$ gates and noise channels. Each Pauli noise channel can be simulated by inserting a Pauli gate into the subcircuit with a probability specified by the channel parameters. Each gate is an $O(1)$-sparse unitary matrix acting on a $2^{ |V_i|}$-length state vector which involves $O(2^{|V_i|})$ arithmetic operations (multiplication and addition). Each multiplication can be performed in $O(n^2)$ operations because the numbers can range between $1$ and $\geq 1/\sqrt{2^n}$ which requires $O(n)$ bits of precision. Therefore, the runtime of step \ref{Alg1:Step_5} is
		\begin{align} \label{eq:statevector}
			T \leq \sum_{i=1}^m cdn^32^{|V_i|}
		\end{align}        
		for some constant $c$ which arises from the state vector simulation subroutine. Now, using linearity of expectation, we compute the expected runtime $T$ as,
		\begin{align*}
			\mathbf{E} [T] &\leq \sum_{i=1}^m \sum_{x = 0}^n c d n^3   2^x  \mathbf{P}(|V_i| = x)\\
			&= \sum_{i=1}^m \sum_{x = 0}^nc d n^3 2^x  (\mathbf{P}(|V_i|> x-1)- \mathbf{P}(|V_i| > x))\\
			&= \sum_{i=1}^m \sum_{x = 0}^{n-1} cdn^3  (2^{x+1} - 2^{x})\mathbf{P}(|V_i| > x)\\
			&=  \sum_{i=1}^m \sum_{x = 0}^{n-1} cdn^3 2^{x}\mathbf{P}(|V_i| > x) \\
			&\leq  \sum_{i=1}^m \sum_{x = 0}^{n-1} cdn^3   2^{x} e^{-xc_{p,k}(d)}\\
			&\leq cdn^5  e^{-x(c_{p,k}(d)- \log 2)}
		\end{align*}
		where we have used \cref{Corollary:Graph} in the second last step, and the fact that $m \leq n$ in the last step. 
        We define $d_c$ to be the depth at which $c_{p,k}(d_c) = \log 2$.
		Then when $d \geq d_c$, $e^{-x(c_{p,k}(d)- \log 2)} \leq 1$, giving: 
		\begin{align}
			\mathbf{E} [T] \leq cdn^5
		\end{align}
		We solve $c_{p,k}(d_c) = \log 2$ to find that $d_c$ is the depth at which $(k-1)d_c(1-2p)^{d_c} \approx 0.855$. 
		The fact that $d_c = O(p^{-1}\log(kp^{-1}))$ follows from \cref{Lemma:d}.
	\end{proof}

	\subsection{The Monte Carlo Algorithm}
	
	Using standard techniques, we can turn the Las Vegas algorithm specified in \cref{alg:1} into a Monte Carlo algorithm with bounded error.
	
	\begin{algorithm}[Monte Carlo Sampler] \label{alg:Monte_Carlo_Sampler} \ \newline 
		\begin{enumerate}
			\item Perform algorithm \ref{alg:1} up till step \ref{Alg1:Step_4}, where we have the sets $\{V_1,\ldots,V_m\}$.
			\item If:
			\begin{align*}
				\max_{i}|V_i| \leq {\frac{\log(n/\epsilon)}{c_{p,k}(d)}} 
			\end{align*}
			then the algorithm continues to step \cref{Alg1:Step_5} of algorithm \ref{alg:1} and returns a bit string.
			Otherwise, it samples a bit string from the uniform distribution $U(\{0,1\}^n)$ and returns this.
		\end{enumerate}
	\end{algorithm}
	
	\noindent We now bound the error of \cref{alg:Monte_Carlo_Sampler} in the following corollary.
	
	\begin{corollary}[Monte Carlo Algorithm Performance]
		Using the same notation as \cref{Theorem:Main_appendix}, there exists a constant depth threshold $d^* \leq O(p^{-1}\log(kp^{-1}))$, such that, when $d > d^*$,$p = \Omega(1)$ and $k=O(1)$, there exists a randomized classical algorithm that samples from $\tilde{Q}_{\tilde{C}}$, such that $\|\tilde{Q}_{\tilde{C}} - P_{\tilde{C}}\|_{TVD} \leq \epsilon$ for any $\epsilon > 0$, with worst-case runtime $T \leq O(d\poly(n/\epsilon))$
	\end{corollary}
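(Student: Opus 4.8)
The plan is to analyze Algorithm~\ref{alg:Monte_Carlo_Sampler} directly: bound its total variation error by the probability of the truncation event, and bound its worst-case runtime by the dimension of the largest state vector that survives truncation. Both bounds follow from the tail estimate of \cref{Corollary:Graph} once the threshold $d^*$ is chosen so that the percolation exponent $c_{p,k}(d)$ is a genuine positive constant.

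First I would fix the threshold. Recall from \cref{Corollary:Graph} that above the percolation threshold $c_{p,k}(d) = 1-(k-1)d(1-2p)^d - \log((k-1)d(1-2p)^d)$ is positive and increasing in $d$, and $(k-1)d(1-2p)^d \to 0$. Since $1-f-\log f$ is decreasing in $f$ on $(0,1)$, I would pick $d^*$ to be the first depth with $(k-1)d^*(1-2p)^{d^*} \le c_0$, where $c_0$ is the constant solving $c_0 + \log c_0 = 0$; then monotonicity gives $c_{p,k}(d) \ge 1$ for all $d \ge d^*$. Applying \cref{Lemma:d} with constant $(k-1)/c_0$ shows $d^* \le O(p^{-1}\log(kp^{-1}))$, which is a genuine $O(1)$ threshold when $p=\Omega(1)$ and $k=O(1)$. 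From here on I assume $d > d^*$, so $1/c_{p,k}(d) \le 1$.

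Next I would bound the error by a coupling argument. Couple \cref{alg:1} and \cref{alg:Monte_Carlo_Sampler} to use identical randomness in Steps~\ref{Alg1:Step_1}--\ref{Alg1:Step_4} (so they produce the same partition $\{V_1,\dots,V_m\}$), and, whenever both reach Step~\ref{Alg1:Step_5}, identical randomness there too. The two outputs then agree except on the event $B = \{\max_i |V_i| > \log(n/\epsilon)/c_{p,k}(d)\}$, on which the Monte Carlo algorithm instead returns a uniform string. By \cref{lemma:correctness} the Las Vegas output is exactly $P_{\tilde C}$, so $\|\tilde Q_{\tilde C} - P_{\tilde C}\|_{TVD} \le \mathbf{P}(B)$. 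A union bound over the $m \le n$ components, combined with the tail bound of \cref{Corollary:Graph}, gives
\begin{align*}
\mathbf{P}(B) \;\le\; \sum_{i=1}^m \mathbf{P}\!\left(|V_i| > \tfrac{\log(n/\epsilon)}{c_{p,k}(d)}\right) \;\le\; m\, e^{-\log(n/\epsilon)} \;=\; \frac{m\epsilon}{n} \;\le\; \epsilon .
\end{align*}
For the runtime: Steps~\ref{Alg1:Step_1}--\ref{Alg1:Step_4} and the size check cost $O(nd)$; if the check fails, outputting a uniform string costs $O(n)$; if it passes, every component satisfies $|V_i| \le \log(n/\epsilon)/c_{p,k}(d)$, so each state vector has dimension $2^{|V_i|} \le (n/\epsilon)^{1/c_{p,k}(d)} \le n/\epsilon$. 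Plugging this into the per-component cost $cdn^3 2^{|V_i|}$ from the proof of \cref{Theorem:Main_appendix} and summing over $m \le n$ components yields a worst-case bound $O(dn^4 \cdot (n/\epsilon)) = O(d\,\poly(n/\epsilon))$.

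The main obstacle is not any single delicate estimate but rather setting up the threshold so the pieces fit together: one must guarantee that $c_{p,k}(d)$ is bounded below by an absolute constant — exactly what the hypotheses $p=\Omega(1)$, $k=O(1)$ and \cref{Lemma:d} provide — since otherwise the truncated component size $\log(n/\epsilon)/c_{p,k}(d)$ would produce a super-polynomial-dimension state vector, and one must articulate the coupling carefully so that the total variation distance is controlled purely by $\mathbf{P}(B)$.
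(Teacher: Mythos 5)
Your proposal is correct and takes essentially the same route as the paper's proof: a union bound with the tail estimate of \cref{Corollary:Graph} to bound the truncation probability by $\epsilon$, total variation distance bounded by that probability (your coupling is just a cleaner phrasing of the paper's split into the conditional distributions on the good and bad events), and the same per-component state-vector cost bound giving worst-case runtime $O(d\,\poly(n/\epsilon))$. Your explicit choice of $d^*$ forcing $c_{p,k}(d)\geq 1$ is a slightly tidier way of securing the $c_{p,k}(d)=\Omega(1)$ that the paper invokes from $p=\Omega(1)$ and $k=O(1)$, but it is not a different argument.
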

	
	\begin{proof}
		First we show that the algorithm indeed results in an $\epsilon$-approximation to the output distribution. Using a union bound, we can bound the probability that we have to output the uniform distribution as,
		\begin{align*}
			\mathbf{P}\left(\max_i |V_i| > {\frac{\log(n/\epsilon)}{c_{p,k}(d)}}\right) &\leq   \sum_{i=1}^m \mathbf{P}\left(|V_i|>\frac{\log(n/\epsilon)}{c_{p,k}(d)}\right)\\
			&\leq n e^{-\frac{\log(n/\epsilon)}{c_{p,k}(d)} c_{p,k}(d)} \\
			&\leq \epsilon 
		\end{align*}
		where we have used the fact that $m \leq n$. Let $\tilde{Q}_{\tilde{C}}(s)$ be the output distribution of \cref{alg:Monte_Carlo_Sampler}. Let $Q_{\tilde{C}}$ be the output distribution of \cref{alg:1}. Let $E$ be the event that $\max_{i}|V_i| \leq {\frac{\log(n/\epsilon)}{c_{p,k}(d)}} $ and $E'$ be its complement. We will separate $Q_{\tilde{C}}(s)$ into two conditional distributions: $Q_{\tilde{C}}(s |E)$ and $Q_{\tilde{C}}(s |E')$, where $s$ is the output string.
		We then write:
		\begin{align*}
			\|\tilde{Q}_{\tilde{C}} - P_{\tilde{C}}\|_{TVD} &= \|\tilde{Q}_{\tilde{C}} - Q_{\tilde{C}}\|_{TVD} \\
			&= \|P(E')(\tilde{Q}_{\tilde{C}}(s |E') -  Q_{\tilde{C}}(s|E') ) \\
			& + P(E)(\tilde{Q}_{\tilde{C}}(s|E) -  Q_{\tilde{C}}(s|E) )\|_{TVD}  \\
			&\leq  \epsilon\|\tilde{Q}_{\tilde{C}}(s |E') -  Q_{\tilde{C}}(s|E')  \|_{TVD} \\
			& + (1-\epsilon)\|\tilde{Q}_{\tilde{C}}(s|E) -  Q_{\tilde{C}}(s|E) \|_{TVD}  \\
			&\leq  \epsilon\|\tilde{Q}_{\tilde{C}}(s |E') - Q_{\tilde{C}}(s|E')\|_{TVD}  \\
			&\leq \epsilon
		\end{align*}
		Now, using \cref{eq:statevector}, the fact that $\max_i |V_i| \leq {\frac{\log(n/\epsilon)}{c_{p,k}(d)}} $, and the fact that $m \leq n$, we can bound the runtime as 
		\begin{align}
			T \leq cdn^4 2^{{\frac{\log(n/\epsilon)}{c_{p,k}(d)}} } = cdn^4\left(\frac{n}{\epsilon}\right)^{\frac{\log 2}{c_{p,k}(d)}}
		\end{align}
		for some constant $c$ which arises from the state vector simulation subroutine. Note that this is $O(d\poly(n/\epsilon))$ when $c_{p,k} = \Omega(1)$, which occurs when $p \in \Omega(1)$ and $k \in O(1)$. 
	\end{proof}
	
	\section{Proofs of Hardness Construction Lemmas}
 
\begin{lemma} (from \cite{fujii2016computational})
    There exists a family of IQP circuits $C$ constructed from $e^{iZ\pi/8}$ and $e^{iZZ\pi/4}$ gates of depth 5 such that it is hard to exactly sample from $P_{C,\pfail}$ for $\pfail < 0.134$, 
\end{lemma}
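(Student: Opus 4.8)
Since the statement is a restatement of a result of Fujii \cite{fujii2016computational}, the plan is to recall the skeleton of its proof rather than to reprove it from scratch. The backbone is the classical hardness of \emph{exact} IQP sampling due to Bremner, Jozsa and Shepherd \cite{bremner2011classical}: post-selected IQP equals $\mathsf{PP}$, so an efficient classical exact sampler for a sufficiently expressive IQP family would let one multiplicatively approximate a suitable output probability in $\mathsf{FBPP}^{\mathsf{NP}}$ by Stockmeyer counting; since that probability can be arranged to be a $\#\mathsf P$-hard ``Gauss-sum'' quantity, this puts $\mathsf{P}^{\#\mathsf P}\subseteq\mathsf{BPP}^{\mathsf{NP}}$ and, by Toda's theorem, collapses the polynomial hierarchy to its third level. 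First I would check that such a hard instance can be embedded into the prescribed low-depth layout: for a diagonal $D$ built from $e^{iZ\pi/8}$ (linear phase terms) and $e^{iZZ\pi/4}$ (quadratic phase terms), the relevant measurement statistics are governed by a normalised exponential sum $\bigl|2^{-n}\sum_{x}\omega^{f(x)}\bigr|^2$ with $f$ a low-degree polynomial over $\mathbb{Z}_8$, and the constructions of \cite{bremner2011classical, bremner2016average} show that one layer of single-qubit $e^{iZ\pi/8}$ gates together with a bounded-degree (hence $4$-layer) arrangement of $e^{iZZ\pi/4}$ gates already suffices to encode an arbitrary instance, at the cost of extra ancilla qubits.

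The second and essential ingredient is making this hardness robust to independent bit-flips at rate $\pfail$ on the output bits. The idea is to design the family so that the $\#\mathsf P$-hard quantity is carried by a \emph{constant-weight} linear functional $\ell(y)=\bigoplus_{i\in S}y_i$ of the output with $|S|=O(1)$, rather than by the probability of a single long bitstring (which noise destroys, since it mixes with all other strings under the bit-flip channel). For such a functional one has the exact identity
\begin{align*}
\mathbf{P}_{y\sim P_{C,\pfail}}\!\bigl[\ell(y)=0\bigr]-\tfrac12 \;=\; (1-2\pfail)^{|S|}\Bigl(\mathbf{P}_{y\sim P_{C}}\!\bigl[\ell(y)=0\bigr]-\tfrac12\Bigr),
\end{align*}
because each of the $|S|$ relevant bit-flips multiplies the correlator $\mathbf{E}[(-1)^{\ell(y)}]$ by $(1-2\pfail)$ while flips on the remaining wires leave $\ell$ unchanged. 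Thus noise only rescales the hard bias by the explicit nonzero constant $(1-2\pfail)^{|S|}$, so an exact sampler for $P_{C,\pfail}$ still yields, via Stockmeyer applied to a circuit whose readout encodes $\ell$, a $\mathsf{BPP}^{\mathsf{NP}}$ multiplicative estimate of the underlying $\#\mathsf P$-hard quantity, collapsing the hierarchy. Propagating the multiplicative-error tolerance of the Stockmeyer step and the $\#\mathsf P$-hardness-of-approximation gap through this rescaling pins down the admissible noise range; for Fujii's construction one verifies it can be taken as $\pfail<0.134$.

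The step I expect to be the main obstacle is reconciling the two requirements simultaneously: the noise-robust ``constant-weight readout'' encoding must coexist with the depth-$5$, $2$-local restriction and with the specific gate set $\{e^{iZ\pi/8},e^{iZZ\pi/4}\}$, and the naive moves (reducing polynomial degree, ``spreading'' the signal to protect it, or padding with buffer qubits) tend to inflate either the depth, the gate locality, or the weight $|S|$. Verifying that a single family threads this needle --- keeping $|S|$ and the depth both $O(1)$ while the encoded quantity remains fully $\#\mathsf P$-hard --- is the technical core; the surrounding machinery (post-IQP $=\mathsf{PP}$, Stockmeyer approximate counting, Toda's theorem, and the commutation of interspersed dephasing noise into output bit-flips) is standard and can be quoted essentially verbatim. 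Since the statement is not original here, in the paper I would present only this outline and defer the detailed construction to \cite{fujii2016computational}.
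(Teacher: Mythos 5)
There is a genuine gap: the noise-robustness mechanism you propose cannot work for the circuits in question, and it is not how \cite{fujii2016computational} (or this paper's exposition of it) proceeds. If $C$ has depth $5$ and its diagonal gates are $2$-local, then for any constant-weight readout $\ell(y)=\bigoplus_{i\in S}y_i$ the correlator $\mathbf{E}[(-1)^{\ell(y)}]=\bra{+}^{\otimes n}D^{\dagger}X_S D\ket{+}^{\otimes n}$ involves the conjugated operator $D^{\dagger}X_S D$, which by a light-cone argument is supported on $O(1)$ qubits; hence this quantity is exactly computable classically in constant time and cannot carry any $\#\mathsf{P}$-hardness. So the ``needle'' you flag as the main obstacle cannot in fact be threaded. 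Independently, even if such an encoding existed, the Stockmeyer step as you describe it fails: Stockmeyer gives a multiplicative estimate of $\Prob[\ell(y)=0]\approx\tfrac12$, i.e.\ only inverse-polynomial additive accuracy on the bias $\Prob[\ell(y)=0]-\tfrac12$, which does not yield a multiplicative estimate of the (generically exponentially small) hard quantity. The standard exact-sampling hardness arguments work with probabilities of specific outcomes, or more precisely with the ability of an exact sampler plus an $\mathsf{NP}$ oracle to simulate postselection, not with correlators of low-weight observables.

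The actual argument is quite different and hinges on postselected fault tolerance. The depth-$5$ circuit (one layer of $e^{iZ\pi/8}$, four layers of $e^{iZZ\pi/4}$) prepares a cluster-state-like resource for topologically protected measurement-based quantum computation; the adaptive MBQC measurements, which fall outside the IQP framework, are recovered by \emph{postselecting} on Hadamard-basis outcomes (this is admissible because the reduction only needs post-IQP $\supseteq$ postBQP $=\mathsf{PP}$), and the robustness to output bit-flip noise comes from an error-detection argument in which one additionally postselects on error-free syndrome outcomes of the topological code; the $0.134$ threshold is the threshold of that postselected error-detection scheme, not a rescaling constant of the form $(1-2\pfail)^{|S|}$. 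Your outline quotes the correct surrounding machinery (postselection, Stockmeyer, Toda, commuting dephasing into output bit-flips) but replaces the essential ingredient --- postselected topological error detection on polynomially many qubits --- with a constant-weight-readout trick that is both insufficient for the Stockmeyer reduction and impossible at constant depth and locality.
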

\begin{proof}
    This can be shown by constructing topologically protected MBQC (Measurement Based Quantum Computing) circuits that use cluster states as a resource. These circuits are hard to exactly sample from (assuming the PH doesn't collapse), even if they are exposed to a constant level of noise on each qubit in the cluster state, due to fault-tolerance techniques. In general, MBQC circuits include intermediate adaptive measurement steps, which fall out of the IQP framework. However, using the added power of post-selection, one can simulate MBQC using samples from Hadamard-basis measurement outcomes on the cluster state. This shows that it is hard to sample Hadamard-basis measurements on a cluster state, which exactly corresponds to the short-depth IQP circuit described. Moreover, this hardness exhibits robustness to noise through an error detection argument (we can post-select on error-free outcomes). For more details, refer to \cite{fujii2016computational} \footnote{The main idea of robustness to constant noise is captured in Theorem 1 and Corollary 2, while the particular circuit family and noise threshold are in section ``A Sharp CQC Boundary"}.
\end{proof}

\begin{lemma} (from \cite{Bremner_2017})
Let $C$ be an arbitrary IQP circuit constructed of depth $d$ on $n$ qubits. Then, for any parameter $r$, there is an encoded circuit $C'$ where every $k$-local gate in $C$ is encoded into a $kr$-local gate in $C'$, and a decoding algorithm $A$ such that,
    \begin{align}
        A(P_{C',q}) = P_{C, \pfail}
    \end{align}
    where $\pfail \leq (4q(1-q))^{r/2}$
\end{lemma}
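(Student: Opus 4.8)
The plan is to exhibit an explicit repetition-code encoding together with a majority-vote decoder, verify that the \emph{noiseless} encoded circuit reproduces $P_C$ on the logical readout, and then absorb the $nr$ independent output bit-flips into a standard binomial-tail estimate. For the encoding, assign to each logical qubit $j$ a block of $r$ physical qubits $(j,1),\dots,(j,r)$ and set $\bar Z_j := \prod_{a=1}^r Z_{j,a}$. Replace each $k$-local diagonal gate on logical qubits $S$ with $|S|=k$ — with diagonal entries $e^{i\phi_z}$, $z\in\{0,1\}^k$ — by the diagonal gate on the $kr$ physical qubits of the blocks in $S$ that applies the phase $e^{i\phi_{\pi(w)}}$ to $|w\rangle$, where $\pi(w)\in\{0,1\}^k$ records the parity of each block; for the canonical gate set this is simply $e^{i\theta Z_{j_1}\cdots Z_{j_k}}\mapsto e^{i\theta\,\bar Z_{j_1}\cdots\bar Z_{j_k}}$, a single $kr$-qubit $Z$-rotation. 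The result $C'$ is again an IQP circuit (diagonal gates on $|+\rangle^{\otimes nr}$ with Hadamard-basis measurement) with $kr$-local gates and the same depth as $C$, since gates in one layer of $C$ act on disjoint blocks of $C'$.

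For noiseless correctness, let $|\bar 0\rangle,|\bar 1\rangle$ denote the uniform superpositions over even- and odd-parity strings of a block. Then $\bar Z_j$ acts as logical $Z$, $|+\rangle^{\otimes r}=\tfrac{1}{\sqrt2}(|\bar 0\rangle+|\bar 1\rangle)$ is logical $|+\rangle$, and each encoded gate implements the original diagonal gate on the logical qubits; hence $C'$ sends $|+\rangle^{\otimes nr}$ to the logical encoding of $D|+\rangle^{\otimes n}$. A short Hadamard-transform calculation gives $H^{\otimes r}|\bar 0\rangle=\tfrac{1}{\sqrt2}(|0^r\rangle+|1^r\rangle)$ and $H^{\otimes r}|\bar 1\rangle=\tfrac{1}{\sqrt2}(|0^r\rangle-|1^r\rangle)$, so a Hadamard-basis measurement of \emph{any} logical state produces, block by block, only the constant strings $0^r$ or $1^r$, and the induced distribution on the block values $\hat y\in\{0,1\}^n$ coincides with the Hadamard-basis measurement of $D|+\rangle^{\otimes n}$, i.e.\ with $P_C$. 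Thus $P_{C'}$ is supported on constant-block strings, and reading off the block values recovers $P_C$ exactly.

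For the decoder, let $A$ map $y\in\{0,1\}^{nr}$ to $\hat y\in\{0,1\}^n$ by majority-voting each block (resolving ties with a fair coin). Under $P_{C',q}$ each physical bit of each constant block is flipped independently with probability $q$, so block $j$ decodes incorrectly with some probability $\pfail$, which is at most $\Pr[\mathrm{Bin}(r,q)\ge r/2]$, and these events are independent across blocks and independent of the noiseless block values. Hence $A(P_{C',q})$ is precisely $P_C$ with each output bit independently flipped with probability $\pfail$, i.e.\ $A(P_{C',q})=P_{C,\pfail}$. Finally, for $q\le 1/2$ the bound $(q/(1-q))^{k-r/2}\le 1$ for $k\ge r/2$, together with $\sum_k\binom rk\le 2^r$, yields
\[
\pfail \;\le\; \Pr[\mathrm{Bin}(r,q)\ge r/2] \;\le\; (q(1-q))^{r/2}\,2^r = (4q(1-q))^{r/2}.
\]

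The substantive step is the noiseless-correctness claim: verifying that the repetition encoding, the commuting (diagonal) gates, and the Hadamard-basis readout conspire to collapse every block onto $\{0^r,1^r\}$ while reproducing $P_C$ on the block values. Once that is in place, the decoder analysis and the $(4q(1-q))^{r/2}$ estimate are routine.
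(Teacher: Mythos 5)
Your proof is correct and follows essentially the same route as the paper: encode via the repetition code (each $k$-local $Z$-rotation becomes a $kr$-local rotation on the blocks), decode by blockwise majority vote, and bound the logical flip probability by a binomial tail giving $(4q(1-q))^{r/2}$. The only difference is that you verify the noiseless-correctness step explicitly (blocks collapse to $0^r$ or $1^r$ under the Hadamard readout, reproducing $P_C$ on block values), which the paper simply delegates to the citation of \cite{Bremner_2017}; this is a welcome addition, not a deviation.
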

\begin{proof}
    Consider the repetition code, which is given by an $n \times nr$ generator matrix $M$ which is a concatenation of $r$ $n \times n$ identity matrices in a row. For any bitstring $i\in \{0,1\}^n$, we will use $Z_i \in \{I,Z\}^n$ to denote a tensor product of $Z$ and $I$ Pauli matrices, where there is a $Z$ in the location of every $1$ in $i$ and an $I$ otherwise. \cite{Bremner_2017} showed that any circuit $C$ on $n$ qubits can be transformed into a more noise-robust circuit $C'$ on $nr$ by transforming each gate as $e^{iZ_i\pi/8} \to e^{i Z_{iM}\pi/8}$ (we use $iM$ to denote matrix multiplication between $i$ and $M$, which results in the codeword corresponding to $i$).
    
    Samples from $P_C$ can be recovered by taking a majority vote over each block of $r$ physical qubits in $P_{C'}$. The probability that a logical bit is decoded incorrectly is the same as the probability that more than r/2 bits are flipped, which is 
    \begin{align}
        \sum_{i>r/2}q^i(1-q)^{r-i} \leq 2^rq^r(1-q)^r = (4q(1-q))^{r/2}
    \end{align}
    These incorrectly decoded qubits are essentially bit-flip errors in the distribution of the logical output bits. For more details, refer to \cite{Bremner_2017} \footnote{See section ``Fault-tolerance"}.
\end{proof}

\begin{lemma} (from \cite{shepherd2010binary}
    Let $C$ be an arbitrary IQP circuit constructed using gates of the form $e^{iZ^{\otimes k}\pi/8}$ for some $k$ a multiple of $3$, and gates of the form $e^{iZ^{\otimes k'}\pi/4}$ for some $k'$ a multiple of $2$. 
    There exists an equivalent circuit $C'$, such that $P_{C'} = P_C$, where every $e^{iZ^{\otimes k}\pi/8}$ gate in $C$ is decomposed into $k^2/2$ layers of $3$-local gates in $C'$, and every $e^{iZ^{\otimes k'}\pi/8}$ gate in $C$ is decomposed into $k'$ layers of $2$-local gates in $C'$.
\end{lemma}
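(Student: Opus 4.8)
The plan is to prove the statement purely algebraically: rewrite each high‑weight diagonal rotation as an \emph{equal} product of diagonal gates, each acting on at most $3$ (respectively $2$) qubits, and then bound the number of layers by a hypergraph‑scheduling argument. No ancillas are introduced, so $C'$ acts on the same qubits as $C$ and implements the same overall unitary up to a global phase; hence $P_{C'}=P_C$ is immediate, and the content is entirely in the per‑gate decomposition.

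First I would compute the action of $e^{i\theta Z^{\otimes m}}$ on a computational basis state $\ket{x_1\cdots x_m}$: the phase is $e^{i\theta(-1)^{x_1+\cdots+x_m}}=e^{i\theta\prod_{i=1}^m(1-2x_i)}$. Expanding the product over the integers gives $\prod_{i=1}^m(1-2x_i)=\sum_{j=0}^m(-2)^j e_j(x)$, where $e_j$ is the $j$‑th elementary symmetric polynomial in $x_1,\dots,x_m$, so
\[
e^{i\theta Z^{\otimes m}}\ket{x}=e^{i\theta}\prod_{j=1}^m e^{i\theta(-2)^j e_j(x)}\,\ket{x}.
\]
Each factor $e^{i\theta(-2)^j e_j(x)}$ is a diagonal gate supported on the $j$‑subsets of the $m$ qubits, i.e.\ a product of $\binom mj$ commuting gates, one per $j$‑subset, each acting on exactly $j$ qubits. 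The key observation is that for $\theta=\pi/8$ the factors with $j\ge 4$ are the identity, since then $\theta\cdot 2^j$ is an integer multiple of $2\pi$ and $e_j$ is an integer; this leaves only $j=1$ (single‑qubit $T$‑type phases), $j=2$ (two‑qubit $CS$‑type gates), and $j=3$ (three‑qubit $CCZ$‑type gates). Likewise for $\theta=\pi/4$ the factors with $j\ge 3$ vanish, leaving only $j=1$ (single‑qubit $S$‑type phases) and $j=2$ (two‑qubit $CZ$ gates). Thus $e^{iZ^{\otimes k}\pi/8}$ decomposes into $3$‑local diagonal gates and $e^{iZ^{\otimes k'}\pi/4}$ into $2$‑local ones.

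It then remains to schedule these commuting gates into few layers. For the $\pi/4$ gate the nontrivial part is the $\binom{k'}{2}$ two‑local gates, one per pair; scheduling them into layers of vertex‑disjoint pairs is a $1$‑factorization of $K_{k'}$, which uses $k'-1$ classes since $k'$ is even, and the single‑qubit phases fold into one extra layer, giving at most $k'$ layers. For the $\pi/8$ gate the dominant part is the $\binom k3$ three‑local gates, one per triple; since $3\mid k$, Baranyai's theorem decomposes the complete $3$‑uniform hypergraph on $k$ vertices into $\binom k3/(k/3)=\tfrac{(k-1)(k-2)}{2}$ parallel classes of vertex‑disjoint triples, and the $O(k)$ additional layers needed for the two‑local and single‑qubit phases are lower order, so the total is at most $k^2/2$ in the relevant regime (the case $k\le 3$ is already $3$‑local and needs no decomposition). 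I expect the main obstacle to be precisely this last scheduling step — justifying the clean bound $k^2/2$ rather than a loose $O(k^2)$ — which is exactly where the hypotheses ``$k$ a multiple of $3$'' and ``$k'$ a multiple of $2$'' are used to obtain exact resolvability of the triple and pair systems (via Baranyai's theorem and the $1$‑factorization of complete graphs); everything else is a routine computation with roots of unity. One should also note that when a single layer of $C$ contains several such gates on disjoint qubit sets, their decompositions run in parallel, so the layer count is per‑gate as stated.
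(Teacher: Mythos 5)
Your proposal is correct and follows essentially the same route as the paper: expand the diagonal phase of $e^{i\theta Z^{\otimes m}}$ (your elementary-symmetric-polynomial expansion is the same computation as the paper's expansion of $(I-2\ketbra{1})^{\otimes m}$), observe that weight-$\geq 4$ (resp.\ $\geq 3$) terms contribute phases that are multiples of $2\pi$ for $\theta=\pi/8$ (resp.\ $\pi/4$), and then schedule the surviving commuting $1$-, $2$-, $3$-local gates into layers via Baranyai's theorem for the triples and an edge-coloring/$1$-factorization of the complete graph for the pairs, exactly where the divisibility hypotheses on $k$ and $k'$ enter. The only cosmetic difference is that the paper makes the final count explicit, $(k-1)(k-2)/2+k+1\leq k^2/2$ (valid for $k\geq 4$), where you argue the lower-order terms away informally.
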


\begin{proof}
    In \cite{shepherd2010binary}, it was shown that gates of the form $e^{iZ^{\otimes k}\pi/8}$ can be simplified into 3-local gates as follows. We use $|i|$ to denote the hamming weight of bitstring $i$. For any bitstring $i\in \{0,1\}^n$, we use $\ketbra{1}_i \in \{I,\ketbra{1}\}^n$ to denote a tensor product of of $\ketbra{1}$ and $I$  matrices, where there is a $\ketbra{1}$ in the location of every $1$ in $i$ and an $I$ otherwise.

\begin{align}
    \frac{\pi}{8}Z^{\otimes k} &= \frac{\pi}{8}(I-2\ketbra{1})^{\otimes k}\\
    &= \frac{\pi}{8}\sum_{i \in \{0,1\}^k}  2^{|i|}\ketbra{1}_{i}\\
    &\cong \frac{\pi}{8}\sum_{i \in \{0,1\}^k:|i| \leq 3}2^{|i|} \ketbra{1}_{i}\\
    &= \frac{\pi}{8}\sum_{i \in \{0,1\}^k:|i| \leq 3} 2^{|i|} \left[\frac{I-Z}{2}\right]_{i}
\end{align}
where we have used the fact that rotation angles which are multiples of $2\pi$ are trivial.
Note that the expansion of the final equation involves only weight-1,2,3 $Z$ strings. Every possible weight-1 $Z$ string can be implemented in 1 layer of 1-local diagonal gates. 
Every possible weight-2 $Z$ string can be implemented in $k$ layers of $2$-local diagonal gates by Vizing's theorem \cite{Vizing} (which states that a complete graph can be edge-colored with at most $k$ colors). Every possible weight-3 $Z$ string can be implemented in ${k-1 \choose 2}$ layers of $3$-local gates when $k$ is a multiple of 3 by Baranyai's theorem \cite{BARANYAI1979276} (which states that every complete hypergraph, where each hyperedge is of size $r$, can be hyperedge-colored with ${k-1 \choose r-1}$ colors). Thus, the depth required to implement a gate of this form is $\leq (k-1)(k-2)/2 + k + 1 \leq k^2/2$. Similar analysis yields the fact that $e^{ik'\pi/4}$ gates can be decomposed into ${k'-1 \choose 1} + 1 = k'$ layers of $2$-local gates when $k'$ is a multiple of $2$. For more details, refer to \cite{shepherd2010binary} \footnote{See section ``Equivalence modulo $\theta$"}.
\end{proof}

\end{document}